\providecommand{\printnomenclature}{\printglossary}
\providecommand{\makenomenclature}{\makeglossary}
\providecommand{\tabularnewline}{\\}
\providecommand{\algorithmname}{Algorithm}
\numberwithin{equation}{section}
\numberwithin{figure}{section}
\theoremstyle{plain}
\newtheorem{thm}{\protect\theoremname}
\theoremstyle{definition}
\newtheorem{defn}[thm]{\protect\definitionname}
\theoremstyle{definition}
\newtheorem{problem}[thm]{\protect\problemname}
\theoremstyle{remark}
\newtheorem{rem}[thm]{\protect\remarkname}
\theoremstyle{plain}
\newtheorem{lem}[thm]{\protect\lemmaname}
\theoremstyle{definition}
\newtheorem{example}[thm]{\protect\examplename}
\theoremstyle{plain}
\newtheorem{prop}[thm]{\protect\propositionname}
\newtheorem{assumption}[thm]{Assumption}
\newtheorem{note}[thm]{Note}
\newcommand{\blfootnote}[1]{%
  \begingroup
  \renewcommand\thefootnote{}\footnote{#1}%
  \addtocounter{footnote}{-1}%
  \endgroup
}
\providecommand{\definitionname}{Definition}
\providecommand{\examplename}{Example}
\providecommand{\lemmaname}{Lemma}
\providecommand{\problemname}{Problem}
\providecommand{\propositionname}{Proposition}
\providecommand{\remarkname}{Remark}
\providecommand{\theoremname}{Theorem}
\begin{document}
\title{A time-optimal feedback control for a particular case of the game
of two cars }
\author{Aditya Chaudhari and Debraj Chakraborty}
\maketitle
\begin{abstract}
\noindent In this paper, a computationally efficient time-optimal
feedback solution to the game of two cars, for the case where the
pursuer is faster and more agile than the evader, is presented. The
concept of continuous subsets of the reachable set is introduced to
characterize the time-optimal pursuit-evasion game under feedback
strategies. Using these subsets it is shown that, if initially the
pursuer is distant enough from the evader, then the feedback saddle
point strategies for both the pursuer and the evader are coincident
with one of the common tangents from the minimum radius turning circles
of the pursuer to the minimum radius turning circles of the evader.
Using geometry, four feasible tangents are identified and the feedback
$\min-\max$ strategy for the pursuer and the $\max-\min$ strategy
for the evader are derived by solving a $2\times2$ matrix game at
each instant. Insignificant computational effort is involved in evaluating
the pursuer and evader inputs using the proposed feedback control
law and hence it is suitable for real-time implementation. 
\end{abstract}

\begin{IEEEkeywords}
Dubins vehicle, game of two cars, time-optimal feedback policy, reachable
set
\end{IEEEkeywords}

\blfootnote{The authors are with the Department of Electrical Engineering,
Indian Institute of Technology Bombay, India. \tt{Email: {adityac,dc}@ee.iitb.ac.in}

} \vspace{-0.5cm}

\begin{spacing}{0.2}
\noindent \settowidth{\nomlabelwidth}{$T_{c}({\bf \gamma_{p}}({\bf x}),{\bf \gamma_{e}}({\bf x}))$}
\printnomenclature{}\nomenclature[01]{${\bf u_{p}}:={\bf \gamma_{p}}({\bf x})$}{Feedback input policy of pursuer}\nomenclature[02]{${\bf u_{e}}:={\bf \gamma_{e}}({\bf x})$}{Feedback input policy of evader}\nomenclature[03]{$T_{c}({\bf \gamma_{p}}({\bf x}),{\bf \gamma_{e}}({\bf x}))$}{Time to capture under inputs ${\bf \gamma_{p}}({\bf x})$, ${\bf \gamma_{e}}({\bf x})$}\nomenclature[04]{$T^{*}$}{Min-max time to capture}\nomenclature[05]{$ C_{p}(\bar{t})$}{Clockwise pursuer circle}\nomenclature[06]{$ A_{p}(\bar{t})$}{Anti-clockwise pursuer circle}\nomenclature[07]{$R_{e}({\bf e_{0}},\bar{t})$}{Reachable set of the evader}\nomenclature[07.1]{$R_{p}({\bf p_{0}},\bar{t})$}{Reachable set of the pursuer}\nomenclature[08]{$\partial R_{p}({\bf p_{0}},\bar{t})$}{External boundary of pursuer's reachable set}\nomenclature[09]{$\partial R_{e}({\bf e_{0}},\bar{t})$}{External boundary of evader's reachable set}\nomenclature[10]{$R_{p}^{l}({\bf p_{0}},\bar{t})$}{Left reachable set of the pursuer}\nomenclature[101]{$R_{p}^{l_t}({\bf p_{0}},\bar{t})$}{Truncated left reachable set of the pursuer}\nomenclature[11]{$R_{p}^{r}({\bf p_{0}},\bar{t})$}{Right reachable set of the pursuer}\nomenclature[111]{$R_{p}^{r_t}({\bf p_{0}},\bar{t})$}{Truncated right reachable set of the pursuer}\nomenclature[12]{$\partial R_{p}^{r}({\bf p_{0}},\bar{t})$}{Boundary of pursuer's right reachable set}\nomenclature[13]{$\partial R_{p}^{l}({\bf p_{0}},\bar{t})$}{Boundary of pursuer's left reachable set}\nomenclature[16]{$R_{e}^{-}({\bf e_{0}},R_{p}^{s},\bar{t})$}{Safe region of evader\\}\nomenclature[17]{$R_{p}^{c}({\bf p_{0}},\bar{t})$}{Continuous subset of pursuer's reachable set\\\\}\nomenclature[18]{$ R_{p}^{CR}({\bf p_{0}},\bar{t}) $}{Central reachable set of the pursuer\\\\}\nomenclature[19]{$ R_{e}^{c-}({\bf e_{0}},R_{p}^{c},\bar{t}) $}{Continuous safe region  of a CSE $R_{e}^{c}({\bf e_{0}},\bar{t})$  w.r.t. a CSP $R_{p}^{c}({\bf p_{0}},\bar{t})$\\\\}\nomenclature[20]{$\mathcal{L}_{e}({\bf e_{0}},T)$}{LCSE\\}\nomenclature[21]{$\mathcal{C}_{p}({\bf p_{0}},T)$}{CCSP\\}\nomenclature[22]{$ B_{p}({\bf p_{0}},{\bf e_{0}},\bar{t}) $}{Blocking set of the pursuer\\\\}\nomenclature[23]{$ \mathcal{A}_{p}({\bf p_{0}},T_{a})$}{Active set of the pursuer\\\\} 
\end{spacing}

\section{Introduction}

Pursuit evasion is a differential game between two antagonistic agents
called the pursuer and the evader. The pursuer aims to capture the
evader in minimum time whereas evader aims to avoid capture for as
long as possible. The problem of time optimal pursuit evasion for
two Dubins vehicles or the game of two cars was first introduced by
Issacs in \cite{rufus}. A Dubins vehicle consists of a point moving
in a plane with a given maximum forward velocity and a minimum turning
radius. In \cite{rufus} the pursuer is considered superior to the
evader and the regions of capture are characterized by backward integration
of Issacs equation for the game in reduced state-space. This results
in the so called retrogressive path equations. The optimal control
law synthesis involves solving these non-linear algebraic equations
numerically, thereby necessitating significant computation if such
laws are to be implemented as instantaneous feedback. In this paper,
we propose an alternative and novel geometric technique for efficiently
computing the time optimal feedback control for this game.

Using the Issacs equation, capture regions have been characterized
for different variations of the game of two cars. The problem has
been studied in detail in \cite{Merz1972} for the case where the
evader and the pursuer have equal speed and the pursuer is more agile
than the evader. Also, various regions from which capture is possible
are characterized. An asymmetrical version of the game of two cars
is discussed in \cite{Exar-2014}. In \cite{Bera2017}, the retrogressive
path equations are derived for all possible cases i.e. the pursuer
being superior than the evader, the pursuer having angular velocity
greater than the evader, and the pursuer having linear velocity greater
than the evader. The homicidal chauffeur problem is another variation
of the game of two cars in which the evader can turn instantaneously
\cite{rufus}. Variations of the homicidal chauffeur problem have
been studied in \cite{Exarchos2015}, and the regions of capture have
been characterized using the Issacs equation. In \cite{ruiz2016},
the retrogressive path equations have been used to characterize switching
surfaces in terms of state variables, for a variation of the homicidal
chauffeur game described in \cite{ruiz2013}. Such a representation
makes it possible to implement the optimal control laws as feedback.\textcolor{blue}{{}
}However, to the best of our knowledge, no such time-optimal computationally
efficient \textit{feedback} law exists for the game of two cars. 

Another approach taken to solve pursuit-eavsion games is that of reachable
sets. At a given time, the reachable set consists of points which
can be reached by an agent using admissible inputs. We use the concept
of reachable set extensively in this paper. Reachable sets have been
used for analyzing differential games since \cite{cockayne1967pursuit,mizukami1977geometrical}.
The reachable sets of the Dubins vehicle are characterized in \cite{cockayne1975plane,bui1994accessibility}.
In order to characterize the reachable sets analytically, it is necessary
to find the time-optimal trajectories of the agent. The optimal paths
for a single Dubins vehicle have been studied extensively since \cite{dubins1957curves}.
Optimal control theory is used in conjunction with geometric techniques
in \cite{sussmann1991shortest,boissonnat1994shortest,soueres1996shortest}
to characterize the curves followed by the Dubins vehicle to reach
from a given initial configuration to a final configuration in minimum
time. Time optimal feedback laws have been derived for path tracking
by Dubins vehicle in \cite{soueres2001}. Recently reachable sets
have been used to derive feedback strategies under varying flow fields
for various types of pursuers and evaders \cite{sun2015pursuit,tsiotras2017reachabilitysets}.
For a specific type of agent it was shown that the containment of
evaders reachable set in the reachable set of the pursuer characterizes
capture. However, we show that such a characterization does not hold
when the agents are Dubins vehicles. Instead we provide a novel characterization
in terms of continuous subsets of reachable set, which we introduce
in this paper.

The feedback solution to the game of two cars, studied in the paper,
is necessary for the applications which involves capture of an uncertain
evader by a pursuer modeled as Dubins vehicles. Examples include tail
chase \cite{merz77}, and aerial dog fighting \cite{rufus}, anti-aircraft
missiles \cite{missile_dubins_1993}, etc. In each of these applications,
the feedback algorithm for the pursuer or the evader needs to be implemented
in real time. Such implementations on real systems by solving non-linear
retrogressive path equations is often computationally infeasible \cite{merz77}.
On the other hand the method presented here is easily implementable
with minimal real time computation for each of these applications.
Synthesis of feedback laws for pursuit evasion of multiple Dubins
vehicle has been attempted in \cite{tomlin2005}. However, the feedback
laws are not time-optimal. Thus, the theory developed in this paper,
has possible applications for a wide range of time-optimal multi-player
games \cite{tomlin2005} involving Dubins vehicle.

In this paper we consider the problem of the game of two cars when
the pursuer is superior than the evader. We do not impose any restriction
on the final orientation of the pursuer. In this case, capture by
the pursuer is always guaranteed for all possible configurations of
pursuer and evader \cite{cockayne1967pursuit}. Our aim is to derive
a feedback law which involves only evaluation of and comparison with
closed form algebraic expressions and can be computed in real time
along the trajectories, effectively providing an implementable feedback
solution. We derive such an law by first characterizing the nature
of saddle point trajectories. This characterization is done by analyzing
the relation between feedback $\min-\max$ strategies and reachable
sets. We establish a necessary and sufficient condition for saddle
point capture in terms of some special subsets of the reachable set,
which we call continuous subsets of the reachable set. Using these
continuous subsets we characterize the time and point of capture under
feedback strategies. From this characterization we conclude that,
if the distance between the pursuer and the evader is large compared
to their turning radius, then the saddle point strategies consist
of a circle followed by straight line. Further, using Pontryagin's
minimum principle, we show that the trajectories are common tangents
to the minimum turning radius circles of the pursuer and the evader.
Since both the vehicles are restricted to have a minimum turning radius,
the pursuer and the evader each will have one clockwise minimum radius
turning circle and one anti-clockwise minimum radius turning circle
at each time instant. This gives us sixteen common tangents between
the pursuer and evader circle pairs. Using geometrical arguments,
we are able to reduce the number of feasible tangents to four, one
each for every pair of circles between the evader and the pursuer.
A $2\times2$ matrix game is formulated and the min-max solution of
the matrix game gives the strategy for the pursuer while the max-min
solution gives the strategy for the evader. The matrix game is solved
at each instant of time to obtain the feedback strategies for the
differential game. In summary, our contributions are as follows:
\begin{enumerate}
\item We introduce the concept of continuous subsets of reachable sets in
order to completely characterize capture under feedback trajectories. 
\item If the distance between pursuer and evader is greater than a certain
distance, then we show that the saddle point pursuit-evasion trajectories
are coincident with a common tangent from minimum radius turning circles
of pursuer to minimum radius turning circles of the evader. 
\item Using these novel results, we design a computationally efficient feedback
law which can be implemented in real time.
\end{enumerate}
The paper is structured as follows. The problem statement and preliminaries
are described in Section \ref{sec:problem_formulation} and Section
\ref{sec:Preliminaries} respectively. A couple of counter examples
that show that only containment by reachable sets does not characterize
capture for the game of two cars, is presented in Section \ref{sec:analysis_reachability}.
The novel continuous subsets which characterize capture under feedback
strategies, are introduced in Section \ref{sec:continuous_subsets}.
The main theorems and results have been presented in Section \ref{sec:Main-Results}.
The subsequent sections (Section \ref{sec:continuity_proofs}-\ref{sec:impt_theorem_3})
contain proofs of results presented in Section \ref{sec:Main-Results}. 

\section{\label{sec:problem_formulation}Problem Formulation }

Consider a pursuer $P$ and an evader $E$ following the equations:

\begin{eqnarray}
 & \dot{x}_{i}(t)=v_{i}(t)\cos\theta_{i}(t);\label{eq:pursuer_evader_dyn}\\
 & \dot{y}_{i}(t)=v_{i}(t)\sin\theta_{i}(t);\\
 & \dot{\theta}_{i}(t)=v_{i}(t)w_{i}(t)
\end{eqnarray}
where $i\in\{p,e\}$. The subscript $p$ corresponds to the pursuer
while $e$ corresponds to the evader. The pursuer (evader) can control
its velocity $v_{i}(t)$ in direction $\theta_{i}(t)$ and the angular
velocity $w_{i}(t)$. We denote by $\mathcal{C}(\mathbb{R^{+}},\mathbb{R}^{n})$
the set of continuous functions from positive real line $\mathbb{R}^{+}$
to $\mathbb{R}^{n}$. Let ${\bf z_{i}}(t)=[x_{i}(t)\;y_{i}(t)]^{\top}\in\mathbb{R}^{2},\mathbf{z_{i}}\in\mathcal{C}(\mathbb{R^{+}},\mathbb{R}^{2})$
denote the position of the pursuer (evader) in the $x-y$ plane at
time $t$. Also, let $\theta_{i}(t)$ be the orientation of the pursuer
(evader) in the $x-y$ plane, measured in anti-clockwise direction
with respect to the $x-\text{axis}$ at time $t$. The complete state
vector of the pursuer at time $t$ is given by ${\bf p}(t)=[x_{p}(t)\,y_{p}(t)\,\theta_{p}(t)]\in\mathbb{R}^{3},\mathbf{p}\in\mathcal{C}(\mathbb{R^{+}},\mathbb{R}^{3})$
while that of the evader is given by ${\bf e}(t)=[x_{e}(t)\,y_{e}(t)\,\theta_{e}(t)]^{\top}\in\mathbb{R}^{3},\mathbf{e}\in\mathcal{C}(\mathbb{R^{+}},\mathbb{R}^{3})$.
We also denote the projection of the pursuer's (evader's) trajectory
to the $x-y$ plane corresponding to the trajectory ${\bf p}$ (${\bf e}$)
by ${\bf z}_{i}={\bf p}_{|\mathbb{R}^{2}}\ ({\bf e}_{|\mathbb{R}^{2}})$.
Let the initial state of the pursuer be denoted by ${\bf p}(0)={\bf p_{0}}=[x_{p_{0}}\,y_{p_{0}}\,\theta_{p_{0}}]^{\top}$
and that of the evader by ${\bf e}(0)={\bf e_{0}}=[x_{e_{0}}\,y_{e_{0}}\,\theta_{e_{0}}]^{\top}$.
Also, let $d_{pe}(t)=||{\bf z_{p}}(t)-{\bf z_{e}}(t)||_{2}$ be the
distance between pursuer and evader at time instant $t$ and $d_{pe}(0):=d_{pe}^{0}$.
We denote the input of the pursuer (evader) at time $t$ by ${\bf u_{i}}(t)=[v_{i}(t)\;w_{i}(t)]^{\top}\in\mathbb{R}^{2}$
where $i\in\{p,e\}$. Also, $v_{i}(t)\in V_{i}$ where $V_{i}=\{v_{i}(t)\in\mathbb{R}:0\leq v_{i}(t)\leq v_{i_{m}}\}$
and $w_{i}(t)\in W_{i}$ where $W_{i}=\{w_{i}(t)\in\mathbb{R}:|w_{i}(t)|\leq w_{i_{m}}\}$
for $i\in\{p,e\}$. These restrictions limit the maximum forward velocity
with which the pursuer and evader can move and also limits the rate
at which the vehicles can change direction. We define the set of feasible
inputs for the pursuer and the evader as ${\bf U_{i}}:=\{{\bf u}_{i}(t):v_{i}(t)\in V_{i}\,\text{and}\,w_{i}(t)\in W_{i}\}$
where $i\in\{p,e\}$. If the input of the pursuer (evader) ${\bf u}_{i}(t)\in{\bf U_{i}}$
for all $t$ then we write ${\bf u}_{i}\in\mathcal{U}_{i}$. If the
pursuer (evader) sets $w_{i}(t)=w_{i_{m}}$, then it moves along a
circle of radius $r_{i}=1/w_{i_{m}}$ in anti-clockwise direction
while if it applies an input of $w_{i}(t)=-w_{i_{m}}$ then it moves
in clockwise direction along the circle of radius $r_{i}$. 

In order to guarantee capture of evader by the pursuer, we impose
the following restriction on the evader's input. 

\begin{assumption}\label{ass:sup_purs}Maximum velocities of the
pursuer and the evader satisfy $v_{p_{m}}>v_{e_{m}}$, while the maximum
turning rates are such that $w_{p_{m}}>w_{e_{m}}$. \end{assumption}

In this paper, the pursuer's objective is to intercept the evader
in minimum possible time, while that of the evader is to avoid interception
by the pursuer for as long as possible. The complete state of the
game is described by ${\bf x}(t)=[{\bf p}(t)^{\top}\,{\bf e}(t)^{\top}]^{\top}\in\mathbb{R}^{6},\mathbf{x}\in\mathcal{C}(\mathbb{R}^{+},\mathbb{R}^{6})$.
For capture we require that only the $x$ and $y$ coordinates of
both the pursuer and evader must match. We do not impose the restriction
that the final orientation of the pursuer and the evader must be the
same. Hence, the condition at the time of capture $T_{c}$ is
\begin{equation}
\psi({\bf x}(T_{c})):=\left[x_{p}(t)-x_{e}(t),\ \ \ y_{p}(t)-y_{e}(t)\right]^{\top}|_{t=T_{c}}=0\label{eq:end_point_constatint}
\end{equation}
Thus the time of capture $T_{c}$ at which the game terminates i.e.
the cost function in the game of two cars, is defined by $T_{c}=\inf\{t\in\mathbb{R}^{+}:\psi({\bf x}(t))=0\}$.
The pursuer tries to minimize $T_{c}$ while the evader tries to maximize
it using feedback strategies ${\bf u_{p}}:={\bf \gamma_{p}}({\bf x})\in{\bf \mathcal{U}}_{p}$
and ${\bf u_{e}}:={\bf \gamma_{e}}({\bf x})\in{\bf \mathcal{U}}_{e}$.
The time to capture is a function of feedback strategy pair $\left({\bf \gamma_{p}}({\bf x}),{\bf \gamma_{e}}({\bf x})\right)$
and we denote it by $T_{c}({\bf \gamma_{p}}({\bf x}),{\bf \gamma_{e}}({\bf x}))$.
The pursuer must guard against the worst-case strategies of the evader.
Hence, the minimum time capture problem for the pursuer is a $\min-\max$
time-optimal problem. So the pursuer's aim is to find a feedback control
strategy ${\bf u_{p}}={\bf \gamma_{p}^{*}}({\bf x})$ which solves
${\bf \gamma_{p}^{*}}({\bf x})=\underset{{\bf \gamma_{p}}}{\text{argmin}}\,\underset{{\bf \gamma_{e}}}{\text{max}}\,T_{c}({\bf \gamma_{p}}({\bf x}),{\bf \gamma_{e}}({\bf x}))$.
Similarly, the evader must guard against every possible strategy of
the pursuer. Thus, the maximum time evasion problem for the evader
is to find the $\max-\min$ strategy of the evader. Thus, the evader
aims to find a feedback control strategy ${\bf u_{e}}(t)={\bf \gamma_{e}^{*}}({\bf x})$
which solves ${\bf \gamma_{e}^{*}}({\bf x})=\underset{{\bf \gamma_{e}}}{\text{argmax}}\,\underset{{\bf \gamma_{p}}}{\text{min}}\,T_{c}({\bf \gamma_{p}}({\bf x}),{\bf \gamma_{e}}({\bf x}))$.
Since the objectives of the pursuer and the evader are conflicting
the optimality is characterized in terms of saddle-point strategies
\cite{rufus,bacsar1998dynamic,bryson1975applied}. 
\begin{defn}
\label{def:saddle_point}A \textit{feedback strategy pair} $({\bf \gamma_{p}^{*}},{\bf \gamma_{e}^{*}})$
is a saddle-point equilibrium if
\begin{equation}
T_{c}({\bf \gamma_{p}^{*}},{\bf \gamma_{e}})\leq T_{c}({\bf \gamma_{p}^{*}},{\bf \gamma_{e}^{*}})\leq T_{c}({\bf \gamma_{p}},{\bf \gamma_{e}^{*}})\label{eq:saddle_point}
\end{equation}
$\forall\ \gamma_{e}({\bf x})\in{\bf U_{e}},\gamma_{p}({\bf x})\in{\bf U_{p}}$
and the value of the game, if it exists, is $T^{*}:=T_{c}({\bf \gamma_{p}^{*}},{\bf \gamma_{e}^{*}})$.
\end{defn}
In the game of two cars considered above, under Assumption \ref{ass:sup_purs},
the following result holds.
\begin{thm}
\label{thm:guaranteedcapture}\cite{cockayne1967pursuit} If Assumption
\ref{ass:sup_purs} holds, there exists a pursuer input ${\bf u_{p}}:={\bf \gamma}_{p}({\bf x})$
such that for all ${\bf u_{e}}:={\bf \gamma}_{e}({\bf x})$, capture
is guaranteed i.e. $\psi({\bf x}(T_{c}))=0$ for some $T_{c}({\bf \gamma_{p}}({\bf x}),{\bf \gamma_{e}}({\bf x}))<\infty$.
\end{thm}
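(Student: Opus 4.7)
The plan is to exhibit a concrete feedback strategy for the pursuer and show that, under Assumption~\ref{ass:sup_purs}, the distance $d_{pe}(t)$ is driven to zero in finite time regardless of the evader's strategy. A natural candidate is pure pursuit at maximum speed: set $v_p(t)\equiv v_{p_m}$ and choose $w_p(t)\in W_p$ so as to steer $\theta_p(t)$ towards the line-of-sight angle
\[
\alpha(t):=\operatorname{atan2}\bigl(y_e(t)-y_p(t),\,x_e(t)-x_p(t)\bigr),
\]
namely $w_p(t)=w_{p_m}\,\operatorname{sign}\bigl(\alpha(t)-\theta_p(t)\bigr)$ while $|\alpha(t)-\theta_p(t)|>0$, and $w_p$ chosen to keep $\theta_p=\alpha$ afterwards whenever feasible.

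I would split the argument into two phases. In the \emph{alignment phase}, I show that the heading error $\varphi(t):=\alpha(t)-\theta_p(t)$ can be driven into a small neighbourhood of zero in finite time. The derivative $\dot\alpha$ is bounded in terms of $v_{e_m}$, $v_{p_m}$ and the separation $d_{pe}$, while $\dot\theta_p=v_{p_m}w_{p_m}$ under the above control. Since $w_{p_m}>w_{e_m}$ and $v_{p_m}>v_{e_m}$, one shows that, provided $d_{pe}(t)$ stays bounded below by some $\delta>0$, the turning-rate budget of the pursuer strictly exceeds the rate at which the line-of-sight can rotate, so $|\varphi(t)|$ decreases linearly at a rate bounded away from zero. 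Care must be taken because the line-of-sight rotation rate blows up as $d_{pe}\to 0$, but in that regime capture is essentially already achieved; one handles this by treating $d_{pe}\le\delta$ as the terminal set and checking capture directly there by a small additional geometric argument.

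In the \emph{pursuit phase}, once $|\varphi(t)|$ is small, I compute
\[
\dot d_{pe}(t)=-v_p(t)\cos\varphi(t)+v_e(t)\cos\bigl(\theta_e(t)-\alpha(t)\bigr)
\le -v_{p_m}\cos\varphi(t)+v_{e_m}.
\]
Choosing the neighbourhood small enough that $v_{p_m}\cos\varphi>\tfrac12(v_{p_m}+v_{e_m})$, this yields $\dot d_{pe}\le -\tfrac12(v_{p_m}-v_{e_m})<0$, a strictly negative upper bound independent of the evader's input. Integrating gives capture within time at most $T_{\mathrm{align}}+2d_{pe}(0)/(v_{p_m}-v_{e_m})$, where $T_{\mathrm{align}}$ is the bound obtained in the alignment phase, proving the existence of a finite $T_c$.

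The main obstacle will be the interplay between the two phases: during alignment the evader can try to whirl the line-of-sight faster by circling close to the pursuer, and the singularity of $\dot\alpha$ at small $d_{pe}$ has to be ruled out cleanly. I would handle this by first restricting attention to the set $\{d_{pe}\ge\delta\}$, showing alignment is achieved there in time depending only on the initial configuration, and then invoking a separate short geometric argument (using $r_p<r_e$) to absorb the complementary case into a bounded residual time. Since the theorem is proved in~\cite{cockayne1967pursuit}, a formal write-up could simply cite that source; the sketch above indicates the route I would take to reprove it self-containedly.
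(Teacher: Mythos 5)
First, note that the paper does not actually prove this statement: Theorem~\ref{thm:guaranteedcapture} is imported verbatim from \cite{cockayne1967pursuit} and carries only the citation, so there is no in-paper argument to compare yours against. Your two-phase pure-pursuit outline (align the heading with the line of sight, then close at rate at least $v_{p_m}\cos\varphi - v_{e_m}$) is the standard route and is essentially sound for the regime $d_{pe}(t)\geq\delta$ with $\delta$ on the order of $r_p$: there $|\dot\alpha|\leq (v_{p_m}+v_{e_m})/d_{pe} < v_{p_m}w_{p_m}=\dot\theta_p$, so alignment is achieved in bounded time and the distance then decreases at a rate bounded away from zero. This correctly shows that the pursuer can drive $d_{pe}$ below any fixed threshold of order $r_p$ in finite time, against every evader strategy.

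The genuine gap is the endgame you defer to ``a small additional geometric argument.'' The termination condition here is exact point coincidence, $\psi(\mathbf{x}(T_c))=0$, not $\epsilon$-capture, and once $d_{pe}$ is small the line-of-sight rate $v_{e_m}/d_{pe}$ exceeds the pursuer's maximum turn rate $v_{p_m}w_{p_m}$, so the alignment lemma on which both of your phases rest fails exactly where capture must be completed. This singular regime is where the substance of Cockayne's theorem lives (it is why the agility condition $w_{p_m}>w_{e_m}$, and not merely $v_{p_m}>v_{e_m}$, is needed), and asserting that $r_p<r_e$ ``absorbs'' it is not a proof. A clean way to close the gap is to abandon pure pursuit in the endgame: have the pursuer reach the evader's configuration $(x_e(t_0),y_e(t_0),\theta_e(t_0))$ at some finite time $t_1$ via a Dubins path, and thereafter retrace the evader's traversed path, which has curvature at most $w_{e_m}<w_{p_m}$ and is therefore feasible for the pursuer at full speed; the arc-length deficit along that path then shrinks at rate at least $v_{p_m}-v_{e_m}>0$, giving exact capture in finite time. (This trail-following strategy uses the evader's past trajectory rather than only the instantaneous state $\mathbf{x}(t)$, a point worth flagging since the theorem is phrased in terms of a feedback $\gamma_p(\mathbf{x})$; pure pursuit is a true state feedback but, as noted, does not by itself finish the argument.) As you say, for the purposes of this paper one may simply cite \cite{cockayne1967pursuit}, which is what the authors do.
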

Since the capture is guaranteed and the Hamiltonian (considered later
in Section \ref{eq:hamiltonian}) is separable in pursuer's input
and evader's input, the existence of saddle point strategies $\left({\bf \gamma_{p}^{*}},{\bf \gamma_{e}^{*}}\right)$
follows \cite{bryson1975applied}. 

\vspace*{0.25cm}

\noindent \textbf{Main problem statement}

\noindent In this paper, we aim to develop time-optimal feedback control
strategies which can be implemented in real time often on a relatively
less powerful onboard computer, and hence requiring tractable computation.
This is possible if we are able to determine the pursuer/evader input
based on a fixed and small number of algebraic evaluations. This problem
is stated below:
\begin{problem}
\label{prob:implementable_feedback_law}Given a time instant $t$
and pursuer and evader states, ${\bf x}(t)=[{\bf p}(t)^{\top}\,{\bf e}(t)^{\top}]^{\top}$,
find an feedback control algorithm $[{\bf u_{p}}(t),{\bf u_{e}}(t)]=F({\bf x}(t))$
such that $F({\bf x}(t))$ involves fixed and small number of algebraic
evaluations and comparisons.
\end{problem}
Note that pursuer and evader independently use this algorithm to evaluate
their respective optimal policies. The solution to Problem \ref{prob:implementable_feedback_law}
is given in Section \ref{sec:Main-Results}. 

\vspace*{0.25cm}

\noindent \textbf{Subsidiary Problem Statement}

\noindent In order to solve Problem \ref{prob:implementable_feedback_law},
we need to characterize the geometry of the optimal capture/evasion
trajectories. We accomplish this through the solution of the following
subsidiary problem, which aims to describe the optimal trajectories
in terms of some novel subsets of the pursuer's/evader's reachable
sets. Recall the definition of reachable sets:
\begin{defn}
\cite{cockayne1975plane} The reachable set of pursuer (evader), denoted
by $R_{p}({\bf p_{0}},\bar{t})\subset\mathbb{R}^{2}$ ($R_{e}({\bf e_{0}},\bar{t})\subset\mathbb{R}^{2}$),
at time $\bar{t}$ from initial state ${\bf p}(0)={\bf p_{0}}$ (${\bf e}(0)={\bf e_{0}}$),
is the set of all points that can be reached in time $t\leq\bar{t}$
by applying inputs ${\bf u_{p}}\in\mathcal{U}_{p}$ (${\bf u_{e}}\in\mathcal{U}_{e}$)
. 
\begin{eqnarray*}
R_{p}({\bf p_{0}},\bar{t}) & = & \{{\bf z}\in\mathbb{R}^{2}:\exists\,\,{\bf u_{p}}\in\mathcal{U}_{p}\text{ and corresponding}\\
 &  & \ \text{trajectory }{\bf p}\in\mathcal{C}(\mathbb{R}^{+},\mathbb{R}^{3})\text{ s.t. }\\
 &  & \text{\ensuremath{{\bf p}_{|\mathbb{R}^{2}}}(t)={\bf z} for some}\ t\leq\bar{t}\ \text{ and }{\bf p}(0)={\bf p_{0}}\}
\end{eqnarray*}
\end{defn}
The following subsidiary problem is solved in three parts in Theorem
\ref{thm:capture_strategy_pursuer_evader}, Theorem \ref{thm:cs-type}
and Theorem \ref{thm:one_valid_tangent} in Section \ref{sec:Main-Results}.
\begin{problem}
\label{prob:reachable_characterization}Characterize the feedback
time optimal pursuit evasion trajectories ${\bf u_{p}}={\bf \gamma}_{p}({\bf x})$,
${\bf u_{e}}={\bf \gamma}_{e}({\bf x})$ and the capture point ${\bf z}\in\mathbb{R}^{2}$
in terms of subsets of pursuer's and evader's reachable set $R_{p}({\bf p_{0}},T)$
and $R_{e}({\bf e_{0}},T)$ at capture time $T$. 
\end{problem}

\section{\label{sec:Preliminaries}Preliminaries}

This section contains some preliminary definitions and notations.

\subsection{\label{subsec:Minimum-radius-turning}Minimum radius turning circles
and common tangents}

First we define the minimum radius turning circles of the pursuer
and evader and also the common tangents which will be used to derive
feedback laws.
\begin{defn}
\textit{Clockwise pursuer-circle} (\textit{evader-circle}), $C_{i}(\bar{t})$,
at time $\bar{t}$ is the clockwise circle of radius $r_{i}=\frac{1}{w_{i_{m}}}$
and center $(x_{c}(\bar{t})=x(\bar{t})+\sin(\theta(\bar{t}))/w_{i_{m}},y_{c}(\bar{t})=y(\bar{t})-\cos(\theta(\bar{t}))/w_{i_{m}})$
that the pursuer (evader) follows when $w_{i}(t)=-w_{i_{m}}$ $\forall\ t\geq\bar{t}$,
where $i\in\{p,e\}$.
\end{defn}
\begin{defn}
\textit{Anti-clockwise pursuer-circle} (\textit{evader-circle}), $A_{i}(\bar{t})$,
at time $\bar{t}$ is the anti-clockwise circle of radius $r_{i}=\frac{1}{w_{i_{m}}}$
and center $(x_{c}(\bar{t})=x(\bar{t})-\sin(\theta(\bar{t}))/w_{i_{m}}$,
$y_{c}(\bar{t})=y(\bar{t})+\cos(\theta(\bar{t}))/w_{i_{m}})$ that
the pursuer (evader) follows when $w_{i}(t)=+w_{i_{m}}$ $\forall\ t\geq\bar{t}$,
where $i\in\{p,e\}$. 
\end{defn}
\begin{rem}
These circles for a particular position of the pursuer (evader) are
shown in Figure \ref{fig:ApCe} and will be called the pursuer-circles
(evader-circles). These circles are also called the minimum radius
turning circles of Dubins vehicle. Note that, the expression of $\dot{\theta}$
in (\ref{eq:pursuer_evader_dyn}) has the velocity term multiplied.
This results in turning radius being independent of $v_{i}(t)$ and
depends only on $w_{i}(t)$. Thus for $w_{i}(t)=w$ for $t\in[t_{1},t_{2}]$
the vehicle will move along an arc of a circle of radius $r_{i}=1/w$
with velocity $v_{i}(t)$ for the duration $[t_{1},t_{2}]$. If the
input $w_{i}(t)=0$ and $v_{i}(t)\in(0,v_{i_{m}}]$ for $i\in\{p,e\}$
in the time interval $[t_{1},t_{2}]$ then the pursuer (evader) moves
in a straight line during this time interval. 

\begin{figure}
\begin{centering}
\includegraphics[scale=0.9]{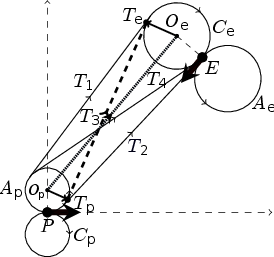}
\par\end{centering}
\caption{\label{fig:ApCe} Common Tangents $\{A_{p}(t),C_{e}(t)\}$}

\end{figure}
\end{rem}
In order to design feedback laws we make use of the common tangents
from the pursuer-circles to the evader-circles. Note that the pursuer-circles
(evader-circles) at time $t$ depend only on the position and orientation
of the pursuer (evader) at time instant $t$. Let $OC(t):=\{A_{p}(t),C_{p}(t),A_{e}(t),C_{e}(t)\}$
denote the set of pursuer-circles and evader-circles at time $t$.
A $PE$-pair is a pair of circles with one circle belonging to the
pursuer-circles and the other circle belonging to the evader-circles.
Thus, in total we have four $PE$-pairs. The set of $PE$-pairs at
time $t$ is denoted by $PE(t)$.
\begin{eqnarray*}
PE(t) & := & \{\{C_{p}(t),C_{e}(t)\},\,\{C_{p}(t),A_{e}(t)\},\\
 &  & \,\{A_{p}(t),C_{e}(t)\},\,\{A_{p}(t),A_{e}(t)\}\}
\end{eqnarray*}
Between the two circles belonging to a $PE$-pair, whose centers are
at a minimum distance of $r_{p}+r_{e}$ away from each other, there
will be four tangents. These tangents have been shown in between the
$PE$-pair $\{A_{p}(t),C_{e}(t)\}$ in Figure \ref{fig:ApCe}. We
assign direction to the tangents from the pursuer to the evader and
we call them directed common tangents. 
\begin{defn}
\label{def:Valid_tangent}Valid common tangent for a $PE-\text{pair}$
is a directed common tangent whose orientation matches with the direction
of both the pursuer circle and the evader circle in the $PE-\text{pair}$.
\end{defn}
In Figure \ref{fig:ApCe} only the tangent $T_{3}$ (shown by dashed
line) is a valid tangent for pair $\{A_{p}(t),C_{e}(t)\}$.
\begin{defn}
If a pursuer's (evader's) trajectory up to some time $t_{f}>0$ is
such that it traverses one of the pursuer-circles (evader-circles)
in time interval $[0,t']$ with $t'\leq\min(t_{f},2\pi r_{p}/v_{p_{m}})$,
and then traverses one of the tangents to that pursuer-circle (evader-circle)
in time interval $[t',t_{f}]$, then such a \textit{trajectory is
of the type $CS$} (circle and straight line) up to time $t_{f}$.
\end{defn}
\begin{rem}
The condition $t'\leq2\pi r_{p}/v_{p_{m}}$ ensures that no part along
the circumference of the circle is traversed more than once. If a
trajectory of type $CS$ follows anticlockwise (left) circle and after
that follows a straight line path then we say the trajectory belongs
to the type $LS$. Similarly, if it follows clockwise (right) circle
and after that follows a straight line path then we say the trajectory
is of the type $RS$. 
\end{rem}
\begin{defn}
If a pursuer's (evader's) trajectory up to time $t_{f}$ is such that
it traverses one of the pursuer-circles (evader-circles) say $A_{p}(0)$
($A_{e}(0)$) in time interval $[0,t']$ with $t'\leq\text{min}(t_{f},2\pi r_{p}/v_{p_{m}})$,
and then traverses the other pursuer-circle (evader-circle) i.e. $C_{p}(0)$
($C_{e}(0)$) in the time interval $[t',t_{f}]$, then such a \textit{trajectory
is of the type $CC$ }(circle-circle).
\end{defn}

\subsection{\label{subsec:Reachable-Sets}Reachable sets of Dubins vehicle}

Next, we describe the reachable sets of Dubins vehicle. Reachable
set is used to characterize the solution for the game of two cars.
The reachable set for the Dubins vehicle has been studied in \cite{cockayne1975plane,bui1994accessibility}.
It is known that, the points inside the pursuer (evader) circles can
be reached in minimum time by $CC$ (circle-circle) types of curves.
The points external to the pursuer (evader) circles can be reached
in minimum time by $CS$ type of curves. The external boundary of
$R_{p}({\bf p_{0}},\bar{t})$ ($R_{e}({\bf p_{0}},\bar{t})$) is denoted
by $\partial R_{p}({\bf p_{0}},\bar{t})$ ($\partial R_{e}({\bf e_{0}},\bar{t})$).

It is known \cite{bui1994accessibility} that, if $\bar{t}\geq2\pi r_{p}/v_{p_{m}}$
($\bar{t}\geq2\pi r_{e}/v_{e_{m}}$) the points on $\partial R_{p}({\bf p_{0}},\bar{t})$
($\partial R_{e}({\bf e_{0}},\bar{t})$) at time $\bar{t}$ can be
reached only by the trajectories of the type $CS$. Thus $\partial R_{p}({\bf p_{0}},\bar{t})$
($\partial R_{e}({\bf e_{0}},\bar{t})$) at $\bar{t}\geq2\pi r_{p}/v_{p_{m}}$
($\bar{t}\geq2\pi r_{e}/v_{e_{m}}$) is comprised of two portions.
The first portion is characterized by trajectories which begin on
anti-clockwise circle and then follow a straight line. The second
portion is characterized by trajectories which begin on the clockwise
circle and then follow a straight line. 

Consider the pursuer with initial state vector ${\bf p}(0)=\left[x_{0}\ y_{0}\ \theta_{0}\right]^{\top}$.
The trajectories ${\bf p}\in\mathcal{C}(\mathbb{R}^{+},\mathbb{R}^{3})$
corresponding to the input 
\begin{eqnarray*}
w_{p}(t) & = & +w_{p_{m}}\quad t\in[0,t_{1}]\\
w_{p}(t) & = & 0\quad t\in(t_{1},\bar{t}]\\
v_{p}(t) & = & v_{p_{m}}\quad t\in[0,\bar{t}]
\end{eqnarray*}
for some $0\leq t_{1}\leq2\pi r_{p}/v_{p_{m}}$, will initially follow
the anti-clockwise circle and then travel on a tangent to the anti-clockwise
circle. The pursuer moves up to time $\bar{t}>2\pi r_{p}/v_{p_{m}}$
with speed $v_{p_{m}}$ throughout. Since $2\pi r_{p}/v_{p_{m}}$
is the time required to travel a complete circle, it will cover a
distance greater than circumference of the circle. (Note that the
switching time $t_{1}$ is less than $2\pi r_{p}/v_{p_{m}}$ so that
the no length of the circle is repeated). The trajectory is parameterized
by the switching time $t_{1}$ and can be obtained by integrating
(\ref{eq:pursuer_evader_dyn}) as
\noindent \begin{flushleft}
\begin{eqnarray}
x_{fl}(\bar{t}) & = & x_{0}+(\sin(\tilde{\theta})-\sin(\theta_{0}))/w_{p_{m}}+v_{p_{m}}\cos(\tilde{\theta})\tilde{t}\label{eq:left_reachable_set}\\
y_{fl}(\bar{t}) & = & y_{0}-(\cos(\tilde{\theta})-\cos(\theta_{0}))/w_{p_{m}}+v_{p_{m}}\sin(\tilde{\theta})\tilde{t}\nonumber 
\end{eqnarray}
where $\tilde{\theta}=\theta_{0}+v_{p_{m}}w_{p_{m}}t_{1}$ and $\tilde{t}=\bar{t}-t_{1}$.
Using these, the left reachable set of pursuer is defined as 
\begin{eqnarray*}
R_{p}^{l}({\bf p_{0}},\bar{t}) & = & \{{\bf z}=[x\ y]^{\top}\in\mathbb{R}^{2}|x=x_{fl}(t)\ \text{and}\ \\
 &  & \ y=y_{fl}(t)\ \forall\ t_{1}\leq\bar{t},t\leq\bar{t}\ \text{s.t.}t_{1}<t\}
\end{eqnarray*}
The left reachable set for the evader $R_{e}^{l}({\bf e_{0}},\bar{t})$
is defined analogously. The left reachable set is shown in Figure
\ref{fig:Left-Reachable-Set}. The boundary of left reachable set
of pursuer (evader) is denoted by $\partial R_{p}^{l}({\bf p_{0}},\bar{t})$
($\partial R_{e}^{l}({\bf e_{0}},\bar{t})$).
\par\end{flushleft}

For $\bar{t}\geq2\pi r_{p}/v_{p_{m}}$ the right reachable sets for
the pursuer $R_{p}^{r}({\bf p_{0}},\bar{t})$ and evader $R_{e}^{r}({\bf e_{0}},\bar{t})$
are defined similarly by the trajectories which first travel on the
clockwise circle and then on the tangent to the clockwise circle.
The right reachable set is shown in Figure \ref{fig:Right-Reachable-Set}.
The boundary of right reachable set of pursuer (evader) is denoted
by $\partial R_{p}^{r}({\bf p_{0}},\bar{t})$ ($\partial R_{e}^{r}({\bf e_{0}},\bar{t})$).

The left reachable set and right reachable set of the pursuer (evader)
are the subsets of the reachable set $R_{p}({\bf p_{0}},\bar{t})$
($R_{e}({\bf e_{0}},\bar{t})$). The union of $R_{p}^{l}({\bf p_{0}},\bar{t})$
and $R_{p}^{r}({\bf p_{0}},\bar{t})$ is shown in Figure \ref{fig:Complete-Reachable-Set}. 

The boundary of left reachable set is divided in two parts for $\bar{t}\geq2\pi r_{p}/v_{p_{m}}$
in next definition.
\begin{defn}
\label{def:internal_boundary}For $\bar{t}\geq2\pi r_{p}/v_{p_{m}}$,
the portion $ADE$ as shown in Figure \ref{fig:Left-Reachable-Set}
will be called the \textbf{\textit{left external boundary}} whereas
the portion $EFB$ will be called the \textbf{\textit{left internal
boundary}}. Similarly, for the right reachable set the portion $ADE$
as shown in Figure \ref{fig:Right-Reachable-Set} will be called the
\textbf{\textit{right external boundary}} whereas the portion $EFB$
will be called the \textbf{\textit{right internal boundary}} for $\bar{t}\geq2\pi r_{p}/v_{p_{m}}$. 
\end{defn}
\begin{rem}
The external boundary of the reachable set $\partial R_{p}({\bf p_{0}},\bar{t})$
($\partial R_{e}({\bf e_{0}},\bar{t})$) is the union of left external
boundary and the right external boundary for $\bar{t}\geq2\pi r_{p}/v_{p_{m}}$.
Note that the shape of the reachable sets is as shown in Figures \ref{fig:Left-Reachable-Set},
\ref{fig:Right-Reachable-Set}, \ref{fig:Complete-Reachable-Set}
only if $\bar{t}\geq2\pi r_{e}/v_{e_{m}}\geq2\pi r_{p}/v_{p_{m}}$.
\end{rem}
\begin{figure*}
\begin{minipage}[t]{0.3\textwidth}%
\begin{center}
\includegraphics[scale=0.33]{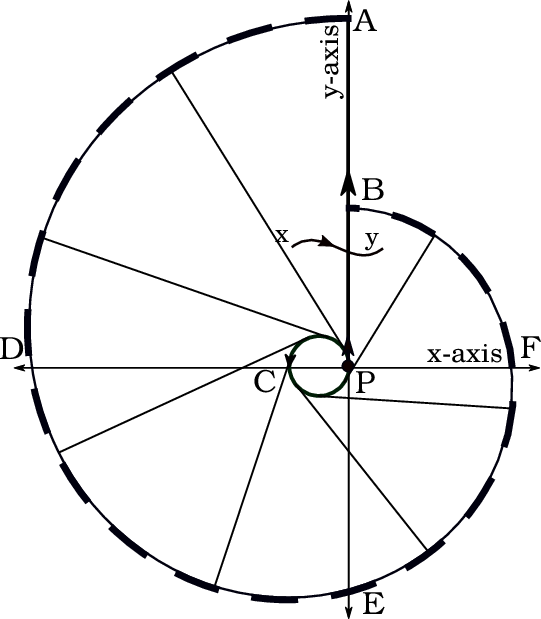}
\par\end{center}
\caption{\label{fig:Left-Reachable-Set}Left reachable set ($R_{p}^{l}({\bf p_{0}},\bar{t})$)}
\end{minipage}\hfill{}%
\begin{minipage}[t]{0.3\textwidth}%
\begin{center}
\includegraphics[scale=0.33]{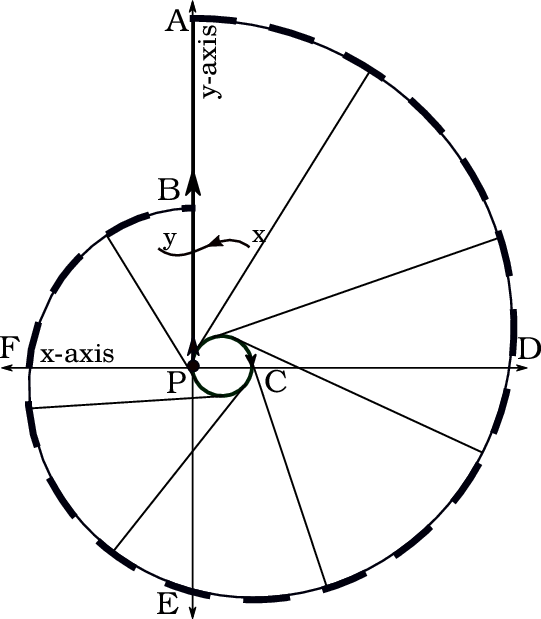}
\par\end{center}
\caption{\label{fig:Right-Reachable-Set}Right reachable set ($R_{p}^{r}({\bf p_{0}},\bar{t})$)}
\end{minipage}\hfill{}%
\begin{minipage}[t]{0.3\textwidth}%
\begin{center}
\includegraphics[scale=0.33]{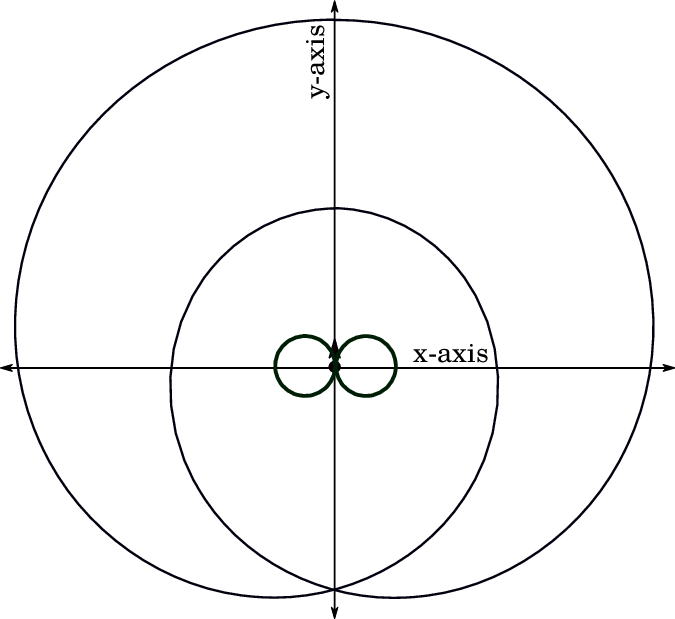}
\par\end{center}
\caption{\label{fig:Complete-Reachable-Set}$R_{p}^{r}({\bf p_{0}},\bar{t})\cup R_{p}^{l}({\bf p_{0}},\bar{t})$ }
\end{minipage}
\end{figure*}

\section{\label{sec:analysis_reachability}Counter examples using reachable
sets}

The reachable set characterizes the points which the Dubins vehicle
can reach in a given time. It would seem that the evader can always
escape capture if the evader's reachable set is not contained completely
inside the pursuer's reachable set. However, this is only possible
if there exists an evader trajectory which can enter the region not
contained in pursuer's reachable set, without passing through the
pursuer's reachable set. For if the trajectory passes through the
pursuer's reachable set it will be intercepted by some pursuer's trajectory.
This notion is formalized in the next definition by introducing the
safe region of the evader with respect to subsets of pursuer's reachable
set. 
\begin{defn}
The \textbf{\textit{safe region}} of the evader, at time $\bar{t}$,
with respect to a subset of pursuer's reachable set $R_{p}^{s}({\bf p_{0}},\bar{t})\subset R_{p}({\bf p_{0}},\bar{t})$
is defined as
\begin{eqnarray*}
R_{e}^{-}({\bf e_{0}},R_{p}^{s},\bar{t}) & = & \{{\bf z}\in\mathbb{R}^{2}:{\bf z}\in R_{e}({\bf e_{0}},\bar{t})\backslash R_{p}^{s}({\bf p_{0}},\bar{t})\ \\
 &  & \text{and }\exists\ \text{an evader}\text{ trajectory }\\
 &  & \text{with }{\bf e}_{|\mathbb{R}^{2}}(t_{1})={\bf z}\text{ for some }t_{1}\leq\bar{t},\\
 &  & \text{and }{\bf e}{}_{|\mathbb{R}^{2}}(t)\notin R_{p}^{s}({\bf p_{0}},t)\ \text{\ensuremath{\forall}}\ t\leq\bar{t}\}
\end{eqnarray*}
\end{defn}
\begin{lem}
\label{lem:left_containment}Let $T_{l}$ be the minimum time such
that $R_{e}^{-}({\bf e_{0}},R_{p}^{l},T_{l})=\emptyset$. If $d_{pe}^{0}\geq2r_{p}+2\pi r_{p}(v_{e_{m}}/v_{p_{m}})$
then $T_{l}<\infty$.
\end{lem}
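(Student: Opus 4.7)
The plan is to exhibit a finite time $\bar{t}_*$ at which the safe region $R_e^-(\mathbf{e_0},R_p^l,\bar{t}_*)$ is empty; since $T_l$ is the minimum such time, this gives $T_l\leq\bar{t}_*<\infty$. Set $T_0:=2\pi r_p/v_{p_m}$, the time needed for the pursuer to traverse the whole of $A_p(0)$. The first step uses the distance hypothesis to separate the evader from $A_p(0)$ during $[0,T_0]$: since $R_e(\mathbf{e_0},T_0)\subset B(\mathbf{e_0},v_{e_m}T_0)=B(\mathbf{e_0},2\pi r_p v_{e_m}/v_{p_m})$ while $A_p(0)$ together with its interior is contained in $B(\mathbf{p_0},2r_p)$, the hypothesis $d_{pe}^0\geq 2r_p+2\pi r_p(v_{e_m}/v_{p_m})$ forces these two discs to be disjoint. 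Consequently, no evader trajectory can reach $A_p(0)$ before time $T_0$.

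Next, for $\bar{t}\geq T_0$ I would establish the inclusion
\[
R_p^l(\mathbf{p_0},\bar{t})\supset\{\mathbf{z}\notin\mathrm{int}\,A_p(0):\|\mathbf{z}-\mathbf{p_0}\|\leq v_{p_m}(\bar{t}-T_0)-r_p\}.
\]
Indeed, any such $\mathbf{z}$ admits a valid tangent from $A_p(0)$ (in the sense of Definition \ref{def:Valid_tangent}) with tangent length $L(\mathbf{z})=\sqrt{\|\mathbf{z}-\mathbf{c_*}\|^2-r_p^2}\leq\|\mathbf{z}-\mathbf{p_0}\|+r_p$, where $\mathbf{c_*}$ denotes the center of $A_p(0)$. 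The associated left trajectory (anti-clockwise arc of duration $t_1\leq T_0$ followed by that tangent) therefore reaches $\mathbf{z}$ in time at most $T_0+(\|\mathbf{z}-\mathbf{p_0}\|+r_p)/v_{p_m}\leq\bar{t}$. Combined with $R_e(\mathbf{e_0},\bar{t})\subset B(\mathbf{p_0},d_{pe}^0+v_{e_m}\bar{t})$ (triangle inequality), this yields $R_e(\mathbf{e_0},\bar{t})\setminus\mathrm{int}\,A_p(0)\subset R_p^l(\mathbf{p_0},\bar{t})$ for every $\bar{t}\geq\bar{t}_*:=(d_{pe}^0+v_{p_m}T_0+r_p)/(v_{p_m}-v_{e_m})$, which is finite by Assumption \ref{ass:sup_purs}.

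It remains to rule out points $\mathbf{z}\in R_e(\mathbf{e_0},\bar{t}_*)\cap\mathrm{int}\,A_p(0)$ from the safe region. Since $\mathbf{e_0}\notin A_p(0)$ by the first step, any evader trajectory from $\mathbf{e_0}$ to such a $\mathbf{z}$ must cross the curve $A_p(0)$ at some time $t'$, and the first step forces $t'>T_0$; since the whole of $A_p(0)$ lies in $R_p^l(\mathbf{p_0},T_0)\subset R_p^l(\mathbf{p_0},t')$, the trajectory violates $\mathbf{e}_{|\mathbb{R}^2}(t)\notin R_p^l(\mathbf{p_0},t)$ at $t=t'$, so $\mathbf{z}\notin R_e^-(\mathbf{e_0},R_p^l,\bar{t}_*)$. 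Hence $R_e^-(\mathbf{e_0},R_p^l,\bar{t}_*)=\emptyset$ and $T_l\leq\bar{t}_*<\infty$. The main obstacle I expect is the geometric inclusion for $R_p^l$: verifying that for every $\mathbf{z}\notin\mathrm{int}\,A_p(0)$ the valid tangent really does exist, touches $A_p(0)$ at a point reachable by an anti-clockwise arc of length at most $2\pi r_p$, and extends in the direction consistent with the anti-clockwise motion. Once this is in place, the remainder of the argument reduces to an arithmetic comparison of radii that crucially uses $v_{p_m}>v_{e_m}$.
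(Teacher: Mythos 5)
Your proof is correct and follows essentially the same route as the paper's: exclude the interior of $A_{p}(0)$ using the distance hypothesis (the paper's Lemma \ref{lem:pursuer_circles_forbidden}), show that after a fixed delay the left reachable set contains a disc about ${\bf p_{0}}$ growing at rate $v_{p_{m}}$ (the paper's Lemma \ref{lem:kinematic_pursuer_contained}), and conclude finite-time containment of the evader's reachable set from $v_{p_{m}}>v_{e_{m}}$ (the paper's Lemmas \ref{lem:kinematic_kinematic} and \ref{lem:kinematic_evader_contains}). The only difference is cosmetic: the paper packages the growing-disc comparison through an auxiliary ``kinematic point'' with a start delay of $(2\pi+2)r_{p}/v_{p_{m}}$, whereas you carry out the same tangent-length estimate directly with a marginally tighter constant.
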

\begin{IEEEproof}
See Appendix.
\end{IEEEproof}
\begin{lem}
\label{lem:right_containment}Let $T_{r}$ be the minimum time such
that $R_{e}^{-}({\bf e_{0}},R_{p}^{r},T_{r})=\emptyset$. If $d_{pe}^{0}\geq2r_{p}+2\pi r_{p}(v_{e_{m}}/v_{p_{m}})$
then $T_{r}<\infty$.
\end{lem}
\begin{lem}
\label{lem:complete_containment}Let $T_{o}$ be the minimum time
such that $R_{e}^{-}({\bf e_{0}},R_{p},T_{o})=\emptyset$. If $d_{pe}^{0}\geq2r_{p}+2\pi r_{p}(v_{e_{m}}/v_{p_{m}})$
then $T_{o}<\infty$.
\end{lem}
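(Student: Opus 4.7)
The plan is to derive Lemma~\ref{lem:complete_containment} directly from Lemma~\ref{lem:left_containment} (or equivalently Lemma~\ref{lem:right_containment}) by a monotonicity argument on the safe region as a function of its ``obstacle'' set. The key observation is that $R_{p}^{l}({\bf p_{0}},\bar{t})\subseteq R_{p}({\bf p_{0}},\bar{t})$ for every $\bar{t}\geq 0$, since the left reachable set is a particular subset of trajectories parametrized by a single switching time $t_1$, whereas the full reachable set accounts for all admissible inputs in $\mathcal{U}_{p}$.

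The main step is to show the set-inclusion
\[
R_{e}^{-}({\bf e_{0}},R_{p},\bar{t})\ \subseteq\ R_{e}^{-}({\bf e_{0}},R_{p}^{l},\bar{t})\qquad\forall\,\bar{t}\geq 0.
\]
This is a direct consequence of the definition of the safe region: if $R_{p}^{s_1}({\bf p_{0}},t)\subseteq R_{p}^{s_2}({\bf p_{0}},t)$ for all $t\leq\bar t$, then any evader trajectory that avoids the larger obstacle $R_{p}^{s_2}$ at all times $t\leq\bar t$ automatically avoids the smaller obstacle $R_{p}^{s_1}$, and moreover the target point lies in $R_{e}({\bf e_{0}},\bar t)\setminus R_{p}^{s_2}\subseteq R_{e}({\bf e_{0}},\bar t)\setminus R_{p}^{s_1}$. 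Applying this with $R_{p}^{s_1}=R_{p}^{l}$ and $R_{p}^{s_2}=R_{p}$ gives the claimed inclusion.

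With the inclusion in hand, the conclusion is immediate. By Lemma~\ref{lem:left_containment}, the hypothesis $d_{pe}^{0}\geq 2r_{p}+2\pi r_{p}(v_{e_{m}}/v_{p_{m}})$ guarantees that there exists a finite $T_{l}<\infty$ with $R_{e}^{-}({\bf e_{0}},R_{p}^{l},T_{l})=\emptyset$. The set-inclusion then forces $R_{e}^{-}({\bf e_{0}},R_{p},T_{l})=\emptyset$ as well, and therefore
\[
T_{o}\ =\ \inf\{\bar{t}\geq 0:R_{e}^{-}({\bf e_{0}},R_{p},\bar{t})=\emptyset\}\ \leq\ T_{l}\ <\ \infty.
\]

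I do not anticipate a serious obstacle here; the only subtlety is being careful that the ``avoidance at all times $t\leq\bar t$'' clause in the definition of the safe region is preserved under enlarging the obstacle set, which is what makes the monotonicity argument go through cleanly rather than requiring a separate construction of an evader escape trajectory. Everything else follows from the already-proved Lemma~\ref{lem:left_containment}.
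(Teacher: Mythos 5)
Your proof is correct and follows essentially the same route as the paper, whose entire argument is the one-liner ``Follows from Lemma~\ref{lem:left_containment} and Lemma~\ref{lem:right_containment}''; you have simply made explicit the monotonicity of the safe region under enlargement of the obstacle set, which is the implicit justification. The only (harmless) difference is that you invoke just Lemma~\ref{lem:left_containment}, correctly observing that either one of the two lemmas suffices.
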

\begin{IEEEproof}
Follows from Lemma \ref{lem:left_containment} and Lemma \ref{lem:right_containment}.
\end{IEEEproof}
\begin{rem}
It would seem that the condition $R_{e}^{-}({\bf e_{0}},R_{p},\bar{t})=\emptyset$,
would be necessary and sufficient for capture. Let $T_{o}=\inf\{t\in\mathbb{R}:R_{e}^{-}({\bf e_{0}},R_{p},t)=\emptyset\}$.
It is shown in \cite{tsiotras2017reachabilitysets}, for pursuer and
evader which can turn instantaneously, that such a condition is indeed
necessary and sufficient for capture if we consider open-loop strategies
and capture occurs at $T_{o}$. However, we demonstrate through the
following counter-examples that the claim does not hold if we consider
feedback strategies for the pursuer and evader.
\end{rem}
\begin{example}
\label{exa:evader_behind}Refer to Figure \ref{fig:invalid_containment}
where the evader is exactly behind the pursuer and oriented away from
the pursuer. The pursuer's initial position is ${\bf p_{0}}=[0,0,\pi/2]$
while that of the evader is ${\bf e_{0}}=[-11,0,-\pi/2]$. Assume
$v_{p_{m}}=2$, $w_{p_{m}}=0.2$, $v_{e_{m}}=1.15$ and $w_{e_{m}}=0.18$.
At time $T_{o}=31.34$, the evader's reachable set (dotted curve)
is contained in pursuer's reachable set (solid line). If $R_{e}^{-}({\bf e_{0}},R_{p},T_{o})=\emptyset$
was the criterion for capture, then the evader would have traveled
straight and the pursuer would have intercepted it along the $CS$
path exactly at the point marked by star in Figure \ref{fig:invalid_containment}.
However, the optimal feedback strategies obtained by numerical simulation
(using the algorithms in \cite{Raivio2000}) and the reachable set
for such a strategy are as shown in Figure \ref{fig:valid_containment}.
Such a situation happens at time $T=35.9$. Hence, $R_{e}^{-}({\bf e_{0}},R_{p},T_{o})=\emptyset$
is not a sufficient criterion for capture. 
\end{example}
\begin{example}
\label{exa:evader_front}An alternate hypothesis might be that capture
occurs when either the left reachable set or the right reachable set
completely contains the evader's reachable region as shown in Figure
\ref{fig:valid_containment}. But consider the case when evader is
in front of the pursuer as shown in Figure \ref{fig:valid_front}.
Again the game is solved numerically using algorithms in \cite{Raivio2000}.
At the point of capture, marked by the star, neither the left nor
the right reachable set of the pursuer contain the evader's reachable
set.
\end{example}
Examples \ref{exa:evader_behind} and \ref{exa:evader_front} indicate
that reachable sets themselves do not characterize capture and a novel
geometric interpretation is required to understand the nature of optimal
trajectories. This is achieved next by introducing the notion of continuous
subsets of reachable sets.

\noindent 
\begin{figure*}
\begin{minipage}[t]{0.32\textwidth}%
\begin{center}
\includegraphics[scale=0.35]{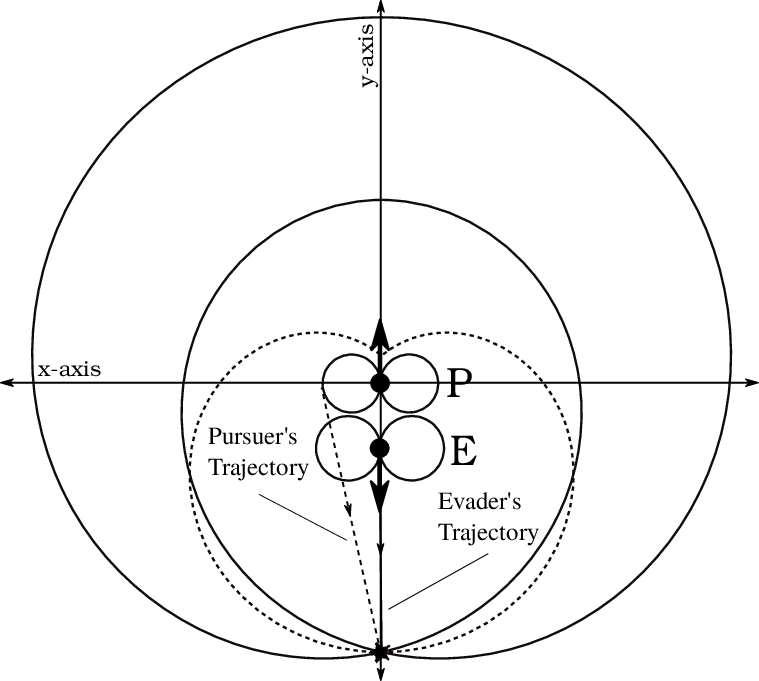}
\par\end{center}
\caption{\label{fig:invalid_containment}Invalid containment of evader's reachable
set}
\end{minipage}\hfill{}%
\begin{minipage}[t]{0.3\textwidth}%
\begin{center}
\includegraphics[scale=0.35]{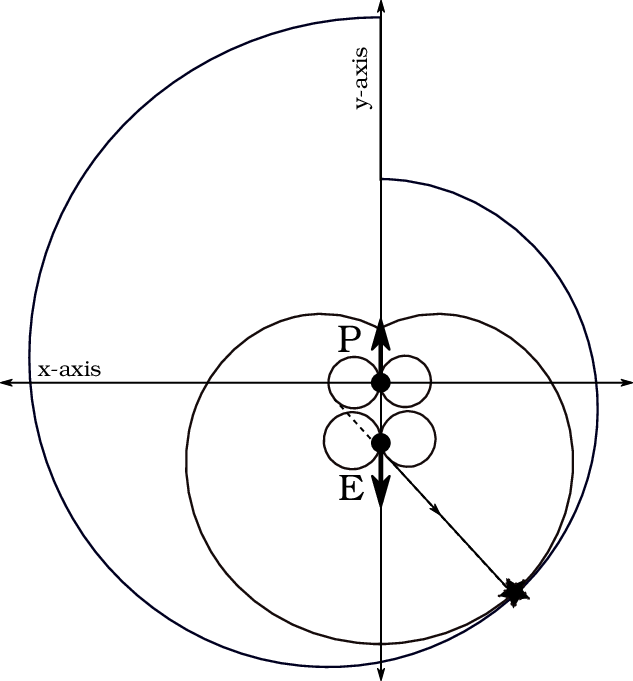}
\par\end{center}
\caption{\label{fig:valid_containment}Valid containment of evader's reachable
set}
\end{minipage}\hfill{}%
\begin{minipage}[t]{0.3\textwidth}%
\begin{center}
\includegraphics[scale=0.4]{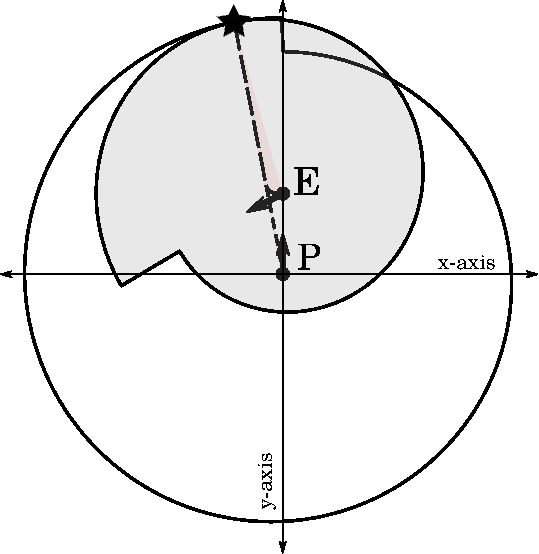}
\par\end{center}
\caption{\label{fig:valid_front}Left reachable set does not contain evader's
reachable set. Yet capture occurs at this time.}
\end{minipage}
\end{figure*}

\section{\label{sec:continuous_subsets}Continuous subsets of reachable set}

Recall that $T_{o}=\inf\{t\in\mathbb{R}:R_{e}^{-}({\bf e_{0}},R_{p},t)=\emptyset\}$
and consider a point ${\bf z}\in\partial R_{e}({\bf e}_{0},T_{o})\cap\partial R_{p}({\bf p}_{0},T_{o})$.
In Example \ref{exa:evader_behind} it was shown that the capture
did not occur at point ${\bf z}$. We try to explain this phenomena
using small deviations around the evader input signal. This will result
in a small deviation in the final point ${\bf e}_{|\mathbb{R}^{2}}(T_{o})={\bf z}$. 

\textbf{\textit{Observation:}}\textit{ If capture is to occur at ${\bf z}={\bf e}_{|\mathbb{R}^{2}}(T_{o})$,
using feedback pursuer strategies, every feasible variation in ${\bf e}$
should be traceable by the pursuer using small variations of its own
input signal.} 

We make the notion of admissible variations more concrete in this
section by defining the continuous subsets of the reachable set for
the Dubins vehicle.

\subsubsection*{Definition of Continuous Subsets of Reachable Sets }

Let $R_{p}({\bf p_{0}},\bar{t})$ be the reachable set of the pursuer
at time $\bar{t}$ starting from initial position ${\bf p}_{0}\in\mathbb{R}^{3}$. 
\begin{enumerate}
\item Let ${\bf x,y}\in R_{p}({\bf p_{0}},\bar{t})$ and let $\overline{{\bf xy}}$
be any curve from ${\bf x}$ to ${\bf y}$ such that all points on
$\overline{{\bf xy}}$ belong to $R_{p}({\bf p_{0}},\bar{t})$ as
shown in Figure \ref{fig:Continuum-of-Trajectories}. 
\item Let $\mathfrak{T}_{\overline{{\bf x}{\bf y}}}({\bf r},{\bf p_{0}},\bar{t})$
be the set of all the trajectories from ${\bf p_{0}}$ to the point
${\bf r}$ on $\overline{{\bf x}{\bf y}}$, which reach the point
${\bf r}\in R_{p}({\bf p_{0}},\bar{t})$ at time $t\leq\bar{t}$ starting
from ${\bf p_{0}}$.
\begin{eqnarray*}
\mathfrak{T}_{\overline{{\bf x}{\bf y}}}({\bf r},{\bf p_{0}},\bar{t}) & = & \{{\bf p}\in\mathcal{C}^{0}(\mathbb{R}^{+},\mathbb{R}^{3}):\exists\,\,{\bf u_{p}}\in\mathcal{U}_{p}\,\\
 &  & \text{ with }{\bf p}_{|\mathbb{R}^{2}}(t)={\bf r}\\
 &  & \,\,\,\text{ and }{\bf p}(0)={\bf p_{0}}\ \text{for }\text{some }t\leq\bar{t}\}
\end{eqnarray*}
\item Let $\mathfrak{L}(\overline{{\bf x}{\bf y}},{\bf p_{0}},\bar{t})$
be the collection of all possible trajectories from ${\bf p_{0}}$
which reach line $\overline{{\bf x}{\bf y}}$ at time $t\leq\bar{t}$
.
\begin{eqnarray*}
\mathfrak{L}(\overline{{\bf x}{\bf y}},{\bf p_{0}},\bar{t}) & = & \Big\{{\bf p}\in\mathcal{C}^{0}(\mathbb{R}^{+},\mathbb{R}^{3}):\exists\ {\bf r}\in\overline{{\bf x}{\bf y}}\ \text{s.t.}\\
 &  & \ {\bf p}\in\mathfrak{T}_{\overline{{\bf x}{\bf y}}}({\bf r},{\bf p_{0}})\Big\}
\end{eqnarray*}
\item Now we look at functions $\mathfrak{f}:\overline{{\bf x}{\bf y}}\rightarrow\mathfrak{L}(\overline{{\bf x}{\bf y}},{\bf p_{0}},\bar{t})$
such that $\mathfrak{f}({\bf r})\in\mathfrak{T}_{\overline{{\bf x}{\bf y}}}({\bf r},{\bf p_{0}},\bar{t})$
$\forall\ {\bf r}\in\overline{{\bf xy}}$. We construct a set $\mathfrak{C}(\overline{{\bf x}{\bf y}},{\bf p_{0}},\bar{t})$
such that
\begin{eqnarray*}
\mathfrak{C}(\overline{{\bf x}{\bf y}},{\bf p_{0}},\bar{t}) & = & \{\mathfrak{f}|\mathfrak{f}:\overline{{\bf x}{\bf y}}\rightarrow\mathfrak{L}(\overline{{\bf x}{\bf y}},{\bf p_{0}},\bar{t})\ \text{and }\\
 &  & \mathfrak{f}({\bf r})\in\mathfrak{T}_{\overline{{\bf x}{\bf y}}}({\bf r},{\bf p_{0}},\bar{t})\forall\ {\bf r}\in\overline{{\bf xy}}\}
\end{eqnarray*}
The set $\mathfrak{C}(\overline{{\bf x}{\bf y}},{\bf p_{0}},\bar{t})$
is the set of all the functions which map some point ${\bf r}\in\overline{{\bf x}{\bf y}}$
to a trajectory ${\bf p}\in\mathfrak{L}(\overline{{\bf x}{\bf y}},{\bf p_{0}},\bar{t})$
and the trajectory ${\bf p}$ is such that ${\bf p}_{|\mathbb{R}^{2}}(\bar{t})={\bf r}$.
\item Let the metric be defined on set $\overline{{\bf x}{\bf y}}$ by the
standard two norm. The metric on the set $\mathfrak{L}(\overline{{\bf x}{\bf y}},{\bf p_{0}},\bar{t})$
is defined by the $\mathcal{L}_{\infty}$ norm.
\end{enumerate}
\begin{defn}
\label{def:continuum_set}We say that the curve $\overline{{\bf x}{\bf y}}$
is a \textbf{\textit{continuum set}} if there exists a continuous
one-one map $\mathfrak{f}^{c}\in\mathfrak{C}(\overline{{\bf x}{\bf y}},{\bf p_{0}},\bar{t})$. 
\end{defn}
\begin{defn}
The range of $\mathfrak{f}^{c}$ will be called the \textbf{\textit{continuum
of trajectories}}. 
\end{defn}
\begin{rem}
Since, $\mathfrak{f}^{c}$ is a continuous function, the trajectories
in the continuum of trajectories of $\mathfrak{f}^{c}$ are such that
for any two points ${\bf w,z}\in\overline{{\bf x}{\bf y}}$ we have
$||\mathfrak{f}^{c}({\bf w})-\mathfrak{f}^{c}({\bf z})||_{\mathcal{L}_{\infty}}\rightarrow0$
as $||{\bf w}-{\bf z}||_{2}\rightarrow0$.
\end{rem}
\begin{defn}
A \textbf{\textit{continuous subset}} of the reachable set at time
$\bar{t}$, $R_{p}^{c}({\bf p_{0}},\bar{t})\subseteq R_{p}({\bf p_{0}},\bar{t})$
is a connected set of all the points s.t. 
\end{defn}
\begin{enumerate}
\item For all ${\bf x,y}\in R_{p}^{c}({\bf p_{0}},\bar{t})$ and every curve
$\overline{{\bf x}{\bf y}}\in R_{p}^{c}({\bf p_{0}},\bar{t})$, the
set $\overline{{\bf x}{\bf y}}$ is a continuum set.
\item For every point ${\bf r}\in R_{p}^{c}({\bf p_{0}},\bar{t})$ there
exists a trajectory ${\bf p}\in\mathcal{C}^{0}(\mathbb{R}^{+},\mathbb{R}^{3})$
such that ${\bf p}(0)={\bf p_{0}},{\bf p}(\tilde{t})={\bf r}$ for
some $0\leq\tilde{t}\leq\bar{t}$ and ${\bf p}(t)\in R_{p}^{c}({\bf p_{0}},\bar{t})$
for all $t\leq\bar{t}$.
\end{enumerate}
From here on we will denote a continuous subset of pursuer's reachable
set by $CSP$. A continuous subset of the reachable set of the evader
is defined analogously and will be denoted by $CSE$. From the definition
it is clear that the continuous subsets of reachable set are not unique.
The collection of all the $CSP$ ($CSE$) at time $\bar{t}$ from
initial position ${\bf p_{0}}$ (${\bf e_{0}}$) is denoted by $\mathfrak{R}_{p}({\bf p_{0}},\bar{t})$
($\mathfrak{R}_{e}({\bf e_{0}},\bar{t})$).
\begin{eqnarray*}
\mathfrak{R}_{e}({\bf e_{0}},\bar{t}) & = & \{R_{e}^{c}({\bf e_{0}},\bar{t}):R_{e}^{c}({\bf e_{0}},\bar{t})\subseteq R_{e}({\bf e_{0}},\bar{t})\}\\
\mathfrak{R}_{p}({\bf p_{0}},\bar{t}) & = & \{R_{p}^{c}({\bf p_{0}},\bar{t}):R_{p}^{c}({\bf p_{0}},\bar{t})\subseteq R_{p}({\bf p_{0}},\bar{t})\}
\end{eqnarray*}
The continuous safe region of a evader's continuous subset $R_{e}^{c}({\bf e_{0}},\bar{t})\in\mathfrak{R}_{e}({\bf e_{0}},\bar{t})$
with respect to a continuous subset of the pursuer $R_{p}^{c}({\bf p_{0}},\bar{t})\in\mathfrak{R}_{p}({\bf p_{0}},\bar{t})$
is defined next.
\begin{defn}
The \textbf{\textit{continuous safe region}} $R_{e}^{c-}({\bf e_{0}},R_{p}^{c},\bar{t})$
of a $CSE$ $R_{e}^{c}({\bf e_{0}}$$,\bar{t})$, at time $\bar{t}$,
with respect to a $CSP$ $R_{p}^{c}({\bf p_{0}},\bar{t})$ is defined
as
\begin{eqnarray*}
R_{e}^{c-}({\bf e_{0}},R_{p}^{c},\bar{t}) & = & \{{\bf z}\in\mathbb{R}^{2}:{\bf z}\in R_{e}^{c}({\bf e_{0}},\bar{t})\backslash R_{p}^{c}({\bf p_{0}},\bar{t})\ \\
 &  & \text{and }\exists\ \text{an evader}\text{ trajectory with}\\
 &  & {\bf e}_{|\mathbb{R}^{2}}(t_{1})={\bf z}\text{ for some}\ t_{1}\leq\bar{t}\text{ and }\\
 &  & {\bf e}_{|\mathbb{R}^{2}}(t)\notin R_{p}^{c}({\bf p_{0}},t)\ \text{at each}\ t\leq t_{1}\}
\end{eqnarray*}
\end{defn}
When the safe region is an empty set we say that the $CSP$ $R_{p}^{c}({\bf p_{0}},\bar{t})$
contains $CSE$ $R_{e}^{c}({\bf e_{0}},\bar{t})$ continuously.
\begin{defn}
\textbf{\textit{Continuous containment}}: If the continuous safe region
of a $CSE$ $R_{e}^{c}({\bf e_{0}},\bar{t})$, at time $\bar{t}$,
with respect to $CSP$ $R_{p}^{c}({\bf p_{0}},\bar{t})$ be such that
$R_{e}^{c-}({\bf e_{0}},R_{p}^{c},\bar{t})=\emptyset$, then we say
that $R_{p}^{c}({\bf p_{0}},\bar{t})$ contains $R_{e}^{c}({\bf e_{0}},\bar{t})$
continuously.
\end{defn}
It turns out that optimal capture occurs when every continuous subset
of the evader is contained inside some continuous subset of the pursuer.
The time $T$ at which such a situation occurs is defined next. Let
\begin{eqnarray}
T & := & \inf\{t\in\mathbb{R}:\forall\ R_{e}^{c}({\bf e_{0}},t)\in\mathfrak{R}_{e}({\bf e_{0}},t)\ \label{eq:feedback_time_to_capture}\\
 &  & \text{\ensuremath{\exists}}\ R_{p}^{c}({\bf p_{0}},t)\in\mathfrak{R}_{p}({\bf p_{0}},t)\ \text{s.t. }R_{e}^{c-}({\bf e_{0}},R_{p}^{c},t)=\emptyset\}\nonumber 
\end{eqnarray}

\section{\textcolor{blue}{\label{sec:Main-Results}}Main Results}

In this section we state the main contributions of the paper. These
have been proved in subsequent sections.

\subsection{Capture characterization using continuous subsets}

\noindent The next theorem states that the minimum time at which continuous
containment occurs ($T$ as defined by (\ref{eq:feedback_time_to_capture}))
is the same as the $\min-\max$ time ($T^{*}$ as per Definition \ref{def:saddle_point})
$T=T^{*}$. This implies that for all time $t<T=T^{*}$ there exists
an evader feedback policy such that irrespective of any feedback policy
of the pursuer, the evader is able to avoid capture. Also, there exists
a feedback pursuer policy such that irrespective of any feedback policy
of the evader the pursuer is able to capture it at some time $t<T=T^{*}$.
\begin{thm}
\label{thm:capture_strategy_pursuer_evader}Let $T$ and $T^{*}$
be as per (\ref{eq:feedback_time_to_capture}) and Definition \ref{def:saddle_point}.
$T$ is the $\min-\max$ time to capture i.e. $T=T^{*}$.
\end{thm}
\begin{IEEEproof}
See Section \ref{sec:imp_thm_1}.
\end{IEEEproof}
Theorem \ref{thm:capture_strategy_pursuer_evader} demonstrates that
continuous containment is a complete characterization of optimal capture.
However, to completely understand the nature of the optimal trajectories
we investigate the specific continuous subsets of the pursuer and
evader reachable sets that actually achieve the inf $T$ in (\ref{eq:feedback_time_to_capture}).
Through a careful study of these special $CSE/CSP's$ we are able
to prove the following theorem that describes the optimal trajectories.
\begin{thm}
\label{thm:cs-type}If the initial distance $d_{pe}^{0}\geq2r_{e}+2\pi r_{e}(v_{p_{m}}/v_{e_{m}})$,
then the saddle-point strategies of the evader and the pursuer are
of the type $CS$ that is a circle and straight line.
\end{thm}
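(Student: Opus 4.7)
The plan is to combine the characterization $T^{*}=T_{a}$ with a direct inspection of the frontier boundaries of the sets in the representative set, and with the known structure of $\partial R_{e}$ after time $2\pi r_{e}/v_{e_{m}}$. By the immediately preceding theorem, $\mathcal{A}_{p}(T_{a})$ is a $CCSP$ and $T^{*}=T_{a}$, so the saddle-point capture happens at some point $\mathbf{z}$. Because the frontier boundary of $\mathcal{A}_{p}(t)$ and the external boundary $\partial R_{e}(\mathbf{e_{0}},t)$ both expand radially outward with $t$, and $\mathcal{A}_{p}(T_{a})$ is the first set in the representative collection to continuously contain $R_{e}(\mathbf{e_{0}},T_{a})$, the point $\mathbf{z}$ must lie simultaneously on the frontier boundary of $\mathcal{A}_{p}(T_{a})$ and on $\partial R_{e}(\mathbf{e_{0}},T^{*})$. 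Thus it suffices to verify that every point of each of these two boundaries is reachable in minimum time only by a $CS$-type trajectory.

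For the evader: the assumption $d_{pe}^{0}\geq2r_{e}+2\pi r_{e}(v_{p_{m}}/v_{e_{m}})$ and Lemma \ref{lem:evader_circles_forbidden} give $T^{*}\geq 2\pi r_{e}/v_{e_{m}}$. The classical result of \cite{bui1994accessibility}, recalled in Section \ref{subsec:Reachable-Sets}, then says that every point of $\partial R_{e}(\mathbf{e_{0}},T^{*})$ is reached in minimum time only by curves of type $CS$. Since capture at $\mathbf{z}\in\partial R_{e}(\mathbf{e_{0}},T^{*})$ under the saddle-point strategy must be time-optimal for the evader (any faster route would shrink the safe region earlier, contradicting $T^{*}=T_{a}$), the evader's optimal trajectory to $\mathbf{z}$ is of type $CS$.

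For the pursuer: I would analyse the three possibilities for $\mathcal{A}_{p}(T_{a})\in\mathfrak{R}(\mathbf{p_{0}},\mathbf{e_{0}},T_{a})$. If $\mathcal{A}_{p}(T_{a})=R_{p}^{l_{t}}(\mathbf{p_{0}},T_{a})$, then by Definition \ref{def:truncated_sets} every point of the frontier boundary $ADEFB$ (see Figure \ref{fig:Left-Reachable-Set}) is, by construction, reached exactly by a trajectory of type $LS\subset CS$ described by equation (\ref{eq:left_reachable_set}); symmetrically for $R_{p}^{r_{t}}$. If $\mathcal{A}_{p}(T_{a})=B_{p}(\mathbf{p_{0}},\mathbf{e_{0}},T_{a})$, then by Definition \ref{def:blocking_set} its frontier boundary is obtained from portions of the frontier boundaries of the left and right truncated reachable sets, connected where relevant by pieces of $\tilde{T}_{l}^{v}$ and $\tilde{T}_{r}^{v}$ (themselves $LS$ and $RS$ curves); in every case the minimum-time trajectory from $\mathbf{p_{0}}$ to any frontier-boundary point is of type $CS$. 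Combining the three cases yields that the pursuer's saddle-point trajectory is of type $CS$.

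The main subtlety, and the place where some care is needed, is the second step of the evader argument: one must justify that the saddle-point evader trajectory ending at $\mathbf{z}$ is the minimum-time route to $\mathbf{z}$ (and not some longer route arriving at time $T^{*}$ by detour). This follows from Theorem \ref{thm:capture_necessary_condition}, because any slack in the evader's traversal time would place $\mathbf{z}$ in $R_{e}(\mathbf{e_{0}},t)$ for some $t<T^{*}$, and the radial-growth argument combined with $\mathbf{z}\in\partial R_{e}(\mathbf{e_{0}},T^{*})$ rules this out. Once this is pinned down, the conclusion that both saddle-point strategies are of type $CS$ follows immediately from the two boundary characterisations above.
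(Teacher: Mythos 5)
Your proposal is correct and follows essentially the same route as the paper's own proof: both locate the capture point $\mathbf{z}$ on the frontier boundary of the active set and on $\partial R_{e}({\bf e_{0}},T^{*})$ via the radial-growth argument, and then invoke the fact that minimum-time trajectories to these boundary points are of type $CS$. Your version is somewhat more careful — the explicit appeal to Lemma \ref{lem:evader_circles_forbidden} to guarantee $T^{*}\geq2\pi r_{e}/v_{e_{m}}$, the case analysis over the three candidates for $\mathcal{A}_{p}(T_{a})$, and the justification that the evader's saddle-point trajectory to $\mathbf{z}$ is genuinely time-optimal are all details the paper leaves implicit — but these are elaborations of the same argument rather than a different one.
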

\begin{IEEEproof}
See Section \ref{sec:imp_theorem_2}.
\end{IEEEproof}
The trajectories of both the pursuer and the evader are of the type
$CS$ and since it is guaranteed that capture will occur, the straight
lines are coincident to the capture point. This observation allows
us to prove the next geometric result.
\begin{thm}
\label{thm:one_valid_tangent}If $d_{pe}^{0}\geq2r_{e}+2\pi r_{e}(v_{p_{m}}/v_{e_{m}})$
then the saddle-point strategies of the pursuer and the evader result
in pursuer and evader trajectories being coincident to circles in
an $PE$-pairs and one of the common tangents of that pair. 
\end{thm}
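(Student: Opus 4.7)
The plan is to combine the qualitative structure from Theorem \ref{thm:cs-type} (saddle-point trajectories are of type $CS$) with the quantitative structure from the Hamiltonian analysis (Lemma \ref{prop:arcs_lines_purs} and Proposition \ref{prop:same_line}). The key observation is that the two straight-line segments must in fact lie on a single line, which is then forced to be a common tangent to a PE-pair by the smoothness requirement at the switching point.

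First, I would appeal to Theorem \ref{thm:cs-type}, which, under the standing distance hypothesis $d_{pe}^{0}\geq 2r_{e}+2\pi r_{e}(v_{p_{m}}/v_{e_{m}})$, tells us that the saddle-point trajectory of the pursuer is of the form ``arc of a minimum turning radius circle followed by a straight line up to the capture point $\mathbf{z}$,'' and likewise for the evader. Thus the pursuer's arc lies on some circle $C^{*}_{p}\in\{A_{p}(0),C_{p}(0)\}$ and the evader's arc lies on some circle $C^{*}_{e}\in\{A_{e}(0),C_{e}(0)\}$, so the pair $\{C^{*}_{p},C^{*}_{e}\}$ is a PE-pair in the sense of Section \ref{subsec:Minimum-radius-turning}.

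Next, I would use Lemma \ref{prop:arcs_lines_purs} to fix the directions of the straight-line portions as $\phi_{p}$ for the pursuer and $\phi_{e}$ for the evader, and then apply Proposition \ref{prop:same_line} to conclude $\phi_{p}=\phi_{e}$. Both straight-line segments terminate at the common capture point $\mathbf{z}$ by definition of capture, so two half-lines issuing from $\mathbf{z}$ in the same direction must lie on the same Euclidean line $L$. Hence the pursuer's post-switching segment and the evader's post-switching segment are contained in a common line $L$.

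Finally, I would invoke continuity of the velocity vector at the circle-to-line switching instant of each trajectory, which is forced by the Dubins dynamics (\ref{eq:pursuer_evader_dyn}) with bounded control. This makes $L$ tangent to $C^{*}_{p}$ at the pursuer's switching point and tangent to $C^{*}_{e}$ at the evader's switching point, so $L$ is a common tangent to the PE-pair $\{C^{*}_{p},C^{*}_{e}\}$. Moreover, since both vehicles traverse $L$ in the forward direction dictated by their velocities while leaving their respective circles, the orientation induced on $L$ by the pursuer's motion matches the orientation of $C^{*}_{p}$, and similarly for the evader; hence $L$ is in fact a valid common tangent in the sense of Definition \ref{def:Valid_tangent}.

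The main obstacle, and the step requiring the most care, is the collapsing of the two straight segments onto a single line $L$: parallelism from $\phi_{p}=\phi_{e}$ alone is not enough, and one must use the coincidence of their terminal endpoints at $\mathbf{z}$. Once $L$ is identified, the tangency conclusions follow from first-order smoothness and the explicit parameterization already recorded in (\ref{eq:left_reachable_set}), so no further calculation is needed beyond observing that a $CS$ trajectory leaves its circle along the tangent direction.
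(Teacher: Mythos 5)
Your proposal is correct and follows essentially the same route as the paper's own proof: invoke Theorem \ref{thm:cs-type} for the $CS$ structure, use Proposition \ref{prop:same_line} to get parallel straight segments, and conclude coincidence from the shared capture point, whence the common line must be a tangent to both circles of a $PE$-pair. Your additional remarks on tangency via velocity continuity and on the orientation making the tangent \emph{valid} in the sense of Definition \ref{def:Valid_tangent} merely make explicit what the paper leaves implicit.
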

\begin{IEEEproof}
See Section \ref{sec:impt_theorem_3}.
\end{IEEEproof}
Clearly Theorem \ref{thm:one_valid_tangent} effectively solves the
subsidiary Problem \ref{prob:reachable_characterization} listed in
Section \ref{sec:problem_formulation}. The simple geometry of optimal
curves defined by Theorem \ref{thm:one_valid_tangent} lets us propose
an algorithm to immediately compute the feedback control for both
the pursuer and the evader at each instant. This solution to Problem
\ref{prob:implementable_feedback_law} is described next\textcolor{blue}{.}

\subsection{\label{sec:feedback_law_numerical_simulations}Feedback law using
geometry}

In this section we design algorithms to select an appropriate tangent
which describes the feedback saddle-point strategies. As discussed
in Section \ref{subsec:Minimum-radius-turning} each $PE$-pair has
four common tangents. Since there are four such $PE$-pairs we will
have 16 directed tangents in total. First we show that at any time
$t$, only one directed tangent corresponding to each $PE$-pair in
the set $PE(t)$ is a valid tangent along which the saddle-point trajectories
may occur. 
\begin{lem}
\label{lem:valid_tangent}At each time $t$, corresponding to each
pair of $PE$-circles in the set $PE(t)$ there is only one valid
common tangent with which saddle point strategies can coincide.
\end{lem}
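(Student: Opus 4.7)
The plan is a direct geometric case analysis over the four PE-pairs. Fix a PE-pair $\{C_i,C_j\}$ with one pursuer-circle of radius $r_p$ and one evader-circle of radius $r_e$; under the standing initial-separation hypothesis the two centres are at distance strictly greater than $r_p+r_e$, so standard plane geometry gives exactly four common tangent lines: two external (both circles on the same side of the line) and two internal (circles on opposite sides). Each of these four lines is turned into a directed tangent by the pursuer-to-evader convention introduced just before Definition \ref{def:Valid_tangent}, giving four directed candidates. I would then verify Definition \ref{def:Valid_tangent} for each candidate by comparing its unit direction vector with the velocity direction prescribed by each circle at the corresponding contact point.

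At a contact point $q$ on a circle of centre $O$, the velocity direction prescribed by the circle is the outward normal $\hat{n}=(q-O)/\|q-O\|$ rotated by $-90^\circ$ if the circle is traversed clockwise and by $+90^\circ$ if anti-clockwise. The crux is a dichotomy about the two outward normals at the two contact points of a common tangent. For an external tangent they are parallel, since both contact points lie on the same side of the line of centres; hence a directed external tangent agrees with both prescribed velocity directions simultaneously if and only if the two circles have the same rotation sense, and in that case exactly one of the two undirected external tangents has its pursuer-to-evader orientation matching that common sense. For an internal tangent the two outward normals are anti-parallel, since the contact points lie on opposite sides of the line of centres; hence a directed internal tangent agrees with both prescribed velocity directions simultaneously if and only if the two circles have opposite rotation senses, and again exactly one of the two undirected internal tangents has its pursuer-to-evader orientation matching both senses.

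Applying this dichotomy to the four PE-pairs finishes the proof. For the same-sense pairs $\{C_p,C_e\}$ and $\{A_p,A_e\}$, both internal tangents are ruled out and exactly one external tangent survives as valid. For the opposite-sense pairs $\{C_p,A_e\}$ and $\{A_p,C_e\}$, both external tangents are ruled out and exactly one internal tangent survives as valid; the tangent $T_3$ in Figure \ref{fig:ApCe} is the concrete instance of this for the pair $\{A_p,C_e\}$. Consequently each PE-pair admits exactly one valid common tangent.

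The main obstacle is the parallel-versus-anti-parallel outward-normal step in the middle paragraph; once that dichotomy is in hand, the rest of the argument is bookkeeping across the four pairs. A minor side item is to verify that the two circles in each PE-pair genuinely have centres at distance larger than $r_p+r_e$ throughout the horizon, which follows from the triangle inequality applied to the hypothesis $d_{pe}^0 \ge 2r_e + 2\pi r_e(v_{p_m}/v_{e_m})$ since each circle centre lies within $r_p$ (respectively $r_e$) of the corresponding vehicle.
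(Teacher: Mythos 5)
Your proposal is correct, and it rests on exactly the same criterion the paper uses (Definition \ref{def:Valid_tangent}: the directed tangent must agree with the rotation sense of both circles at the contact points), but you organize the elimination differently and, frankly, more cleanly. The paper argues each $PE$-pair separately by inspecting the figure: it draws $\overrightarrow{O_{p}O_{e}}$ and discards the tangents whose contact points fall on the wrong side of that line, then asserts the other three pairs are ``similar.'' Your parallel-versus-anti-parallel dichotomy for the outward normals of external versus internal tangents replaces those four figure inspections with one uniform argument, and it delivers both halves of the claim at once --- existence (at least one tangent survives) and uniqueness (exactly one does) --- which the paper's elimination only establishes implicitly. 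It also recovers, as a corollary, the $\pm\pi/2$ angle conditions the paper later states without proof in Algorithm \ref{alg:Valid-Tangent}. The one caveat, which you partially flag yourself and which the paper shares, is that the four-tangent count presupposes the two circle centres stay more than $r_{p}+r_{e}$ apart; your triangle-inequality remark handles $t=0$, but since $d_{pe}(t)\rightarrow0$ at capture this separation eventually fails, so the lemma (in your version and the paper's) should be read as applying only while the circles of the $PE$-pair remain disjoint. This is a shared boundary issue, not a defect of your argument relative to the paper's.
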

\begin{IEEEproof}
We prove this on a case by case basis. Consider a circle pair $\{A_{p}(t),C_{e}(t)\}\in PE(t)$
as shown in Figure \ref{fig:ApCe}. A directed straight line $\overrightarrow{O_{p}O_{e}}$
is drawn from the center of circle $A_{p}$ to the center of circle
$C_{e}$. Clearly, the tangents which end on the evader to the right
of $\overrightarrow{O_{p}O_{e}}$ are not feasible as the direction
of the tangents do not match with the orientation of the evader. Similarly,
the tangents which end on $C_{e}(t)$ to the left of the line $\overrightarrow{O_{p}O_{e}}$
are also not feasible. Hence, we can eliminate tangents $T_{1}$,
$T_{2}$ and $T_{4}$. Thus, only the tangent $T_{3}$ (dashed line)
is a feasible one. Similarly, the claim can be proven for other pairs
in the set $PE(t)$. 
\end{IEEEproof}
Algorithm \ref{alg:Valid-Tangent} is designed to compute the valid
tangent for each $PE$-pair. The common tangents of all the $PE$-pairs
have been shown in Figures \ref{fig:ApCe}, \ref{fig:ApAe}, \ref{fig:CpAe},
and \ref{fig:CpCe}. In each case the valid tangent has been shown
by a dashed line. 
\begin{algorithm}
\begin{enumerate}
\item For the tangent under consideration let $T_{p}$ be its intersection
point with the pursuer circle under consideration and $T_{e}$ be
the intersection point with the evader circle under consideration.
\item The following observations can be seen from Figures \ref{fig:ApCe},
\ref{fig:ApAe}, \ref{fig:CpAe}, and \ref{fig:CpCe} for valid tangent.
\begin{enumerate}
\item For $A_{p}(t)$ and $A_{e}(t)$ the angle between the valid tangent
and $\overrightarrow{O_{p}T_{p}}$ and $\overrightarrow{O_{e}T_{e}}$
translated to $T_{p}$ and $T_{e}$ respectively is $\pi/2$ in anti-clockwise
direction.
\item For $C_{p}(t))$ and $C_{e}(t)$ the angle between the valid tangent
and $\overrightarrow{O_{p}T_{p}}$ and $\overrightarrow{O_{e}T_{e}}$
translated to $T_{p}$ and $T_{e}$ respectively is $-\pi/2$ in anti-clockwise
direction.
\end{enumerate}
\item For a given directed tangent $\overrightarrow{T}$ in a $PE$-pair
if the angles with $\overrightarrow{O_{p}T_{p}}$ and $\overrightarrow{O_{e}T_{e}}$
satisfy the conditions above then it is a valid tangent.
\end{enumerate}
\caption{\label{alg:Valid-Tangent}Valid Tangent}
\end{algorithm}

It was shown, using geometry, in Lemma \ref{lem:valid_tangent} that
each $PE$-pair has only one valid tangent. Thus there are four valid
tangents (one corresponding to each $PE$-pair) with which the saddle-point
strategies of the pursuit-evasion game may coincide. Next we formulate
a matrix game at each instant of time to design feedback saddle-point
strategies for the pursuit-evasion game.

Recall that the clockwise circle $C_{p}(t)$ is traversed for $w_{p}(t)=-w_{p_{m}}$
and anticlockwise circle $A_{p}(t)$ for $w_{p}(t)=+w_{p_{m}}$. Similarly,
for the evader the clockwise circle $C_{e}(t)$ is traversed for $w_{e}(t)=-w_{e_{m}}$
and anticlockwise circle $A_{e}(t)$ for $w_{e}(t)=+w_{e_{m}}$. Selecting
$w_{p}(t)=+w_{p_{m}}$ and $w_{e}(t)=-w_{e_{m}}$ is equivalent to
selecting the valid tangent of the pair $\{A_{p}(t),C_{e}(t)\}$ along
which saddle-point strategies for the pursuit-evasion game will occur.
The computation of time to capture at time $t$, $T_{ac}(t)$, for
the valid tangent of the pair $\{A_{p}(t),C_{e}(t)\}$, shown in Figure
\ref{fig:ApCe}, is given in Algorithm \ref{alg:tangent_time}.
\begin{algorithm}
Input: Valid tangent for the pair $\{A_{p}(t),C_{e}(t)\}$

\rule[0.5ex]{1\columnwidth}{0.5pt}

Let $ET_{e}$ be the arc subtended between $O_{e}E$ and $O_{e}T_{e}$
in clockwise direction and let $PT_{p}$ be the arc subtended by $O_{p}P$
and $O_{p}T_{p}$ in anticlockwise direction as shown in Figure \ref{fig:ApCe}.
Compute the length of the arcs $PT_{p}=l_{ap}$ and $ET_{E}=l_{a_{e}}$
and define $t_{p}:=l_{ap}/v_{p_{m}}$ and $t_{e}:=l_{ae}/v_{e_{m}}$.
Also, let the distance between $T_{p}T_{e}$ be denoted by $d$.
\begin{enumerate}
\item If $t_{p}>t_{e}$ the evader will come out of the circle and onto
the tangent earlier than the evader.
\begin{enumerate}
\item $\tilde{t}:=t_{p}-t_{e}$. Thus the evader will travel a distance
$d_{e}:=v_{e_{m}}\tilde{t}$ on the straight line before the pursuer
comes onto the tangent.
\item Thus at the time $t_{p}$ the distance between the pursuer and evader
will be $\tilde{d}:=d+d_{e}$. Now the time to capture from this point
will be $\bar{t}:=\tilde{d}/(v_{p_{m}}-v_{e_{m}})$.
\item Thus the time to capture will be $T_{ac}(t)=t_{p}+\bar{t}$.
\end{enumerate}
\item If $t_{p}\leq t_{e}$ the pursuer will come on straight line earlier.
\begin{enumerate}
\item $\tilde{t}:=t_{e}-t_{p}$. Thus the pursuer will travel a distance
$d_{p}:=v_{p_{m}}\tilde{t}$ on the straight line before the pursuer
comes on the straight line.
\item Thus at the time $t_{e}$ the distance between the pursuer and evader
will be $\tilde{d}:=d-d_{p}$. Now the time to capture from this point
will be $\bar{t}:=\tilde{d}/(v_{p_{m}}-v_{e_{m}})$.
\item Thus the time to capture will be $T_{ac}(t)=t_{e}+\bar{t}$.
\end{enumerate}
\end{enumerate}
\rule[0.5ex]{1\columnwidth}{0.5pt}

Output: Time to capture along a valid tangent for $\{A_{p}(t),C_{e}(t)\}$
pair $=$ $T_{ac}(t)$ \caption{\label{alg:tangent_time}Algorithm to compute time to capture along
a valid tangent for $\{A_{p}(t),C_{e}(t)\}$ pair}
\end{algorithm}

\begin{figure*}
\begin{minipage}[t]{0.33\textwidth}%
\begin{center}
\includegraphics[scale=0.85]{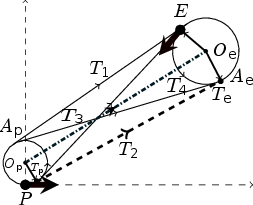}\caption{\label{fig:ApAe} $\{A_{p}(t),A_{e}(t)\}$}
\par\end{center}%
\end{minipage}\hfill{}%
\begin{minipage}[t]{0.33\textwidth}%
\begin{center}
\includegraphics[scale=0.85]{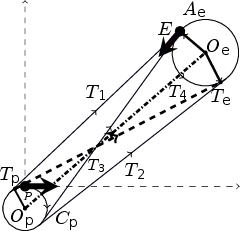}
\par\end{center}
\caption{\label{fig:CpAe}$\{C_{p}(t),A_{e}(t)\}$}
\end{minipage}\hfill{}%
\begin{minipage}[t]{0.33\textwidth}%
\begin{center}
\includegraphics[scale=0.85]{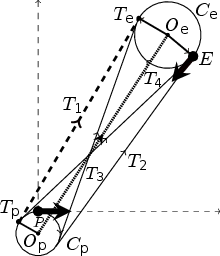}
\par\end{center}
\caption{\label{fig:CpCe} $\{C_{p}(t),C_{e}(t)\}$}
\end{minipage}
\end{figure*}

We calculate the times corresponding to each circle pairs and hence
each input pairs. At an given time instant, say $t$, let $A_{p}(t)$,
$C_{p}(t)$, and $A_{e}(t)$, $C_{e}(t)$ be the pursuer and evader
circles respectively. Let $T_{aa}(t)$, $T_{ac}(t)$, $T_{ca}(t)$
and $T_{cc}(t)$ be the times corresponding to valid tangents of circle-pairs
$\{A_{p}(t),A_{e}(t)\}$, $\{A_{p}(t),C_{e}(t)\}$, $\{C_{p}(t),C_{e}(t)\}$,
and $\{C_{p}(t),C_{e}(t)\}$ respectively. For example, if the pursuit-evasion
saddle-point occurs on the $PE$-pair $\{A_{p}(t),C_{e}(t)\}$ then
at $t$ we must have $w_{p}(t)=w_{p_{m}}$ and $w_{e}(t)=-w_{e_{m}}$
initially. Similarly we have, 
\begin{enumerate}
\item $\{A_{p}(t),A_{e}(t)\}\Rightarrow w_{p}(t)=+w_{p_{m}},\,w_{e}(t)=+w_{e_{m}}$ 
\item $\{A_{p}(t),C_{e}(t)\}$ $\Rightarrow$ $w_{p}(t)=+w_{p_{m}}$, $w_{e}(t)=-w_{e_{m}}$
\item $\{C_{p}(t),A_{e}(t)\}$ $\Rightarrow$ $w_{p}(t)=-w_{p_{m}}$, $w_{e}(t)=+w_{e_{m}}$
\item $\{C_{p}(t),C_{e}(t)\}$ $\Rightarrow$ $w_{p}(t)=-w_{p_{m}}$, $w_{e}(t)=-w_{e_{m}}$
\end{enumerate}
until the time that the trajectory leaves the corresponding circle
and starts on the straight line path along the common tangent. Thus
corresponding to pursuer and evader inputs at time $t$, we obtain
times of capture along each of the valid tangents. Using these times
we formulate a matrix game as shown in Table \ref{tab:matrixgame}.
The valid tangent on which the pursuit-evasion game occurs constitutes
the saddle point strategies for the pursuit-evasion differential game.
Thus, starting at time $t$ with the configuration ${\bf p}(t),\,{\bf e}(t)$,
the saddle-point solution of the matrix game at each instant $t$
will give the common tangent, with which the open-loop representation
of the feedback saddle-point strategies is coincident.
\begin{table}
\centering{}%
\begin{tabular}{c|c|c|}
\multicolumn{1}{c}{P\textbackslash E} & \multicolumn{1}{c}{$w_{e}(t)=+w_{e_{m}}$} & \multicolumn{1}{c}{$w_{e}(t)=-w_{e_{m}}$}\tabularnewline
\cline{2-3} \cline{3-3} 
$w_{p}(t)=+w_{p_{m}}$ & $T_{aa}(t)$ & $T_{ac}(t)$\tabularnewline
\cline{2-3} \cline{3-3} 
$w_{p}(t)=-w_{p_{m}}$ & $T_{ca}(t)$ & $T_{cc}(t)$\tabularnewline
\cline{2-3} \cline{3-3} 
\end{tabular}\caption{\label{tab:matrixgame}Matrix game at time instant $t$}
\end{table}
Thus the policy of the evader would be $\max-\min$ solution of the
matrix game while that of the evader would be $\min-\max$ solution
of the matrix game at each instant $t$. From this discussion the
next theorem follows.
\begin{thm}
\label{thm:matrix_law}If $d_{pe}^{0}\geq2r_{e}+2\pi r_{e}(v_{p_{m}}/v_{e_{m}})$,
the tangents selected by the saddle-point equilibrium in the matrix
game described by Table \ref{tab:matrixgame} will be coincident with
the open-loop representation of feedback saddle-point strategies at
each time $t\geq0$.
\end{thm}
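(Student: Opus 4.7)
The plan is to reduce the differential game to a discrete choice problem via Theorems \ref{thm:cs-type} and \ref{thm:one_valid_tangent}, and then argue that the $2\times 2$ matrix game in Table \ref{tab:matrixgame} is exactly the right discrete surrogate at every time instant. First I would observe that, by Theorem \ref{thm:one_valid_tangent}, any saddle-point trajectory (from the current configuration $({\bf p}(t),{\bf e}(t))$) must coincide with a valid common tangent of some $PE$-pair in $PE(t)$. By Lemma \ref{lem:valid_tangent} there is exactly one valid tangent per $PE$-pair, and the four $PE$-pairs are in bijection with the four sign combinations $w_p(t)\in\{+w_{p_m},-w_{p_m}\}$, $w_e(t)\in\{+w_{e_m},-w_{e_m}\}$. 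Hence the saddle-point choice of initial inputs (and, by extension, the tangent to be followed) lies in the same four-element set that indexes the matrix in Table \ref{tab:matrixgame}.

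Next I would show that the entries of this matrix coincide with the time-to-capture along each of the four candidate trajectories. Algorithm \ref{alg:tangent_time} was specifically constructed to produce $T_{ac}(t)$ from the valid tangent of $\{A_p(t),C_e(t)\}$, and the analogous algorithms for the other three pairs produce $T_{aa}(t),T_{ca}(t),T_{cc}(t)$. Since the pursuer minimizes and the evader maximizes $T_c$, the value $T^*$ of the differential game (which by Theorem \ref{thm:cs-type} and Theorem \ref{thm:one_valid_tangent} is attained on some valid tangent) must equal the $\min$-$\max$ entry of the matrix. Existence of a pure-strategy saddle point in this $2\times 2$ game is then guaranteed by existence of the saddle-point in the underlying differential game (Theorem \ref{thm:guaranteedcapture} and the Hamiltonian separability argument cited after it); otherwise, two distinct PE-pairs would simultaneously attain the min-max and max-min values, and either of them would serve as the coincident tangent.

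To promote this from a one-shot statement at $t=0$ to a genuine feedback law valid for all $t\geq 0$, I would invoke the principle of optimality for saddle-point strategies: the tail of an optimal trajectory starting from $({\bf p}(t),{\bf e}(t))$ is itself optimal for the residual game. As long as the initial-distance hypothesis continues to hold at $t$ (and along a $CS$ saddle trajectory the pursuer-evader distance decreases monotonically while the geometry along the tangent remains of the type already analyzed), Theorem \ref{thm:one_valid_tangent} applies to the residual game, identifying a valid tangent that is the continuation of the one chosen at $t=0$. Applying the matrix-game construction at the new configuration therefore recovers the same tangent, establishing consistency of the feedback law.

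The main obstacle will be handling the feedback-consistency step cleanly, because the matrix game is re-solved at every instant and we must rule out the possibility that the optimal $PE$-pair switches discontinuously during the $C$ (turning) phase of the saddle trajectory; I would address this by observing that along the optimal circular arc the pursuer/evader circles $\{A_p(t),C_p(t),A_e(t),C_e(t)\}$ vary continuously with $t$ and so do the four valid tangents and their capture times $T_{aa}(t),T_{ac}(t),T_{ca}(t),T_{cc}(t)$, so the min-max selection remains constant along the arc; and during the straight-line phase, by Proposition \ref{prop:same_line}, both vehicles are on the coincident tangent and Algorithm \ref{alg:tangent_time} trivially returns the same entry as the global minimum, so the matrix game continues to prescribe the same strategies until capture.
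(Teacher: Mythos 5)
Your proposal is correct and follows essentially the same route as the paper, which in fact states Theorem \ref{thm:matrix_law} without a separate proof, presenting it as a direct consequence of Theorem \ref{thm:one_valid_tangent}, Lemma \ref{lem:valid_tangent}, and the construction of Algorithm \ref{alg:tangent_time} --- exactly the reduction you make (saddle-point trajectories lie on one of the four valid tangents, the matrix entries are the corresponding capture times, and the $\min$-$\max$/$\max$-$\min$ of the matrix selects the tangent). Your added discussion of time-consistency of the selection along the circular and straight-line phases goes a bit beyond what the paper records explicitly, but it fills in the same implicit argument rather than taking a different approach.
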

If the solution is computed at each instant of time $t$ and the input
value corresponding to time instant $t$ i.e. $w_{p}(t)$ and $w_{e}(t)$
is applied then it constitutes a feedback law. Using Theorem \ref{thm:matrix_law},
such a feedback law $\text{F}({\bf x}(t)):{\bf x}(t)\in\mathbb{R}^{6}\rightarrow[{\bf u_{p}^{\top}}(t)\,\,{\bf u_{e}^{\top}}(t)]^{\top}\in\mathbb{R}^{4}$
can be computed and this provides solution for Problem\textcolor{blue}{{}
\ref{prob:implementable_feedback_law}}. For further details the interested
reader can refer to Algorithm 3 in \cite{aditya2019ECC} for details. 

\subsection{Numerical Simulations\label{sec:Numerical_Simulations}}

In \cite{Raivio2000}, an algorithm is proposed to solve pursuit-evasion
games numerically. This is achieved by first solving the $\min$ problem
of the pursuer and then the $\max$ problem of the evader. These optimal
problems are solved iteratively to obtain the $\min-\max$ solution
of the differential game. We use their algorithm to numerically solve
the game of two cars and the $\min$ and the $\max$ problems are
solved using direct numerical optimal control methods \cite{betts}
using IPOPT \cite{Wachter2006}. Clearly, such numerical techniques
are not practical for computing feedback solution in real time due
to time complexity and convergence issues of numerical optimization
methods. However, we do these simulations in order to verify the correctness
of the feedback law proposed in Theorem \ref{thm:matrix_law}.

The parameters for the pursuer and evader used for simulations are
$v_{p_{m}}=2$, $w_{p_{m}}=2$, $v_{e_{m}}=1$ and $w_{e_{m}}=1$.
The simulations were performed using numerical techniques (NT) from
\cite{Raivio2000} as well as the proposed feedback law (FL) for following
initial states of the pursuer and the evader: 
\begin{enumerate}
\item ${\bf {\bf p_{0}}}=[0,0,\pi/2],{\bf e_{0}}=[-3,-6,\pi/2]$ (ML: Figure
\ref{fig:Case1_TangentLaw}, NT: Figure \ref{fig:Case1_Numerical})
\item ${\bf p_{0}}=[0,0,\pi/2],{\bf e_{0}}=[6,3,\pi/2]$ (ML: Figure \ref{fig:Case2_TangentLaw},
NT: Figure \ref{fig:Case2_Numerical})
\item ${\bf p_{0}}=[0,0,\pi/2],{\bf e_{0}}=[0,-6,\pi/2]$ (ML: Figure \ref{fig:Case3_TangentLaw},
NT: Figure \ref{fig:Case3_Numerical})
\item ${\bf p_{0}}=[0,0,\pi/2],{\bf e_{0}}=[0,6,\pi/2+\pi/6]$ (ML: Figure
\ref{fig:Case4_TangentLaw}, NT: Figure \ref{fig:Case4_Numerical})
\end{enumerate}
In all the figures the pursuer trajectory is shown by the dashed curve
whereas the evader trajectory is shown by the dotted curve. The comparison
of matrix law and the numerical simulation show that the trajectories
are identical.

\begin{figure*}
\begin{centering}
\begin{minipage}[t]{0.45\textwidth}%
\begin{center}
\includegraphics[scale=0.3]{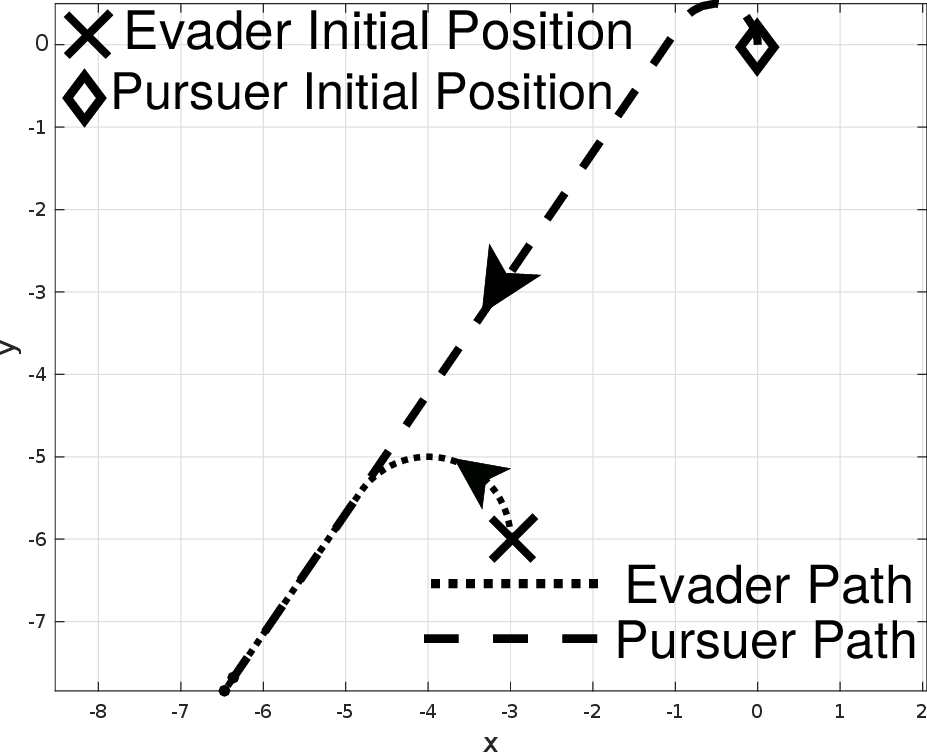}
\par\end{center}
\begin{center}
\caption{\label{fig:Case1_TangentLaw}Feedback law-1}
\par\end{center}%
\end{minipage}\hfill{}%
\begin{minipage}[t]{0.45\textwidth}%
\begin{center}
\includegraphics[scale=0.3]{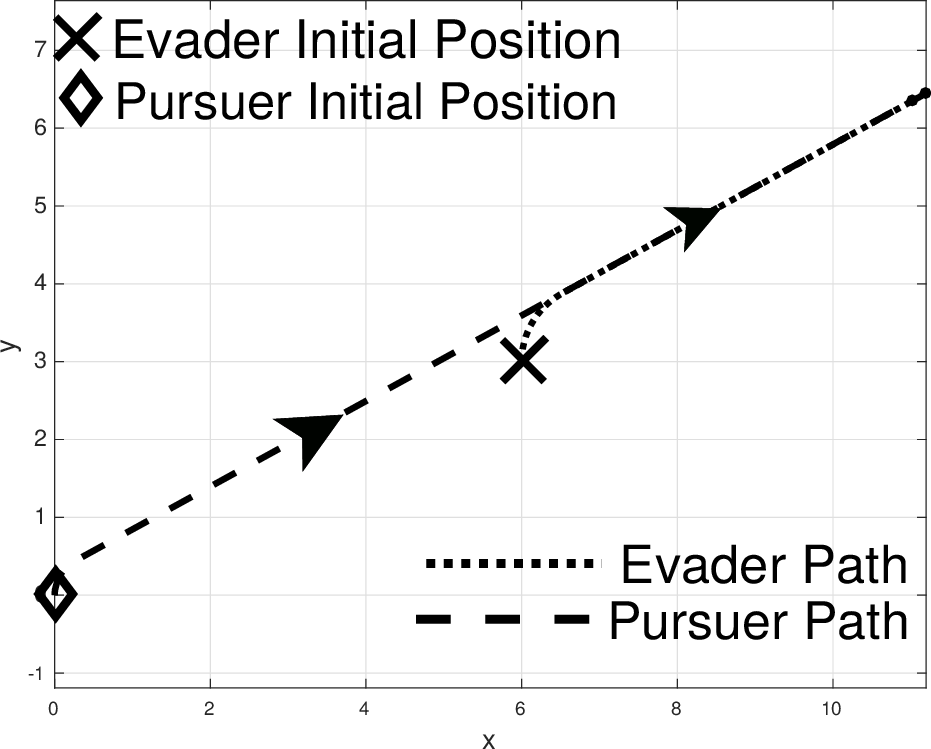}
\par\end{center}
\caption{\label{fig:Case2_TangentLaw}Feedback law-2}
\end{minipage}
\par\end{centering}
\centering{}%
\begin{minipage}[t]{0.45\textwidth}%
\begin{center}
\includegraphics[scale=0.3]{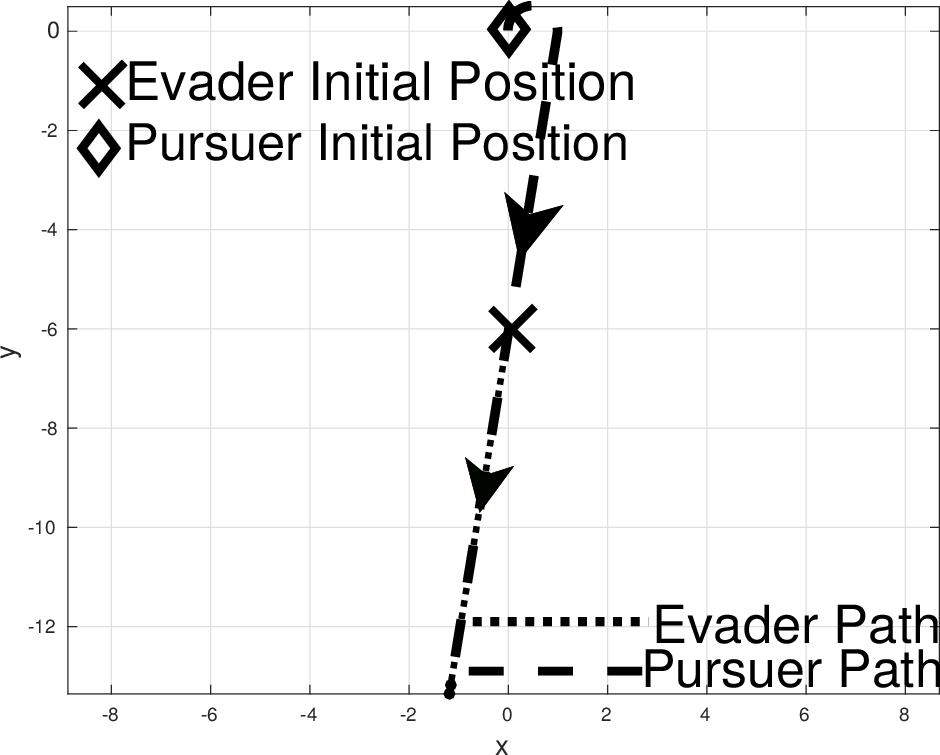}
\par\end{center}
\caption{\label{fig:Case3_TangentLaw}Feedback law-3}
\end{minipage}\hfill{}%
\begin{minipage}[t]{0.45\textwidth}%
\begin{center}
\includegraphics[scale=0.3]{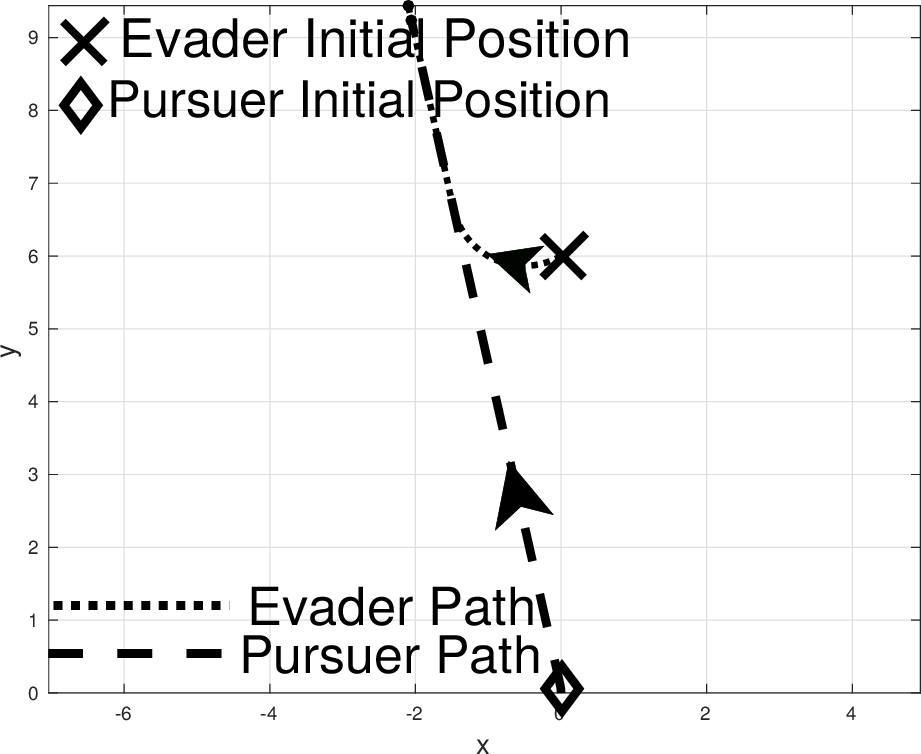}
\par\end{center}
\begin{center}
\caption{\label{fig:Case4_TangentLaw}Feedback law-4}
\par\end{center}%
\end{minipage}
\end{figure*}

\begin{figure*}
\begin{centering}
\begin{minipage}[t]{0.45\textwidth}%
\begin{center}
\includegraphics[scale=0.3]{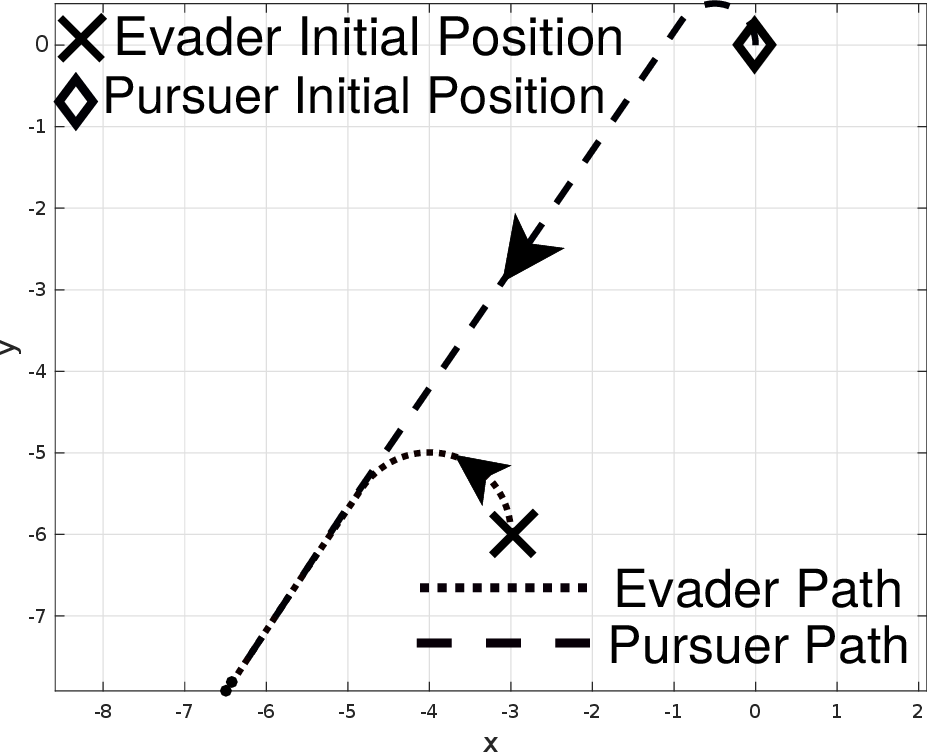}
\par\end{center}
\caption{\label{fig:Case1_Numerical}\cite{Raivio2000}Numerical -1}
\end{minipage}\hfill{}%
\begin{minipage}[t]{0.45\textwidth}%
\begin{center}
\includegraphics[scale=0.3]{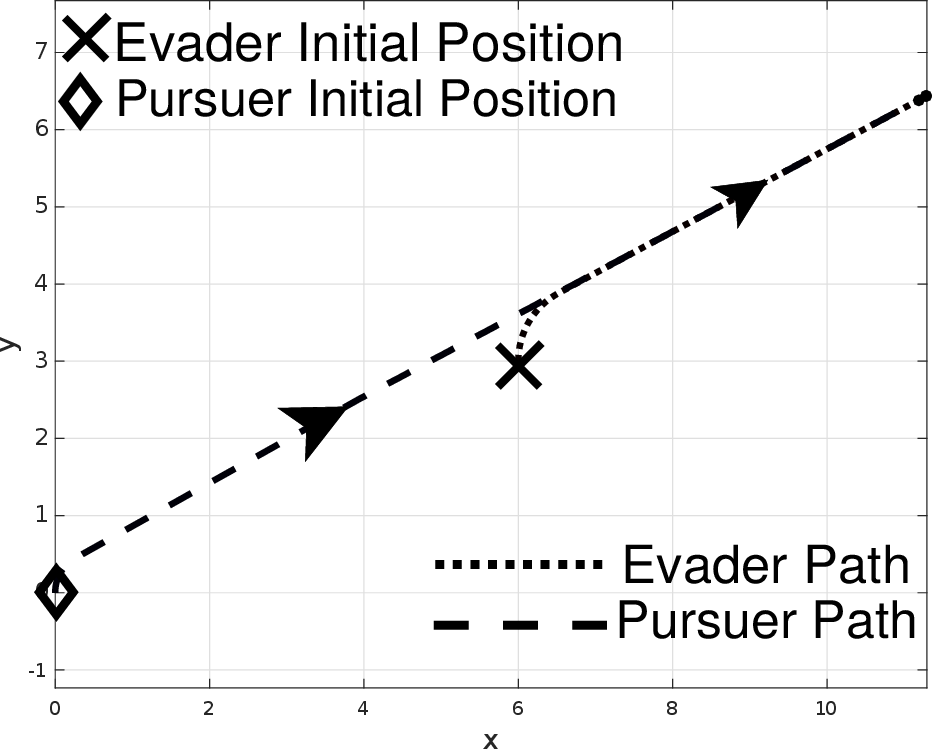}
\par\end{center}
\begin{center}
\caption{\label{fig:Case2_Numerical}\cite{Raivio2000}Numerical -2}
\par\end{center}%
\end{minipage}
\par\end{centering}
\centering{}%
\begin{minipage}[t]{0.45\textwidth}%
\begin{center}
\includegraphics[scale=0.3]{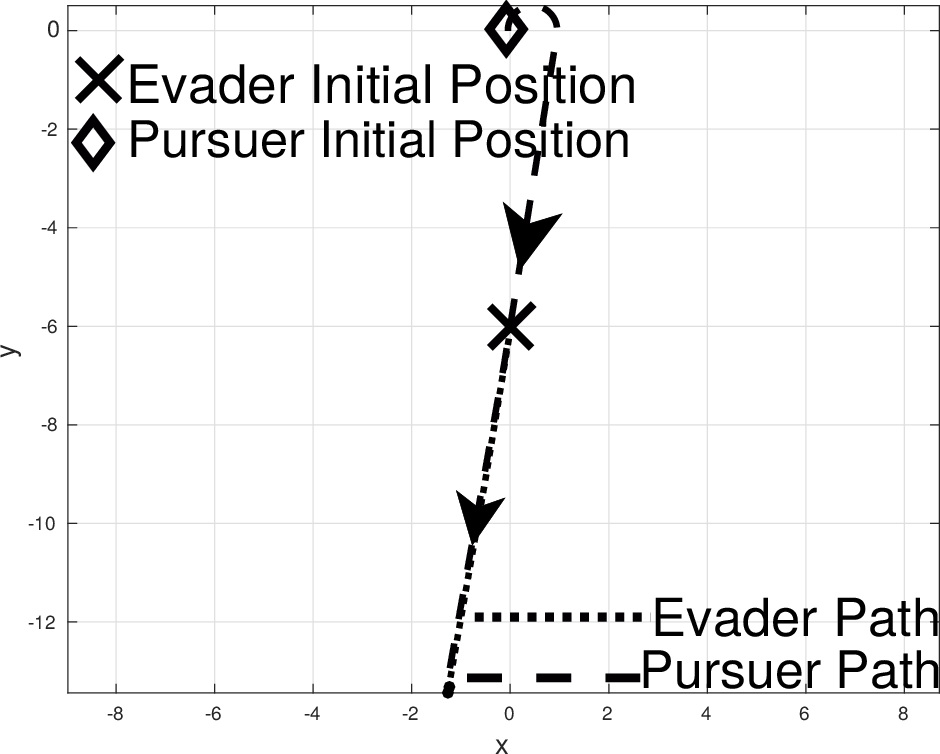}
\par\end{center}
\caption{\label{fig:Case3_Numerical}\cite{Raivio2000}Numerical -3}
\end{minipage}\hfill{}%
\begin{minipage}[t]{0.45\textwidth}%
\begin{center}
\includegraphics[scale=0.3]{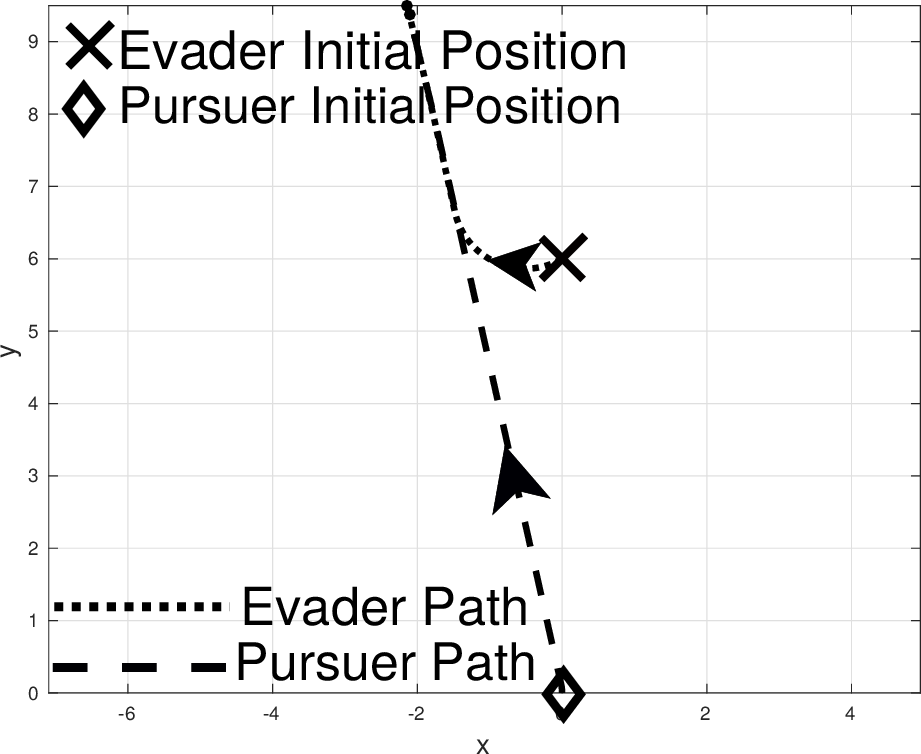}
\par\end{center}
\caption{\label{fig:Case4_Numerical}\cite{Raivio2000}Numerical -4}
\end{minipage}
\end{figure*}

\section{\label{sec:continuity_proofs}Proofs of continuity for specific subsets
of the reachable set of Dubins vehicle}

In this section we discuss the important continuous subsets of the
reachable set of Dubins vehicle necessary to characterize the saddle-point
trajectories for the game of two cars.
\begin{defn}
Let $\overline{{\bf xy}}$ be a continuum set. If a continuum of trajectories
for the curve $\overline{{\bf xy}}$ are of the type $LS$ then we
say that $\overline{{\bf xy}}$ is a \textbf{\textit{$LS$ continuum
set}}. $RS,CS$ continuum sets are defined analogously.
\end{defn}
\begin{lem}
\label{lem:curve_LS_not_continuous}Let $\bar{t}\geq2\pi r_{p}/v_{p_{m}}$
and consider Figure \ref{fig:Left-Reachable-Set}. Any curve $\overline{{\bf xy}}\in R_{p}^{l}({\bf p_{0}},\bar{t})$
such that $\overline{{\bf xy}}$ crosses the line $PA$, is not a
$LS$ continuum set.
\end{lem}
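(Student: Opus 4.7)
My plan is proof by contradiction. Assume $\overline{\mathbf{xy}}$ is an $LS$ continuum set and fix the guaranteed continuous one-one map $\mathfrak{f}^c\in\mathfrak{C}(\overline{\mathbf{xy}},\mathbf{p_0},\bar{t})$ whose image consists of $LS$ trajectories. I would set up coordinates with $\mathbf{p_0}$ at the origin and initial orientation $\theta_0=0$, so that line $PA$ coincides with the $x$-axis and the anticlockwise circle $A_p(0)$ has center $(0,r_p)$. Every $LS$ trajectory in $R_p^l(\mathbf{p_0},\bar{t})$ is parametrized by a pair $(t_1,t)\in[0,T_p]\times[t_1,\bar{t}]$, with $T_p:=2\pi r_p/v_{p_m}$, via formula (\ref{eq:left_reachable_set}); the full range of $t_1$ is available since $\bar{t}\geq T_p$. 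Because this parametrization is continuous in the $L_\infty$ norm on trajectories and two $LS$ trajectories whose switching times differ by more than a constant are $L_\infty$-separated by roughly the diameter $2r_p$ of $A_p(0)$, the induced function $t_1:\overline{\mathbf{xy}}\to[0,T_p]$ inherits continuity from $\mathfrak{f}^c$.

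The core would be a classification of $LS$ endpoints relative to $PA$ read off directly from (\ref{eq:left_reachable_set}). Writing $\alpha:=v_{p_m}w_{p_m}t_1$ and $s:=v_{p_m}(t-t_1)\geq 0$, the endpoint is $(r_p\sin\alpha+s\cos\alpha,\,r_p(1-\cos\alpha)+s\sin\alpha)$. Half-angle identities would then yield two observations: (i) the forward part of the straight-line portion reaches the $x$-axis only for $\alpha\in(\pi,2\pi)$, and at $x=r_p\tan(\alpha/2)<0$, so endpoints on the portion of $PA$ with $x>0$ force $t_1\in\{0,T_p\}$; and (ii) for $\alpha\in(0,\pi]$ the endpoint's $y$-coordinate is bounded below by $r_p(1-\cos\alpha)>0$, so endpoints strictly to the right of $PA$ (with $y<0$) require $t_1>\pi r_p/v_{p_m}$.

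To finish, I would fix a crossing $\mathbf{r}_0\in\overline{\mathbf{xy}}\cap PA$. By (i), $t_1(\mathbf{r}_0)\in\{0,T_p\}$. Since $\overline{\mathbf{xy}}$ crosses $PA$, there is a sequence $\mathbf{r}_n\to\mathbf{r}_0$ in $\overline{\mathbf{xy}}$ strictly to the right of $PA$; by (ii), $t_1(\mathbf{r}_n)>\pi r_p/v_{p_m}$ for all $n$. A mirror analysis near $\alpha=2\pi$ shows that any sequence $\mathbf{r}_n'\to\mathbf{r}_0$ strictly to the left of $PA$ has $t_1(\mathbf{r}_n')$ bounded away from $T_p$. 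Whichever value in $\{0,T_p\}$ equals $t_1(\mathbf{r}_0)$, at least one of the two one-sided sequences forces a jump, contradicting continuity of $t_1$ at $\mathbf{r}_0$ and so destroying the assumption that $\overline{\mathbf{xy}}$ is an $LS$ continuum set.

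The main obstacle will be making the classification (i)--(ii) airtight, in particular handling the degenerate full-loop case $t_1=T_p$ where an $LS$-trajectory endpoint does live on segment $PA$: perturbing $t_1$ slightly off $T_p$ must keep the endpoint strictly on one fixed side of $PA$, so that near a crossing with $t_1(\mathbf{r}_0)=T_p$ only one side is locally reachable. The $L_\infty$-separation estimate that underlies the continuity of $t_1$ also requires care, leveraging the fact that when one $LS$ trajectory is nearly back at $\mathbf{p_0}$ after a full loop while another has barely curved, the two differ in position by essentially $2r_p$ at the intermediate half-loop time.
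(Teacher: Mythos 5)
Your proposal is correct and is essentially the paper's own argument: the paper likewise observes that points on one side of $PA$ are reachable only by $LS$ trajectories with switching time near $0$ (at most $\pi r_{p}/v_{p_{m}}$) while points on the other side require switching time near $2\pi r_{p}/v_{p_{m}}$ (at least $(3\pi/2)r_{p}/v_{p_{m}}$), so no continuous selection of $LS$ trajectories can exist across the crossing. Your version merely makes explicit two steps the paper leaves implicit -- the half-angle classification of endpoints relative to $PA$ and the fact that $\mathcal{L}_{\infty}$-continuity of $\mathfrak{f}^{c}$ forces continuity of the induced switching-time map.
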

\begin{IEEEproof}
Recall that the switching time $t_{1}$ for $LS$ type trajectory
is less than $2\pi r_{p}/v_{p_{m}}$. The portion of the curve $\overline{{\bf xy}}$
on the left of $PA$ can be reached by $LS$ trajectories which have
switching time $t_{1}\leq\pi r_{p}/v_{p_{m}}$ while the portion on
right of $PA$ can be reached by curves with switching time $t_{1}\geq(3\pi/2)r_{p}/v_{p_{m}}$.
Hence, there do not exist $LS$ trajectories which can make $\overline{{\bf xy}}$
a continuum set. 
\end{IEEEproof}
\begin{lem}
\label{lem:curve_LS}Let $\bar{t}\geq2\pi r_{p}/v_{p_{m}}$ and consider
Figure \ref{fig:Continuum-of-Trajectories}. Any curve $\overline{{\bf xy}}\in R_{p}^{l}({\bf p_{0}},\bar{t})\backslash PA$
is a $LS$ continuum set.
\end{lem}
\begin{IEEEproof}
Recall that the switching time $t_{1}$ for a $LS$ trajectory is
less than $2\pi r_{p}/v_{p_{m}}$. Consider a curve $\overline{{\bf xy}}$
in the reachable set of the kinematic point, between points ${\bf x},{\bf y}\in R_{p}^{l}({\bf p_{0}},\bar{t})$.
Let ${\bf p_{x}}$ be a time optimal trajectory given by (\ref{eq:left_reachable_set})
and ${\bf x}={\bf p_{x}}_{|\mathbb{R}^{2}}(t)=[x_{f}(t)\ y_{f}(t)]^{\top}$
for some $t\leq\bar{t}$. Let ${\bf x'}$ be a point in neighborhood
of ${\bf x}$ such that ${\bf x'}={\bf x}+\delta{\bf x}$, ${\bf x'}\in R_{p}^{l}({\bf p_{0}},\bar{t})$
and ${\bf x'}={\bf p'_{x}}_{|\mathbb{R}^{2}}(t')=[x_{f}(t')\ y_{f}(t')]^{\top}$
for some $t'\leq\bar{t}$. Now define a function say ${\bf \bar{g}}:\mathbb{R}^{2}\rightarrow\mathbb{R}^{2}$
given by (\ref{eq:left_reachable_set}). The domain of the function
is $t\times\theta\in\mathbb{R}^{2}$ where, $t\in[0,\bar{t}]$ and
$\theta\in[0,2\pi]$ and the range is the set of points in the reachable
set $R_{p}^{l}({\bf p_{0}},\bar{t})$. Now, the Jacobian of the function
is,
\[
J=\left[\begin{array}{cc}
v_{p_{m}}\cos\tilde{\theta} & \ \ -v_{p_{m}}^{2}w_{p_{m}}(t-t_{1})\sin\tilde{\theta}\\
v_{p_{m}}\sin\tilde{\theta} & \ \ v_{p_{m}}^{2}w_{p_{m}}(t-t_{1})\cos\tilde{\theta}
\end{array}\right]
\]
Since (\ref{eq:left_reachable_set}) is defined such that $t>t_{1}>0$,
$J$ is non-singular. Hence, by the inverse function theorem there
exists an inverse map ${\bf \bar{g}}^{-1}:\mathbb{R}^{2}\rightarrow\mathbb{R}^{2}$
such that ${\bf \bar{g}}^{-1}$ is continuous. Thus, as $||\delta{\bf x}||_{2}\rightarrow0$
we have $||[\delta t\ \delta\theta]'||_{2}\rightarrow0$ and thus
$||{\bf p_{x}}_{|\mathbb{R}^{2}}-{\bf p'}_{|\mathbb{R}^{2}}||_{\mathcal{L_{\infty}}}\rightarrow0$.
Thus, for any two nearby points the trajectories will also be nearby. 

Define a function $\mathfrak{f}$ such that it matches each point
${\bf x}$ to the time-optimal trajectory given by (\ref{eq:left_reachable_set})
and ${\bf x}={\bf p_{x}}_{|\mathbb{R}^{2}}(t_{a})=[x_{fl}(t_{a})\ y_{fl}(t_{a})]$
for some $t_{a}\leq\bar{t}$. Thus the function $\mathfrak{f}$ would
be a one-one function by definition. Also, the function is also continuous
by the analysis presented above ($||\delta{\bf x}||_{2}\rightarrow0\Rightarrow||{\bf p_{x}}_{|\mathbb{R}^{2}}-{\bf p'}_{|\mathbb{R}^{2}}||_{\mathcal{L_{\infty}}}\rightarrow0$).
Thus $\overline{{\bf xy}}$ is a continuum set.
\end{IEEEproof}
The following lemma is analogous to Lemma \ref{lem:curve_LS}.
\begin{lem}
\label{lem:curve_RS}Let $\bar{t}\geq2\pi r_{p}/v_{p_{m}}$. Any curve
$\overline{{\bf xy}}\in R_{p}^{r}({\bf p_{0}},\bar{t})\backslash PA$
is a $RS$ continuum set.
\end{lem}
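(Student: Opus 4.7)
The plan is to mirror the proof of Lemma \ref{lem:curve_LS} essentially verbatim, exploiting the symmetry between the left and right reachable sets: the right reachable set $R_p^r({\bf p_0},\bar{t})$ is generated by trajectories that first apply $w_p(t)=-w_{p_m}$ on $[0,t_1]$ (traversing the clockwise circle $C_p(0)$) and then $w_p(t)=0$ on $(t_1,\bar{t}]$ (traversing a tangent line), which is just the $LS$ construction with the sign of the angular input reversed.

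First I would write down the $RS$-analog of the parametrization (\ref{eq:left_reachable_set}), obtained by integrating (\ref{eq:pursuer_evader_dyn}) with the negative angular input; the result is a pair of functions $x_{fr}(\bar t),y_{fr}(\bar t)$ depending smoothly on $(t_1,\bar t)$. These are continuously differentiable in $t_1$ for $0\leq t_1\leq 2\pi r_p/v_{p_m}$ and in $\bar t\geq t_1$, exactly as in the $LS$ case. I would also remark that an analog of Lemma \ref{lem:curve_LS_not_continuous} holds for $RS$ trajectories: any curve in $R_p^r({\bf p_0},\bar t)$ that crosses the line $PA$ (as in Figure \ref{fig:Right-Reachable-Set}) cannot be $RS$-continuous, because such a curve would require switching times from two disjoint sub-intervals of $[0,2\pi r_p/v_{p_m}]$.

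Next, fix $\overline{{\bf xy}}\in R_p^r({\bf p_0},\bar t)\setminus PA$ and define the map $\mathfrak{f}:\overline{{\bf xy}}\to\mathfrak{L}(\overline{{\bf xy}},{\bf p_0},\bar t)$ by sending each point ${\bf r}\in\overline{{\bf xy}}$ to the $RS$ trajectory ${\bf p_r}$ with ${\bf p_r}_{|\mathbb{R}^2}(t_a)={\bf r}$ for some $t_a\leq\bar t$. Because $\overline{{\bf xy}}$ does not cross $PA$, the preceding remark guarantees that the required switching time $t_1$ lies in a connected sub-interval of $[0,2\pi r_p/v_{p_m}]$, so $\mathfrak{f}({\bf r})$ is unambiguously defined, making $\mathfrak{f}$ one-one. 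For continuity, take ${\bf x}\in\overline{{\bf xy}}$ reached by the $RS$ trajectory ${\bf p_x}$ at time $t_a$ with switching time $t_1$, and a perturbation ${\bf x}'={\bf x}+\delta{\bf x}\in\overline{{\bf xy}}$; by the smoothness of the $RS$-parametrization, ${\bf x}'$ is reached by an $RS$ trajectory ${\bf p_{x'}}$ with parameters $(t_a+\delta t_a,\, t_1+\delta t_1)$ where $\delta t_a,\delta t_1\to 0$ as $\delta{\bf x}\to 0$, which forces $\|{\bf p_x}-{\bf p_{x'}}\|_{\mathcal L_\infty}\to 0$. Hence $\mathfrak{f}=\mathfrak{f}^c$ is a continuous one-one map in $\mathfrak{C}(\overline{{\bf xy}},{\bf p_0},\bar t)$, so by Definition \ref{def:continuum_set} the curve $\overline{{\bf xy}}$ is an $RS$ continuum set.

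The only step that needs real attention is the local invertibility claim, namely that small position perturbations $\delta{\bf x}$ along $\overline{{\bf xy}}$ correspond to small parameter perturbations $(\delta t_a,\delta t_1)$. This follows from the implicit function theorem applied to the $RS$-analog of (\ref{eq:left_reachable_set}), provided the Jacobian with respect to $(t_a,t_1)$ is nonsingular; the exclusion of the line $PA$ is precisely what rules out the degenerate configurations where this Jacobian could fail to be invertible, mirroring the role played by the same exclusion in Lemma \ref{lem:curve_LS}. Since no new geometric or analytic ideas are needed beyond those used in the $LS$ case, the proof reduces to this symmetric translation.
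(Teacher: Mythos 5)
Your proposal is correct and follows essentially the same route as the paper, which in fact gives no separate argument for this lemma and simply declares it analogous to Lemma \ref{lem:curve_LS}; your symmetric translation (reversing the sign of $w_p$, re-deriving the $RS$ parametrization, and repeating the one-one/continuity argument) is exactly the intended proof. Your added remark about the implicit function theorem and the nondegeneracy away from $PA$ is a reasonable tightening of a step the paper leaves informal, but it does not change the approach.
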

Next, we define a series of special continuous subsets of reachable
sets, which will be useful to characterize the saddle point capture
condition in Section \ref{subsec:lcse} and Section \ref{subsec:ccsp}.
\begin{defn}
Let $\bar{t}\geq2\pi r_{p}/v_{p_{m}}$. Then, the \textbf{\textit{central
reachable set}} of the pursuer (see Figure \ref{fig:Central-Reachable-Region})
is defined as $R_{p}^{CR}({\bf p_{0}},\bar{t})=R_{p}({\bf p_{0}},\bar{t})\backslash\{\text{int}(C_{p}(0))\cup\text{int}(A_{p}(0))\cup\overline{PB}\}$,
where $\text{int}(C_{p}(0))$ and $\text{int}(A_{p}(0))$ denote the
interiors of pursuer circles at time $t=0$ and $\overline{PB}=\{PB\backslash\{P\}\}$.

The central reachable set of the evader $R_{e}^{CR}({\bf e_{0}},\bar{t})$
is defined analogously for $\bar{t}\geq2\pi r_{e}/v_{e_{m}}$. The
central reachable set is shown in Figure \ref{fig:Central-Reachable-Region}.
The points on the line segment $\overline{PB}$ and the area inside
the minimum turning radius circles do not form a part of central reachable
set.
\end{defn}
\begin{lem}
\label{lem:CRS_continuous}The central reachable set of pursuer/evader
is a continuous subset for all time $t\geq2\pi\frac{r_{p}}{v_{p_{m}}}$
($t\geq2\pi\frac{r_{e}}{v_{e_{m}}}$ ). 
\end{lem}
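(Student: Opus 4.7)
The plan is to verify directly the two defining conditions of a continuous subset for $R_p^{CR}({\bf p_0},\bar{t})$, leveraging Lemmas \ref{lem:curve_LS} and \ref{lem:curve_RS} which already supply the continuum structure on the halves $R_p^l({\bf p_0},\bar{t})\setminus PA$ and $R_p^r({\bf p_0},\bar{t})\setminus PA$. The key geometric observation that drives the proof is that the explicit removal of the segment $\overline{PB}$ (in addition to $\mathrm{int}(A_p(0))$ and $\mathrm{int}(C_p(0))$) is precisely what guarantees that any curve living in $R_p^{CR}$ cannot straddle the line that causes the failure of continuity in Lemma \ref{lem:curve_LS_not_continuous}. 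The evader case is identical after replacing $r_p/v_{p_m}$ with $r_e/v_{e_m}$, so I would prove only the pursuer case.

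First, I would verify condition (1): for any ${\bf x},{\bf y}\in R_p^{CR}$ and any curve $\overline{{\bf xy}}\subseteq R_p^{CR}$, the curve is a continuum set. Since $\overline{{\bf xy}}$ is connected and disjoint from $\overline{PB}$, it lies entirely on one side of $\overline{PB}$; by the definition of $R_p^{CR}$ and the description of the left and right reachable sets, this places $\overline{{\bf xy}}$ either inside $R_p^l({\bf p_0},\bar{t})\setminus PA$ or inside $R_p^r({\bf p_0},\bar{t})\setminus PA$. In either case, Lemma \ref{lem:curve_LS} or Lemma \ref{lem:curve_RS} produces directly the required continuous one-one map $\mathfrak{f}^c$ (of type $LS$ or $RS$ respectively), and $\overline{{\bf xy}}$ is a continuum set.

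Next, I would verify condition (2): every ${\bf r}\in R_p^{CR}$ is reachable by a trajectory that remains in $R_p^{CR}$ at every time $t\le\bar{t}$. For ${\bf r}$ on the left of $\overline{PB}$, I would use the $LS$ trajectory parameterized in (\ref{eq:left_reachable_set}) with switching time $t_1\le 2\pi r_p/v_{p_m}$ chosen so that $[x_{fl}(\tilde t),y_{fl}(\tilde t)]^\top={\bf r}$ for some $\tilde t\le\bar{t}$. The circular arc portion of such a trajectory traces $A_p(0)$ itself, which is the boundary (not interior) of the turning disk, and hence lies in $R_p^{CR}$. For the straight tangent portion, the tangent ray leaves $A_p(0)$ outwards, so it cannot re-enter $\mathrm{int}(A_p(0))$, and because $A_p(0)$ and $C_p(0)$ have equal radii and touch externally only at $P$, the outgoing tangent also cannot enter $\mathrm{int}(C_p(0))$. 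Finally, since ${\bf r}$ lies on the left of $\overline{PB}$ and since the map (\ref{eq:left_reachable_set}) sends left-half endpoints to tangent rays that never cross $\overline{PB}$ before reaching their endpoints, the entire trajectory stays in the left half of $R_p^{CR}$. The case ${\bf r}$ to the right of $\overline{PB}$ is symmetric, using an $RS$ trajectory.

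I expect the main obstacle to be the last geometric claim in the preceding paragraph: that for every admissible switching time $t_1\in[0,2\pi r_p/v_{p_m}]$, the tangent ray leaving $A_p(0)$ and terminating on the left of $\overline{PB}$ does not sweep across $\overline{PB}$ along the way. This will require a monotonicity argument in $t_1$ on the parametric formulas (\ref{eq:left_reachable_set}) (essentially tracking the angle the tangent makes with $\overline{PB}$), together with the fact that the tangent direction at time $t_1$ is orthogonal to the radius of $A_p(0)$ at the switching point. Once this purely geometric fact is in place, the remainder of the proof is a direct invocation of the earlier lemmas and of the definition of a continuous subset; a brief remark that $R_p^{CR}$ is connected (clear from its description as the reachable set with an internal segment and the interiors of two tangent disks removed, which still leaves each half connected and the two halves meeting at $B$ in the definition $\overline{PB}=PB\setminus\{P\}$ keeps $B$ in the set) closes out the verification.
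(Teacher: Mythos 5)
There is a genuine gap in your verification of condition (1), and it sits exactly where the real content of the lemma lies. The removed segment $\overline{PB}$ is only the portion of the line of initial heading that lies \emph{behind} the pursuer (from $P$ back to $B$); the forward ray $PD$ is retained in $R_{p}^{CR}({\bf p_{0}},\bar{t})$ (it is tangent to both circles at $P$ only and is disjoint from $\overline{PB}$). Consequently a connected curve $\overline{{\bf xy}}\subseteq R_{p}^{CR}$ that misses $\overline{PB}$ can perfectly well cross the forward ray $PD$, and such a curve is contained in \emph{neither} $R_{p}^{l}({\bf p_{0}},\bar{t})\backslash PA$ nor $R_{p}^{r}({\bf p_{0}},\bar{t})\backslash PA$ (it crosses the very line $PA$ excluded in Lemmas \ref{lem:curve_LS} and \ref{lem:curve_RS}, and by Lemma \ref{lem:curve_LS_not_continuous} it is not an $LS$ continuum set). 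Your claim that disjointness from $\overline{PB}$ places the curve on one side is therefore false -- a segment does not separate the set, and the forward crossing is not ruled out. The paper's proof is devoted precisely to this case: it splits such a curve ${\bf x}-{\bf z}-{\bf y}$ at its crossing point ${\bf z}\in PD$, uses an $LS$ continuum ${\bf T^{l}}$ on the left piece and an $RS$ continuum ${\bf T^{r}}$ on the right piece, and observes that both assign to ${\bf z}$ the \emph{same} trajectory (the straight line $PD$, i.e.\ switching time $t_{1}=0$ for either family), so that ${\bf T^{l}}\cup{\bf T^{r}}$ is a continuous one--one map on the whole curve. This gluing is what distinguishes the central reachable set from the truncated left/right reachable sets, and it is absent from your argument.

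Two smaller points. First, in your condition (2) the assertion that an outgoing tangent ray of $A_{p}(0)$ cannot enter $\text{int}(C_{p}(0))$ is not true in general: for switching angles close to a full revolution the straight portion of the $LS$ trajectory does pass through $\text{int}(C_{p}(0))$ (equal radii and external tangency at $P$ do not prevent this). This is repairable -- in the correct decomposition the left half is served by $LS$ trajectories with switching angle at most roughly $\pi$, and the offending trajectories are never needed -- but the claim as stated is a false step. Second, $\overline{PB}=PB\backslash\{P\}$ retains $P$ and removes $B$, not the other way around as you assert at the end; the two halves of $R_{p}^{CR}$ communicate through the forward ray $PD$, not through $B$.
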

\begin{IEEEproof}
Consider Figure \ref{fig:Central-Reachable-Region}. Let $D_{l}$
denote the set of points enclosed by the curve $DABPD$ and $D_{r}$
be the set of points enclosed by the curve $DCBPD$. Any curve contained
completely in the set $D_{l}\backslash\left(\overline{PB}\cup\text{int}(A_{p}(0))\right)\subseteq R_{p}^{l}({\bf p_{0}},\bar{t})$
is a continuum set by Lemma \ref{lem:curve_LS}. Also, any curve contained
completely in the set $D_{r}\backslash\left(\overline{PB}\cup\text{int}(A_{p}(0))\right)$\\$\subseteq R_{p}^{r}({\bf p_{0}},\bar{t})$
is a continuum set by Lemma \ref{lem:curve_RS}. Thus we need to consider
only those type of curves which cross line $DP$ such as the curve
${\bf x-z}-{\bf y}$ shown in Figure \ref{fig:Central-Reachable-Region}.
The curve $\overline{{\bf xz}}$ in Figure \ref{fig:Central-Reachable-Region}
is a $LS$ continuum set. Let this continuum set of trajectories be
denoted by ${\bf T^{l}}$. Similarly, any curve $\overline{{\bf zy}}$
in the subset $DCBPD\subseteq R_{p}^{r}({\bf p_{0}},\bar{t})$ as
shown in Figure \ref{fig:Central-Reachable-Region} is a continuous
subset with $RS$ type of trajectories. Let this continuum of trajectories
be denoted by ${\bf T^{r}}$. Now, the trajectory in ${\bf T^{l}}$
and ${\bf T^{r}}$ which reaches the point ${\bf z}$ is the same
trajectory along the straight line $PD$. Thus, ${\bf T^{l}}\cup{\bf T^{r}}$
forms a continuum of trajectories for the curve ${\bf x-z}-{\bf y}$
and hence it is a continuum set. Thus any curve in the central reachable
set is a continuum set with $CS$ type of trajectories. Further, the
continuum of trajectories ${\bf T^{l}}\cup{\bf T^{r}}$ is completely
contained in $R_{p}^{CR}({\bf p_{0}},\bar{t})$. Hence, the central
reachable set it is a continuous subset.
\end{IEEEproof}
\begin{figure*}
\begin{minipage}[t]{0.3\textwidth}%
\begin{center}
\includegraphics[scale=0.17]{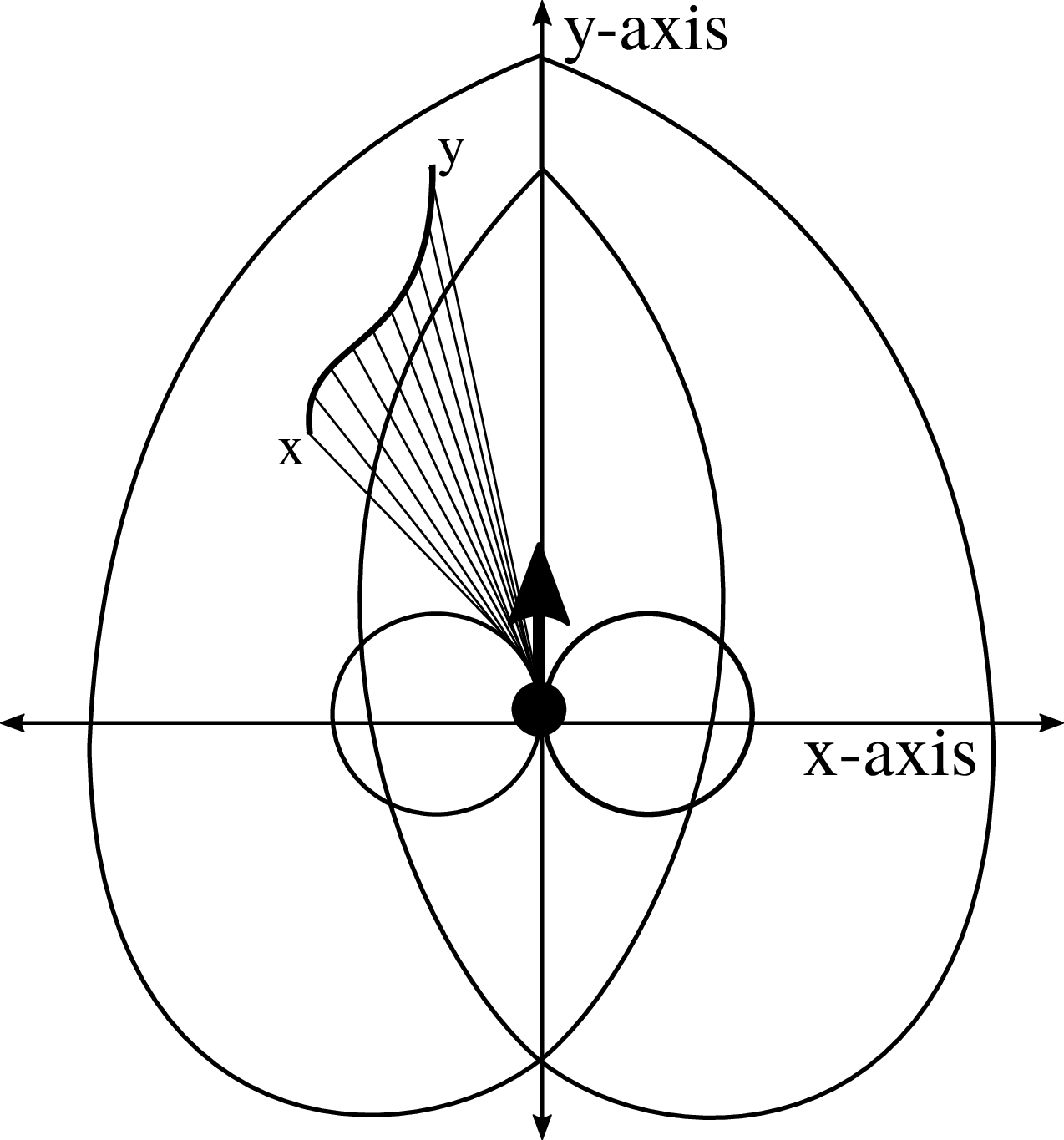}
\par\end{center}
\caption{\label{fig:Continuum-of-Trajectories}Continuum of trajectories}
\end{minipage}\hfill{}%
\begin{minipage}[t]{0.3\textwidth}%
\begin{center}
\includegraphics[scale=0.17]{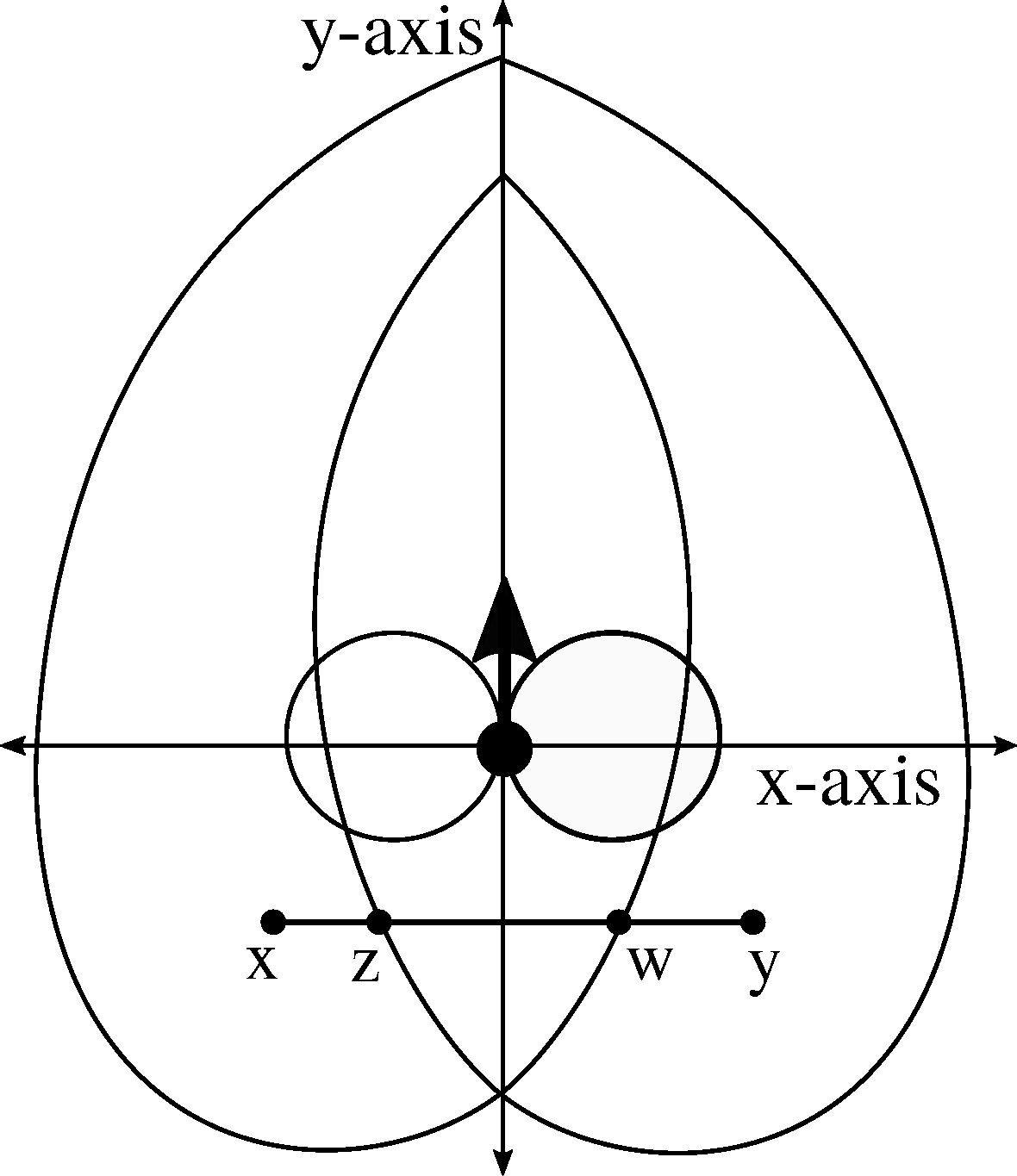}
\par\end{center}
\caption{\label{fig:unoin_not_continuous}$R_{p}^{r}({\bf p_{0}},\bar{t})\cup R_{p}^{l}({\bf p_{0}},\bar{t})$
is not a $CSP$.}
\end{minipage}\hfill{}%
\begin{minipage}[t]{0.3\textwidth}%
\begin{center}
\includegraphics[scale=0.17]{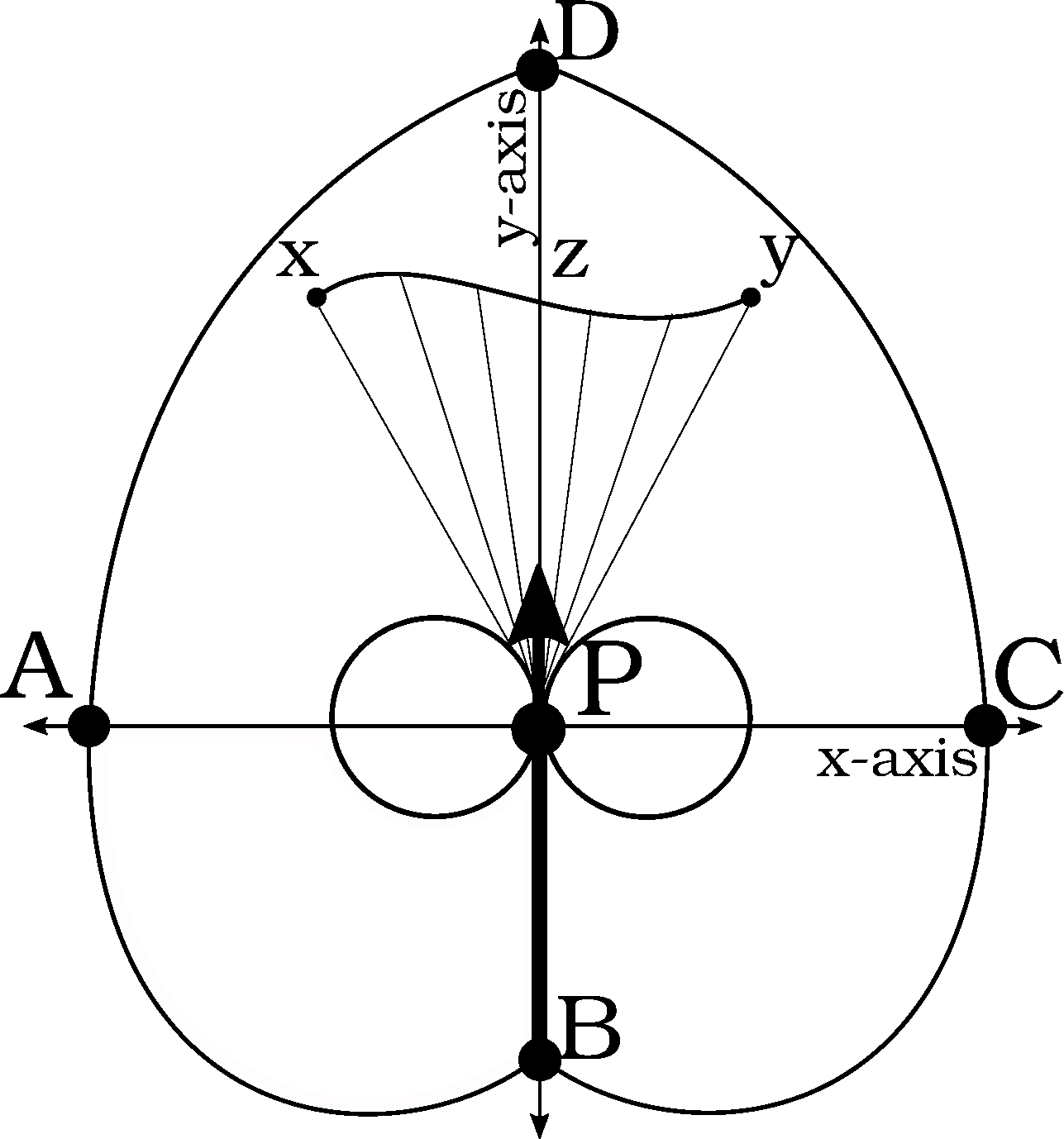}
\par\end{center}
\caption{\label{fig:Central-Reachable-Region}Central reachable set is a continuous
subset}
\end{minipage}
\end{figure*}

\begin{defn}
\label{def:truncated_sets}Consider Figures \ref{fig:Left-Reachable-Set}
and \ref{fig:Right-Reachable-Set} and let $\bar{t}\geq2\pi r_{p}/v_{p_{m}}$.
The \textbf{\textit{truncated left reachable set}} is defined as $R_{p}^{l_{t}}({\bf p_{0}},\bar{t}):=R_{p}^{l}({\bf p_{0}},\bar{t})\backslash PA$.
Similarly, the \textbf{\textit{truncated right reachable set}} is
defined as $R_{p}^{r_{t}}({\bf p_{0}},\bar{t})=R_{p}^{r}({\bf p_{0}},\bar{t})\backslash PA$. 
\end{defn}
The proofs of Lemma \ref{lem:truncated_left_continuous} and Lemma
\ref{lem:truncated_right_continuous} follow from Lemma \ref{lem:curve_LS}
and Lemma \ref{lem:curve_RS}. 
\begin{lem}
\label{lem:truncated_left_continuous}Pursuer's (Evader's) truncated
left reachable set $R_{p}^{l_{t}}({\bf p_{0}},\bar{t})$ ($R_{e}^{l_{t}}({\bf e_{0}},\bar{t})$)
is a continuous subset for all $t\geq2\pi\frac{r_{p}}{v_{p_{m}}}$
($t\geq2\pi\frac{r_{e}}{v_{e_{m}}}$ ).
\end{lem}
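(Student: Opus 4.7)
The plan is to verify the two conditions in the definition of a continuous subset, mirroring the structure of the proof of Lemma \ref{lem:CRS_continuous} but in a considerably simpler setting since we never have to combine $LS$ and $RS$ trajectories. I will only write out the pursuer version; the evader case is formally identical with $r_e, v_{e_m}$ in place of $r_p, v_{p_m}$.

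\textbf{Step 1: Condition 1 via Lemma \ref{lem:curve_LS}.} Fix any two points ${\bf x}, {\bf y} \in R_p^{l_t}({\bf p_0}, \bar t)$ and any curve $\overline{{\bf x}{\bf y}}$ joining them that lies entirely in $R_p^{l_t}({\bf p_0}, \bar t)$. By Definition \ref{def:truncated_sets} we have $R_p^{l_t}({\bf p_0}, \bar t) = R_p^l({\bf p_0}, \bar t) \setminus PA$, so the curve does not cross $PA$. Lemma \ref{lem:curve_LS} then directly gives that $\overline{{\bf x}{\bf y}}$ is an $LS$ continuum set, establishing the first condition.

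\textbf{Step 2: Condition 2 via explicit parameterization.} Given any ${\bf r}\in R_p^{l_t}({\bf p_0},\bar t)$, I would pick the $LS$ trajectory ${\bf p}$ reaching ${\bf r}$ described by (\ref{eq:left_reachable_set}), with some switching time $t_1 < 2\pi r_p/v_{p_m}$ and some arrival time $\tilde t\leq\bar t$. The trajectory consists of an arc on $A_p(0)$ for $t\in[0,t_1]$ followed by a straight segment tangent to $A_p(0)$ for $t\in(t_1,\tilde t]$. I need to show every point of this trajectory (except possibly the start $P$) lies in $R_p^{l_t}$. The circular arc lies on $A_p(0)\subset R_p^l$, and touches $PA$ only at $P$ (since $A_p(0)$ meets the line $PA$ at one endpoint of the arc by the geometry of the construction in Figure \ref{fig:Left-Reachable-Set}). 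The tangent segment lies on one fixed side of $PA$, determined by whether $t_1<\pi r_p/v_{p_m}$ or $t_1>(3\pi/2)r_p/v_{p_m}$, as observed in the proof of Lemma \ref{lem:curve_LS_not_continuous}; since ${\bf r}\notin PA$, the segment does not hit $PA$. Hence ${\bf p}(t)\in R_p^{l_t}({\bf p_0},\bar t)$ for $t\in(0,\tilde t]$, and the second condition holds.

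\textbf{Main obstacle.} The only delicate point is the behavior of the trajectory near the initial instant: the starting point $P$ sits on $PA$, so strictly speaking ${\bf p}(0)\notin R_p^{l_t}$. I would handle this by interpreting the "trajectory stays in $R_p^{l_t}$" clause of the definition modulo the initial point ${\bf p}_0$ (as is implicitly done in Lemma \ref{lem:CRS_continuous}, where ${\bf p}_0 = P$ lies on the excluded segment $\overline{PB}$ as well). With that convention, the above construction gives a trajectory of the required type and completes the verification.
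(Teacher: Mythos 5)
Your proof is correct and takes essentially the same route as the paper, whose entire argument is the one-line assertion (following Definition \ref{def:truncated_sets}) that the result ``follows from Lemma \ref{lem:curve_LS} and Lemma \ref{lem:curve_RS}''; your Step 1 is exactly that appeal. Your Step 2, and your observation that ${\bf p}(0)=P$ lies on the excised segment $PA$ so that the second condition of the definition only holds modulo the initial point, supply details the paper omits entirely, and your resolution by convention is a reasonable reading of the authors' intent.
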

\begin{lem}
\label{lem:truncated_right_continuous}Pursuer's (Evader's) truncated
right reachable set $R_{p}^{r_{t}}({\bf p_{0}},\bar{t})$ ($R_{e}^{r_{t}}({\bf e_{0}},\bar{t})$)
is a continuous subset for all $t\geq2\pi\frac{r_{p}}{v_{p_{m}}}$
($t\geq2\pi\frac{r_{e}}{v_{e_{m}}}$ ).
\end{lem}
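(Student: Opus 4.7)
The plan is to verify directly that $R_p^{r_t}(\mathbf{p_0}, \bar{t})$ satisfies both conditions of the definition of a continuous subset, relying almost entirely on Lemma \ref{lem:curve_RS} as the engine, and to do so in exact parallel with the proof of Lemma \ref{lem:truncated_left_continuous}. By Definition \ref{def:truncated_sets}, $R_p^{r_t}(\mathbf{p_0}, \bar{t}) = R_p^{r}(\mathbf{p_0}, \bar{t})\setminus PA$, so every curve $\overline{\mathbf{xy}}$ with endpoints in $R_p^{r_t}(\mathbf{p_0}, \bar{t})$ and lying inside $R_p^{r_t}(\mathbf{p_0}, \bar{t})$ is automatically a curve in $R_p^{r}(\mathbf{p_0}, \bar{t})\setminus PA$. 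Lemma \ref{lem:curve_RS} then asserts directly that such a curve is an $RS$ continuum set. This disposes of condition (1).

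For condition (2), I would take an arbitrary $\mathbf{r}\in R_p^{r_t}(\mathbf{p_0},\bar t)$ and produce an $RS$-type trajectory reaching $\mathbf{r}$ at some time $\tilde t\le \bar t$ whose graph stays inside $R_p^{r_t}(\mathbf{p_0},\bar t)$. Since $\mathbf{r}$ belongs to the right reachable set, there exists a switching time $t_1\in[0,2\pi r_p/v_{p_m})$ and a terminal time $\tilde t\le \bar t$ such that the trajectory traverses the clockwise circle $C_p(0)$ on $[0,t_1]$ and then the tangent line on $[t_1,\tilde t]$, ending at $\mathbf{r}$. The arc on $C_p(0)$ lies in the right reachable set and, because $t_1<2\pi r_p/v_{p_m}$, it does not sweep the excluded segment $PA$ (which corresponds to the degenerate trajectory that completes a full circle). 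The subsequent tangent segment is an $RS$ trajectory emanating at an interior switching time, and by the same parameterization argument used in Lemma \ref{lem:curve_RS}, it remains in $R_p^{r}(\mathbf{p_0},\bar t)\setminus PA$. Hence the whole trajectory lies in $R_p^{r_t}(\mathbf{p_0},\bar t)$.

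Finally, I would check connectedness: $R_p^{r_t}(\mathbf{p_0},\bar t)$ is path-connected because any two of its points can be joined by first tracing back their respective $RS$ trajectories to the initial point $\mathbf{p_0}$ and then out again, and this composite path can be perturbed to avoid $PA$ using the continuous dependence of the trajectory on $t_1$ provided by (\ref{eq:left_reachable_set}) (and its clockwise analogue). With connectedness and the two conditions verified, $R_p^{r_t}(\mathbf{p_0},\bar t)$ qualifies as a $CSP$. The evader version is identical after swapping $r_p,v_{p_m}$ for $r_e,v_{e_m}$ and using the evader-side counterparts.

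The only substantive subtlety, and hence the main obstacle, is the second condition: making sure that the $RS$ trajectory one selects to reach a given $\mathbf{r}$ never momentarily lands on the excluded locus $PA$. The key observation that resolves this is that, within the open range $t_1\in[0,2\pi r_p/v_{p_m})$, the switching parameter never produces a full revolution, so the trajectory sits strictly inside $R_p^{r}(\mathbf{p_0},\bar t)\setminus PA$, exactly as in the left-sided case of Lemma \ref{lem:truncated_left_continuous}.
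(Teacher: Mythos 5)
Your proposal is correct and follows essentially the same route as the paper, which simply notes that Lemma \ref{lem:truncated_right_continuous} (like its left counterpart) follows from Lemma \ref{lem:curve_RS}; you merely spell out the verification of condition (2) and connectedness that the paper leaves implicit.
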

\begin{defn}
\textit{\label{def:blocking_set}}\textbf{\textit{Blocking set $B_{p}({\bf p_{0}},{\bf e_{0}},\bar{t})$}}:Consider
Figure \ref{fig:construction_blocking_1} and $\bar{t}\geq2\pi r_{p}/v_{p_{m}}$.
Construct a line segment $EP$, joining the initial position of the
evader ($E$) to the initial position of the pursuer (P). Draw tangents
to the pursuer circles parallel to $EP$ directed from the evader
to pursuer. Let the tangents be called $T_{l}^{1},T_{l}^{2},T_{r}^{1},T_{r}^{2}$.
Only one tangent in the pair $\{T_{l}^{1},T_{l}^{2}\}$ has same orientation
as the left pursuer circle and we denote it by $T_{l}^{v}$. Let $\tilde{T}_{l}^{v}$
be the curve obtained by concatenation of the arc $PA$ and $T_{l}^{v}$
as shown in Figure \ref{fig:construction_blocking_2}. Similarly,
the tangent in $\{T_{r}^{1},T_{r}^{2}\}$, having same orientation
as the right pursuer circle, will be denoted by $T_{r}^{v}$. Also,
let $\tilde{T}_{r}^{v}$ be the curve obtained by concatenation of
the arc $PC$ and $T_{r}^{v}$ as shown in Figure \ref{fig:construction_blocking_2}.
The curves $\tilde{T}_{l}^{v}$ is the curve of type $LS$ while $\tilde{T}_{r}^{v}$
is the curve of type $RS$ up to the time $\bar{t}$. Hence, $\tilde{T}_{r}^{v}$
ends at point $D$ while $\tilde{T}_{l}^{v}$ ends at point $B$.
The portion of $R_{p}({\bf p_{0}},\bar{t})$ shaded by sloped lines
between $\tilde{T}_{l}^{v}$ and $\tilde{T}_{r}^{v}$, containing
the initial position of the evader (marked by $E$ in Figure \ref{fig:construction_blocking_2}),
is defined to be the blocking set $B_{p}({\bf p_{0}},{\bf e_{0}},\bar{t})$. 
\end{defn}
The shaded region in Figure \ref{fig:blocking_evader_behind} shows
the blocking set when evader is behind the pursuer on the right side
of the pursuer. Similarly, Figure \ref{fig:blocking_evader_front}
shows the blocking set when evader is in front of the pursuer. 
\begin{lem}
\label{lem:blocking_continuous}$B_{p}({\bf p_{0}},{\bf e_{0}},\bar{t})$
is a $CSP$ for $\bar{t}\geq2\pi\frac{r_{e}}{v_{e_{m}}}$.
\end{lem}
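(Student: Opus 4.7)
The plan is to mirror the proof of Lemma \ref{lem:CRS_continuous} (the central reachable set is a $CSP$), since $B_p(\mathbf{p_0},\mathbf{e_0},\bar{t})$ has an analogous left/right decomposition. First I would partition $B_p$ into a ``left half'' $B_p^l$ bounded by the curve $\tilde{T}_l^v$ together with (a portion of) the segment $\overline{PE}$, and a ``right half'' $B_p^r$ bounded by $\tilde{T}_r^v$ and the same segment, with $B_p^l \subseteq R_p^{l_t}(\mathbf{p_0},\bar{t})$ and $B_p^r \subseteq R_p^{r_t}(\mathbf{p_0},\bar{t})$. The inclusion into the \emph{truncated} reachable sets is essential because, by Lemma \ref{lem:curve_LS_not_continuous}, any curve crossing the segment $PA$ fails to be an $LS$ continuum set; the geometry of the blocking set (whose outer boundary $\tilde{T}_l^v$ is itself an $LS$ curve that stops before wrapping to $PA$) guarantees such a crossing never occurs.

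Next I would verify the two defining conditions of a $CSP$. For the reachability condition, every point $\mathbf{r} \in B_p^l$ is the endpoint of an $LS$ trajectory parametrized by (\ref{eq:left_reachable_set}). The continuous one-parameter family of $LS$ trajectories indexed by the switching time $t_1$ sweeps the entire left half: at $t_1 = 0$ one recovers the straight-line trajectory along $\overline{PE}$ (the inner boundary of $B_p^l$), while the value of $t_1$ that consumes the arc $PA$ traces out the outer boundary $\tilde{T}_l^v$. Because $T_l^v$ is \emph{parallel} to $\overline{PE}$ by construction, the intermediate trajectories fill out $B_p^l$ without leaving it. A symmetric argument with $RS$ trajectories handles $B_p^r$.

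For the continuum condition, take any curve $\overline{\mathbf{xy}} \subseteq B_p$. If $\overline{\mathbf{xy}}$ stays inside $B_p^l$ (respectively $B_p^r$), it is an $LS$ (resp.\ $RS$) continuum set by Lemma \ref{lem:curve_LS} (resp.\ Lemma \ref{lem:curve_RS}), since neither half crosses $PA$. If $\overline{\mathbf{xy}}$ crosses the dividing segment $\overline{PE}$ at one or more points $\mathbf{z}_i$, I would split the curve at each crossing, construct the continuum on each sub-arc separately, and glue the resulting maps. At each $\mathbf{z}_i$, the $LS$ trajectory with $t_1 = 0$ and the $RS$ trajectory with $t_1 = 0$ both reduce to the straight line from $P$ to $\mathbf{z}_i$, so the concatenated map $\mathfrak{f}^c$ is continuous and one-to-one, exactly as in the last paragraph of the proof of Lemma \ref{lem:CRS_continuous}.

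The main obstacle I expect is confirming that the $LS$/$RS$ trajectories populating $B_p^l$ and $B_p^r$ actually remain inside $B_p$ rather than escaping through the outer boundaries $\tilde{T}_l^v$ or $\tilde{T}_r^v$. This reduces to a monotonicity argument in the switching time $t_1$ using (\ref{eq:left_reachable_set}), together with the key geometric fact that $T_l^v \parallel \overline{PE} \parallel T_r^v$: the parallelism pins the extreme trajectories exactly to the boundary curves while ensuring intermediate trajectories cannot cross them. Once this geometric sweep claim is in hand, the remaining steps are essentially bookkeeping that follows the template already established for the central reachable set.
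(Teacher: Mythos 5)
Your overall template is the right one, and in fact it is more than the paper offers: the paper's own proof of this lemma is a single sentence asserting that every curve in $B_{p}({\bf p_{0}},{\bf e_{0}},\bar{t})$ is a $CS$ continuum set, so a detailed argument patterned on Lemma \ref{lem:CRS_continuous} (split into an $LS$-swept part and an $RS$-swept part, invoke Lemmas \ref{lem:curve_LS} and \ref{lem:curve_RS} on each, and glue along a trajectory common to both families) is exactly what is called for. However, your choice of gluing interface is wrong, and the error is load-bearing. You divide $B_{p}$ along the segment $\overline{PE}$ and claim that at a crossing point ${\bf z}_{i}\in\overline{PE}$ the $LS$ trajectory with $t_{1}=0$ and the $RS$ trajectory with $t_{1}=0$ both reduce to ``the straight line from $P$ to ${\bf z}_{i}$.'' That is false except in the degenerate case where $E$ happens to lie on the pursuer's initial heading: by (\ref{eq:left_reachable_set}), the $t_{1}=0$ member of \emph{both} families is the straight line from $P$ in the direction $\theta_{p_{0}}$ (the ray $PD$ of Figure \ref{fig:Central-Reachable-Region}), not the segment toward $E$. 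Indeed $\overline{PE}$ is in general not a pursuer trajectory at all, since the pursuer cannot instantaneously point toward $E$, so no member of either family passes along $\overline{PE}$ and your concatenated map $\mathfrak{f}^{c}$ is discontinuous (or undefined) across it. Relatedly, your statement that ``at $t_{1}=0$ one recovers the straight-line trajectory along $\overline{PE}$ (the inner boundary of $B_{p}^{l}$)'' misidentifies both the $t_{1}=0$ trajectory and the boundary of the blocking set: the boundary curves are $\tilde{T}_{l}^{v}$ and $\tilde{T}_{r}^{v}$ (the extreme switching times $\tau_{l},\tau_{r}$ fixed by the tangency condition) together with part of the frontier boundary; $\overline{PE}$ plays no role in it. You also appear to conflate the arc $PA$ of Definition \ref{def:blocking_set} (an arc of the pursuer circle up to the tangency point) with the cut line $PA$ of Lemma \ref{lem:curve_LS_not_continuous} (the $t_{1}=0$ ray in Figure \ref{fig:Left-Reachable-Set}); these share a label across figures but are different objects.

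The repair is mechanical: the $LS$ family with $t_{1}\in[0,\tau_{l}]$ sweeps the portion of $B_{p}$ between the ray $PD$ and $\tilde{T}_{l}^{v}$, the $RS$ family with $t_{1}\in[0,\tau_{r}]$ sweeps the portion between $PD$ and $\tilde{T}_{r}^{v}$, and any curve crossing from one portion to the other is glued along $PD$, where both families degenerate to the same straight-line trajectory --- exactly the last paragraph of the proof of Lemma \ref{lem:CRS_continuous}. Your ``main obstacle'' (that intermediate trajectories do not escape through $\tilde{T}_{l}^{v}$ or $\tilde{T}_{r}^{v}$) is then handled by the same monotone-sweep observation you already identified, since the extreme switching times $\tau_{l},\tau_{r}$ correspond precisely to the bounding curves. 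One further point worth a sentence in a complete write-up, which neither you nor the paper addresses, is the configuration in which the blocking set does not contain the forward ray $PD$ at all (e.g., the evader well behind the pursuer, cf.\ Figures \ref{fig:construction_blocking_2} and \ref{fig:blocking_evader_behind}); in that case $B_{p}$ is swept by a single one-parameter family over a connected interval of switching times not containing the wrap-around at $2\pi r_{p}/v_{p_{m}}$, and Lemma \ref{lem:curve_LS} or \ref{lem:curve_RS} applies directly with no gluing needed.
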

\begin{IEEEproof}
$B_{p}({\bf p_{0}},{\bf e_{0}},\bar{t})$ is an union of two parts,
one which forms the part of left reachable set say part $A$ and another
which forms part of right reachable set say part $B$. Thus any curve
$\overline{{\bf xy}}$ in $B_{p}({\bf p_{0}},{\bf e_{0}},\bar{t})$
can be divided into two parts one which belongs to part $A$ and the
other which belongs to part $B$. Thus by using arguments similar
to that in Lemma \ref{lem:CRS_continuous}, we can show that the curve
is a $CS$ continuum set. 
\end{IEEEproof}
\begin{lem}
\label{lem:union_not_continuous}The union of left reachable set and
right reachable set is not a continuous subset.
\end{lem}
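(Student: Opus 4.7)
The plan is to prove this by contradiction, exhibiting a specific curve in $R_{p}^{r}({\bf p_{0}},\bar{t})\cup R_{p}^{l}({\bf p_{0}},\bar{t})$ that is not a continuum set, and then invoking Definition \ref{def:continuum_set} to conclude the union cannot be a $CSP$. The geometric driver is simple: every trajectory parameterized in the union is either $LS$ (initial arc along $A_{p}(0)$) or $RS$ (initial arc along $C_{p}(0)$), and these two circles share only the single point $P$. So once a trajectory commits to a direction of initial turn, moving it to the opposite type requires an $O(r_{p})$ jump in the $\mathcal{L}_{\infty}$ trajectory metric, whereas the curve $\overline{{\bf xy}}$ can be chosen so that the endpoints ${\bf x}$ and ${\bf y}$ force exactly this switch.

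First I would choose ${\bf x}$ in the left external boundary of $R_{p}^{l}({\bf p_{0}},\bar{t})$ far from the line $PA$ so that ${\bf x}$ is reachable only by an $LS$-type trajectory (not by any $RS$ trajectory), and symmetrically choose ${\bf y}$ in the right external boundary of $R_{p}^{r}({\bf p_{0}},\bar{t})$ reachable only by an $RS$-type trajectory. Such points exist from the shape of the reachable sets in Figures \ref{fig:Left-Reachable-Set} and \ref{fig:Right-Reachable-Set}. Second, I would construct a curve $\overline{{\bf xy}}$ lying entirely in the union that goes around the ``back'' of the pursuer's starting configuration and avoids the straight ray from $P$ in direction $\theta_{0}$; this is possible because the union, though containing a topological hole around the interior ``missing'' region, still provides a connected detour via the outer annular portion of $R_{p}^{l}\cup R_{p}^{r}$.

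Now suppose, for contradiction, that there exists a continuous one-one map $\mathfrak{f}^{c}\in\mathfrak{C}(\overline{{\bf x}{\bf y}},{\bf p_{0}},\bar{t})$. Since ${\bf x}\notin R_{p}^{r}({\bf p_{0}},\bar{t})$ the trajectory $\mathfrak{f}^{c}({\bf x})$ must be $LS$, while $\mathfrak{f}^{c}({\bf y})$ must be $RS$. Define $S_{L}=\{{\bf r}\in\overline{{\bf xy}}:\mathfrak{f}^{c}({\bf r})\text{ is of type }LS\}$ and $S_{R}$ analogously; both are non-empty and $S_{L}\cup S_{R}=\overline{{\bf xy}}$, so by connectedness of $\overline{{\bf xy}}$ there is a boundary point ${\bf z}$ with sequences $\{{\bf a}_{n}\}\subset S_{L}$ and $\{{\bf b}_{n}\}\subset S_{R}$ both converging to ${\bf z}$. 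Continuity of $\mathfrak{f}^{c}$ would force $\|\mathfrak{f}^{c}({\bf a}_{n})-\mathfrak{f}^{c}({\bf b}_{n})\|_{\mathcal{L}_{\infty}}\to 0$.

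To derive the contradiction, I would evaluate this $\mathcal{L}_{\infty}$ distance at an early time. Because $\overline{{\bf xy}}$ avoids the direction-$\theta_{0}$ ray from $P$, the switching times for any $LS$ trajectory reaching a neighborhood of ${\bf z}$ are bounded away from zero by some $\delta>0$ (and similarly for $RS$), so the trajectories $\mathfrak{f}^{c}({\bf a}_{n})$ and $\mathfrak{f}^{c}({\bf b}_{n})$ have already traversed non-trivial arcs of $A_{p}(0)$ and $C_{p}(0)$ respectively by time $\delta/2$. At that time their positions differ by a quantity bounded below by a constant depending only on $r_{p}$ and $\delta$, giving a positive lower bound on $\|\mathfrak{f}^{c}({\bf a}_{n})-\mathfrak{f}^{c}({\bf b}_{n})\|_{\mathcal{L}_{\infty}}$ independent of $n$. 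This contradicts continuity, so $\overline{{\bf xy}}$ is not a continuum set, and therefore the union fails condition (1) in the definition of a continuous subset. The main obstacle I anticipate is the careful verification that the detour curve $\overline{{\bf xy}}$ can actually be realized inside the union while simultaneously avoiding the ray from $P$; once this topological fact is pinned down using the cusp structure of $R_{p}^{l}$ and $R_{p}^{r}$ (as in Figure \ref{fig:unoin_not_continuous}), the remainder of the argument is a routine application of the $\mathcal{L}_{\infty}$ metric.
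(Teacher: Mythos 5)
Your proposal is correct and follows essentially the same route as the paper: exhibit a curve in the union whose two ends are reachable only by $LS$ and only by $RS$ trajectories respectively, so that any assignment of trajectories must switch type somewhere along the curve, which is incompatible with continuity in the $\mathcal{L}_{\infty}$ metric. The paper states this in two lines with reference to Figure \ref{fig:unoin_not_continuous}, whereas you supply the connectedness/boundary-point argument and the quantitative lower bound on the trajectory distance (including the correct observation that the curve must avoid the forward ray, where the degenerate straight trajectory would let the switch happen continuously); these are welcome details but not a different proof.
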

\begin{IEEEproof}
Consider the curve $\overline{{\bf xy}}$ in Figure \ref{fig:unoin_not_continuous}.
All the trajectories of the type $CS$, which reach points on the
segment $\overline{{\bf xz}}$ must start on the left circle of the
pursuer. Similarly, all the trajectories of the type $CS$ which reach
points on the segment $\overline{{\bf yw}}$ must start on the right
circle of the pursuer. Hence the line $\overline{{\bf xy}}$ is not
a continuum set.
\end{IEEEproof}

\section{\label{sec:imp_thm_1}Proof of Theorem \ref{thm:capture_strategy_pursuer_evader} }

In this section we will prove the Theorem \ref{thm:capture_strategy_pursuer_evader}.
We do this by first proving two intermediate theorems: Theorem \ref{thm:capture_sufficient_condition}
and \ref{thm:capture_necessary_condition}. Theorem \ref{thm:capture_sufficient_condition}
proves that there exists a feedback policy for pursuer which leads
to capture in time $t\leq T$ irrespective of the evasion policy chosen
by the evader.
\begin{thm}
\label{thm:capture_sufficient_condition}Let $T$ be as given by (\ref{eq:feedback_time_to_capture}).
For every trajectory ${\bf e}$ of the evader corresponding to feedback
policy ${\bf u_{e}}\in\mathcal{U}_{e}$ there exists a pursuer trajectory
corresponding to feedback strategy ${\bf u_{p}}\in\mathcal{U}_{p}$
s.t. ${\bf e}_{|\mathbb{R}^{2}}(t)={\bf p}_{|\mathbb{R}^{2}}(t)$
for some $t\leq T$.
\end{thm}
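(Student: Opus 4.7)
The plan is to reduce this to a direct unpacking of the definition of $T$ in (\ref{eq:feedback_time_to_capture}), by exhibiting a $CSE$ naturally associated with the realised evader trajectory and case-splitting on whether the evader's terminal position lies inside the $CSP$ that continuously contains it.

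I would fix the evader feedback policy $\mathbf{u}_e \in \mathcal{U}_e$ with resulting trajectory $\mathbf{e}$ and set $\mathbf{z}_e := \mathbf{e}_{|\mathbb{R}^2}(T)$. The first step is to identify the planar trace $R_e^c := \{\mathbf{e}_{|\mathbb{R}^2}(t) : 0 \leq t \leq T\}$ (possibly thickened slightly inside $R_e(\mathbf{e}_0, T)$) as a $CSE$. Condition (2) of the $CSE$ definition is immediate since the initial segment $\mathbf{e}|_{[0,\tilde t]}$ is itself an admissible evader trajectory reaching $\mathbf{e}_{|\mathbb{R}^2}(\tilde t)$, and the candidate continuum-of-trajectories map $\mathbf{r} \mapsto \mathbf{e}|_{[0,\tilde t(\mathbf{r})]}$ is $\mathcal{L}_\infty$-continuous because $\mathbf{e}$ is continuous.

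Applying (\ref{eq:feedback_time_to_capture}) to this $R_e^c$ produces a $CSP$ family $\{R_p^c(\mathbf{p}_0, t)\}_{t \leq T}$ with $R_e^{c-}(\mathbf{e}_0, R_p^c, T) = \emptyset$. I case-split on whether $\mathbf{z}_e \in R_p^c(\mathbf{p}_0, T)$. If yes, then $\mathbf{z}_e \in R_p(\mathbf{p}_0, T)$, so the pursuer reaches $\mathbf{z}_e$ at some $t_p \leq T$; setting $v_p = 0$ on $[t_p, T]$ (admissible since $0 \in V_p$) keeps the pursuer at $\mathbf{z}_e = \mathbf{e}_{|\mathbb{R}^2}(T)$ at time $T$. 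If no, then $\mathbf{z}_e \in R_e^c \setminus R_p^c(\mathbf{p}_0, T)$; emptiness of the continuous safe region then forbids $\mathbf{e}$ from staying outside $R_p^c(\mathbf{p}_0, t)$ for every $t \leq T$, so there exists $t_c \leq T$ with $\mathbf{e}_{|\mathbb{R}^2}(t_c) \in R_p^c(\mathbf{p}_0, t_c) \subseteq R_p(\mathbf{p}_0, t_c)$, and the pursuer (slowing if needed) arrives at $\mathbf{e}_{|\mathbb{R}^2}(t_c)$ at time $t_c$. In either case capture occurs at some $t \leq T$, and the associated open-loop pursuer control lifts trivially to a feedback $\gamma_p(\mathbf{x}) \in \mathcal{U}_p$ by any extension off the realised trajectory.

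The main obstacle I anticipate is the first step: rigorously showing that the evader's realised trajectory is a genuine $CSE$. The one-to-one continuous map required by Definition \ref{def:continuum_set} can fail on its face if $\mathbf{e}$ self-intersects in the plane, so parametrising each planar point by its earliest time of visit (and handling isolated crossings via local perturbations, or by slightly thickening to an open tube inside $R_e(\mathbf{e}_0, T)$) is a natural remedy. Once this technicality is discharged, the remainder is a short set-theoretic manipulation of the defining equations for $R_e^{c-}$ and the inclusion $R_p^c \subseteq R_p$.
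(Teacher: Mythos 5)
There is a genuine gap, and it sits exactly at the point the theorem (and the whole of Sections \ref{sec:analysis_reachability}--\ref{sec:continuous_subsets}) is designed to address. Your argument uses only the set-theoretic emptiness of the continuous safe region — ``the evader's realised trajectory must enter $R_p^c(\mathbf{p}_0,t_c)$ at some $t_c\leq T$, hence the pursuer can arrive there at time $t_c$'' — and never invokes the defining property of a $CSP$, namely the existence of a continuous one-one continuum-of-trajectories map. An argument of this shape applies verbatim with $R_p^c$ replaced by the full reachable set $R_p$ and $T$ replaced by the mere-containment time $T_o$, and would therefore ``prove'' that capture occurs by $T_o$ under feedback, which Example \ref{exa:evader_behind} explicitly refutes. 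The flaw is the circularity you wave away in the last sentence: you fix the evader's realised trajectory $\mathbf{e}$, design an open-loop pursuer control that intercepts \emph{that} trajectory at $(\mathbf{e}_{|\mathbb{R}^2}(t_c),t_c)$, and then ``lift'' it to a feedback law. But $\mathbf{e}$ is generated by a feedback policy $\gamma_e(\mathbf{x})$ that observes the pursuer's state; once the pursuer commits to the control you constructed, the evader produces a \emph{different} realised trajectory, and your interception point is no longer on it. The pursuer, as a feedback player, must be able to re-target continuously as the evader deviates, and that is precisely what the continuum structure delivers: in the paper's proof, the deviation of the evader's endpoint from $\mathbf{c}$ to $\mathbf{c}'$ traces a curve $\overline{\mathbf{cc'}}$ lying in a $CSE$, hence (by emptiness of the continuous safe region) in a $CSP$, hence a continuum set \emph{for the pursuer}, so a small admissible $\delta\mathbf{u_p}$ moves the pursuer's endpoint to $\mathbf{c}'$ as well. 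Without this step your proof establishes only the open-loop version of the claim.

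A secondary, more technical issue: taking the planar trace of $\mathbf{e}$ (or a thin tube around it) as your $CSE$ is not obviously legitimate. The definition of $T$ in (\ref{eq:feedback_time_to_capture}) quantifies over sets that actually satisfy the two conditions of a continuous subset — every curve in the set must be a continuum set for the \emph{evader}, with a continuous injective assignment of trajectories — and a one-dimensional trace of a single trajectory need not admit such a structure (your own earliest-visit parametrisation fails injectivity at self-intersections, and a ``thickened tube'' must itself be verified to be a $CSE$, not just assumed). The paper sidesteps this entirely by working with an arbitrary $CSE$ containing the relevant curve $\overline{\mathbf{cc'}}$ rather than manufacturing one from the realised trajectory.
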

\begin{IEEEproof}
Since, for all $\,R_{e}^{c}({\bf e_{0}},T)\in\mathfrak{R}_{e}({\bf e_{0}},T),\,\text{there exists an}\,R_{p}^{c}({\bf p_{0}},T)\in\mathfrak{R}_{p}({\bf p_{0}},T)$
such that $R_{e}^{c-}({\bf e_{0}},R_{p}^{c}$$,T)$$=$$\emptyset$,
regardless of the input strategy ${\bf u_{e}}\in\mathcal{U}_{e}$
that the evader selects, there exists an open-loop pursuer strategy
${\bf u_{p}}\in\mathcal{U}_{p}$ such that the evader will be captured
at a point say ${\bf c}\in\mathbb{R}^{2}$. However, if the evader
is allowed feedback policies, it can execute deviations about ${\bf u_{e}}$
continuously by using the correct information about the pursuit trajectory
${\bf p}$ chosen by the pursuer. Let the evader input deviate by
$\delta{\bf u_{e}}$ and the evader trajectory by ${\bf \delta e}$.
Let the new evader trajectory ${\bf e}'={\bf e}+{\bf \delta e}$ be
such that ${\bf e}'_{|\mathbb{R}^{2}}(T)={\bf c}'\in\mathbb{R}^{2}$.
Let $\overline{{\bf cc'}}$ be any curve between points ${\bf c}$
and ${\bf c'}$ such that $\overline{{\bf cc'}}\in R_{e}^{c}({\bf e_{0}},T)$
for some $CSE$ $R_{e}^{c}({\bf e_{0}},T)$. Since the curve $\overline{{\bf cc'}}$
is in some $CSE$ $R_{e}^{c}({\bf e_{0}},T)$, and $R_{e}^{c-}({\bf e_{0}},R_{p}^{c},T)=\emptyset$
we have $\overline{{\bf cc'}}\in R_{p}^{c}({\bf p_{0}},T)$ for some
$CSP$ $R_{p}^{c}({\bf p_{0}},T)$. Thus, there exists a pursuer input
${\bf u'_{p}}={\bf u_{p}}+\delta{\bf u_{p}}$ such that the ${\bf p}'_{|\mathbb{R}^{2}}(T)={\bf c'}$
for the trajectory ${\bf p}'$ corresponding to the input ${\bf u'_{p}}$
and some time $t\leq T$.
\end{IEEEproof}
Next we show that there exists an evasion policy which can ensure
that even with the best pursuit policy, capture happens only at $T$
and no sooner.

\begin{figure}

\begin{centering}
\includegraphics[scale=0.25]{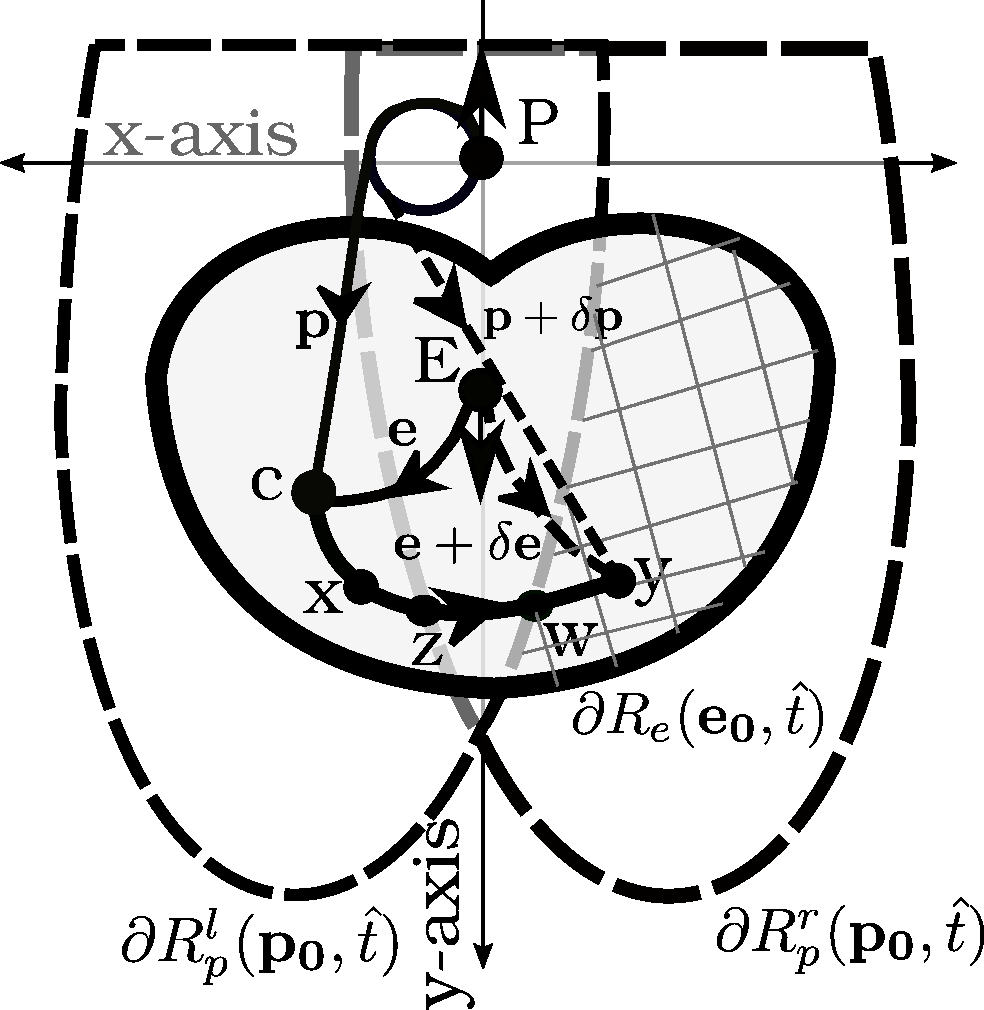}
\par\end{centering}
\caption{\label{fig:no_capture}Only Containment: Pursuer cannot follow all
the variations of the evader}

\end{figure}

\begin{defn}
Let $\mathcal{L}_{e}({\bf e_{0}},T)$ be a $CSE$. Let $\mathcal{L}_{e}^{-}({\bf e_{0}},R_{p}^{c},T-\delta)\neq\emptyset$
for all $\delta>0$ and for all $R_{p}^{c}({\bf p_{0}},T-\delta)\in\mathfrak{R}_{p}$.
Let $\mathcal{C}_{p}({\bf p_{0}},T)$ be a $CSP$ such that $\mathcal{L}_{e}^{-}({\bf e_{0}},\mathcal{C}_{p},T)=\emptyset$.
We call the $\mathcal{L}_{e}({\bf e_{0}},T)$ the \textbf{last $CSE$}
($LCSE$) and $\mathcal{C}_{p}({\bf p_{0}},T)$ as \textbf{capturing}
$CSP$ ($CCSP$). Note that a pair of a $LCSE$ and a $CCSP$ is not
unique.
\end{defn}
\begin{thm}
\label{thm:capture_necessary_condition}Let $T$ be as defined by
(\ref{eq:feedback_time_to_capture}). For every pursuer trajectory
${\bf p}$ corresponding to feedback policy ${\bf u_{p}}\in\mathcal{U}_{p}$
there exists an evader trajectory ${\bf e}$ corresponding to feedback
strategy ${\bf u_{e}}\in\mathcal{U}_{e}$ s.t. ${\bf e}_{|\mathbb{R}^{2}}(t)\neq{\bf p}_{|\mathbb{R}^{2}}(t)$
for all $t<T$
\end{thm}
\begin{IEEEproof}
This situation can be divided into two cases: 

\noindent \textbf{\textit{Case 1}}: Consider a time instant $\tilde{t}<T$
such that $\mathcal{L}_{e}^{-}({\bf e_{0}},R_{p}^{c},\tilde{t})\neq\emptyset$
for all $R_{p}^{c}({\bf p_{0}},\tilde{t})\in\mathfrak{R}_{p}$ and
$R_{e}^{-}({\bf e_{0}},R_{p},\tilde{t})\neq\emptyset$ for $R_{p}({\bf p_{0}},\tilde{t})$.
Since $R_{e}^{-}({\bf e_{0}},R_{p},\tilde{t})\neq\emptyset$, there
exists an evader strategy ${\bf u_{e}}\in\mathcal{U}_{e}$ such that
${\bf e}_{|\mathbb{R}^{2}}(t)\notin R_{p}({\bf p}_{0},t)$ for each
$t\leq\tilde{t}$. Thus the evader can always escape capture at each
$t\leq\tilde{t}$. 

\noindent \textbf{\textit{Case 2}}: Consider a time instant $\hat{t}<T$
such that $\mathcal{L}_{e}^{-}({\bf e_{0}},R_{p}^{c},\hat{t})\neq\emptyset$
for all $R_{p}^{c}({\bf p_{0}},\hat{t})\in\mathfrak{R}_{p}$ but $R_{e}^{-}({\bf e_{0}},R_{p},\hat{t})=\emptyset$.
Since $R_{e}^{-}({\bf e_{0}},R_{p},\hat{t})=\emptyset$, every trajectory
${\bf e}$ of the evader, corresponding to open loop input ${\bf u_{e}}$,
can be intercepted by some pursuer strategy ${\bf p}$ corresponding
to a suitable pursuer input ${\bf u_{p}}$. This is shown in Figure
\ref{fig:no_capture} where the solid curve (labeled ${\bf e}$) denotes
an evader trajectory ${\bf e}_{|\mathbb{R}^{2}}$ and the solid curve
(labeled ${\bf p}$) denotes the pursuer trajectory ${\bf p}_{|\mathbb{R}^{2}}$.
Let the pursuer strategy ${\bf p}$ intercept the evader, using the
input ${\bf u_{e}}$, at a point ${\bf c}\in\mathcal{L}_{e}({\bf e_{0}},\hat{t})$.
Since, $\mathcal{L}_{e}^{-}({\bf e_{0}},R_{p}^{c},\hat{t})\neq\emptyset$
$\forall$ $R_{p}^{c}({\bf p_{0}},\hat{t})\in\mathfrak{R}_{p}$, $\exists$
a curve $\overline{{\bf x}{\bf y}}\in\mathcal{L}_{e}({\bf e_{0}},\hat{t})$
(depicted by the solid curve from ${\bf x}$ to ${\bf y}$ in Figure
\ref{fig:no_capture} ) which is not a continuum set for the pursuer.
Extend the curve $\overline{{\bf xy}}$ from point ${\bf x}$ to ${\bf c}\in\mathcal{L}_{e}({\bf e_{0}},\hat{t})$
by a curve $\overline{{\bf xc}}\in\mathcal{L}_{e}({\bf e_{0}},\hat{t})$.
Call the new curve $\overline{{\bf cy}}$. Since the curve ${\bf \overline{xy}}\in\mathcal{L}_{e}^{-}({\bf e_{0}},R_{p}^{c},\hat{t})$
is not a continuum set for the pursuer, the curve ${\bf \overline{cy}}$
is also not a continuum set for the pursuer. However, the curve ${\bf \overline{cy}}\in\mathcal{L}_{e}^{-}({\bf e_{0}},R_{p}^{c},\hat{t})$
is a continuum set for the evader. Thus the evader with a variation
$\delta{\bf u_{e}}$ in its strategy can change the end point to ${\bf y}$
from point ${\bf c}$ (The evader trajectory corresponding to input
${\bf u_{e}+\delta u_{e}}$ is shown by a dashed curve and labeled
${\bf e}+\delta{\bf e}$ ). However, since ${\bf \overline{cy}}$
is not a continuum set for the pursuer, the pursuer cannot catch the
evader in time $\hat{t}$ using an admissible variation $\delta{\bf u_{p}}$.
Thus the evader can always avoid capture for all time $t\leq\hat{t}<T$. 
\end{IEEEproof}
\noindent \textbf{Proof of Theorem \ref{thm:capture_strategy_pursuer_evader}:}
By Theorem \ref{thm:capture_necessary_condition}, in order to maximize
the time-to-capture the evader should use input strategy ${\bf u_{e}^{*}}\in\mathcal{U}_{e}$
such that, ${\bf e}_{|\mathbb{R}^{2}}(T)={\bf z}\in\mathcal{L}_{e}({\bf e_{0}},T)$,
and ${\bf e}_{|\mathbb{R}^{2}}(t)\notin\mathcal{C}_{p}({\bf p_{0}},t)$
$\forall$ $t<T$. On the other hand, the pursuer should use time-optimal
input strategy ${\bf u_{p}^{*}}\in\mathcal{U}_{p}$ such that ${\bf p}_{|\mathbb{R}^{2}}(T)={\bf z}$
and ${\bf p}_{|\mathbb{R}^{2}}(t)\in\mathcal{C}_{p}({\bf p_{0}},T)$
$\forall\ t\leq T$ in order to minimize the time to capture the evader.
Thus if the evader uses ${\bf u_{e}^{*}}\in\mathcal{U}_{e}$ and pursuer
uses ${\bf u_{p}^{*}}\in\mathcal{U}_{p}$ the time to capture $T_{c}({\bf u_{p}^{*}},{\bf u_{e}^{*}})=T$.
If the evader deviates from ${\bf u_{e}^{*}}$ by ${\bf \delta u_{e}}$
then by Theorem \ref{thm:capture_sufficient_condition} it will be
captured at $T_{c}({\bf u_{p}^{*}},{\bf u_{e}^{*}}+{\bf \delta u_{e}})\leq T$.
Similarly, if the pursuer deviates from ${\bf u_{p}^{*}}$ by ${\bf \delta u_{p}}$
it will not be able to reach point ${\bf e}_{|\mathbb{R}^{2}}(T)={\bf z}\in\mathcal{C}_{p}({\bf e_{0}},T)$
at time $T$ and we have $T\leq T_{c}({\bf u_{p}^{*}}+{\bf \delta u_{p}},{\bf u_{e}^{*}})$.
Thus, 
\begin{equation}
T_{c}({\bf u_{p}^{*}},{\bf u_{e}^{*}}+{\bf \delta u_{e}})\leq T=T_{c}({\bf u_{p}^{*}},{\bf u_{e}^{*}})\leq T_{c}({\bf u_{p}^{*}}+{\bf \delta u_{p}},{\bf u_{e}^{*}})\label{eq:saddle_point_capture}
\end{equation}
and hence ${\bf u_{e}^{*}}$ and ${\bf u_{p}^{*}}$ are the feedback
saddle-point strategies. Since, $T$ satisfies (\ref{eq:saddle_point_capture})
we have $T=T^{*}$.

\section{\label{sec:imp_theorem_2}Proof of Theorem \ref{thm:cs-type}}

In this section we explicitly characterize the continuous subset of
evader's reachable set which will be the $LCSE$. For the pursuer
we show that $CCSP$ will be a set from among three particular continuous
subsets of the pursuer's reachable set. Using $LCSE$ and $CCSP$
we show that the saddle point trajectories are of the type $CS$.

\subsection{\label{subsec:lcse}Characterization of $LCSE$}

In this section we characterize the $LCSE$ $\mathcal{L}_{e}({\bf e_{0}},T^{*})$.
Recall that $r_{p}$ and $r_{e}$ are the minimum turning radii of
the pursuer and the evader respectively, while $d_{pe}^{0}$ denotes
the initial distance between the pursuer and the evader. If the capture
time $T^{*}\geq2\pi r_{e}/v_{e_{m}}$, then we show that the boundary
of $LCSE$ is the external boundary of evader's reachable set. This
simplifies the analysis of the game significantly. Hence, in order
to ensure that the capture of evader takes at least time $2\pi r_{e}/v_{e_{m}}$
we add an assumption on the initial distance between the pursuer and
the evader. 
\begin{lem}
\label{lem:evader_circles_forbidden}If $d_{pe}^{0}\geq2r_{e}+2\pi r_{e}(v_{p_{m}}/v_{e_{m}})$
then capture can occur only at time $t\geq2\pi r_{e}/v_{e_{m}}$.
\end{lem}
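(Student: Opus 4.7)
The plan is to produce a concrete evasion strategy that delays capture until at least $\tau := 2\pi r_{e}/v_{e_{m}}$. Since $T^{*}=\max_{\gamma_{e}}\min_{\gamma_{p}}T_{c}(\gamma_{p},\gamma_{e})\ge \min_{\gamma_{p}}T_{c}(\gamma_{p},\gamma_{e})$ for any admissible $\gamma_{e}$, it will suffice to exhibit a single evader policy against which no pursuer can achieve capture before $\tau$; the circular-arc estimate then yields the same lower bound on $T^{*}$.

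First, I would have the evader traverse one of its minimum-radius turning circles at full forward speed, say $A_{e}(0)$, by setting $w_{e}(t)=w_{e_{m}}$ and $v_{e}(t)=v_{e_{m}}$ for all $t\ge 0$. In time exactly $\tau$ the evader completes one full revolution and returns to ${\bf z}_{e}(0)$; in particular, for every $t\in[0,\tau]$ the position ${\bf z}_{e}(t)$ lies on $A_{e}(0)$. By the construction of $A_{e}(0)$, its center is at distance $r_{e}=1/w_{e_{m}}$ from ${\bf z}_{e}(0)$, while the circle itself has radius $r_{e}$; hence every point on $A_{e}(0)$ lies within the closed disk of radius $2r_{e}$ centered at ${\bf z}_{e}(0)$, giving $\|{\bf z}_{e}(t)-{\bf z}_{e}(0)\|_{2}\le 2r_{e}$ for all $t\in[0,\tau]$.

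Next, suppose for contradiction that capture occurs at some $t^{*}\le\tau$. Then ${\bf z}_{p}(t^{*})={\bf z}_{e}(t^{*})$, and since the pursuer's arclength up to time $t^{*}$ is bounded by $v_{p_{m}}t^{*}$, its straight-line displacement satisfies $\|{\bf z}_{p}(t^{*})-{\bf z}_{p}(0)\|_{2}\le v_{p_{m}}t^{*}$. The triangle inequality then gives
\begin{equation*}
d_{pe}^{0}=\|{\bf z}_{p}(0)-{\bf z}_{e}(0)\|_{2}\le\|{\bf z}_{p}(0)-{\bf z}_{p}(t^{*})\|_{2}+\|{\bf z}_{e}(t^{*})-{\bf z}_{e}(0)\|_{2}\le v_{p_{m}}t^{*}+2r_{e}.
\end{equation*}
Rearranging and invoking the hypothesis $d_{pe}^{0}\ge 2r_{e}+2\pi r_{e}(v_{p_{m}}/v_{e_{m}})$ forces $t^{*}\ge(d_{pe}^{0}-2r_{e})/v_{p_{m}}\ge 2\pi r_{e}/v_{e_{m}}=\tau$, contradicting $t^{*}<\tau$.

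The only non-routine step is the reduction from a statement about the saddle-point time $T^{*}$ to a one-sided bound obtained from a specific (and typically suboptimal) evader policy; beyond that, the proof rests on nothing more than the triangle inequality together with the elementary facts that arclength dominates chord length for the pursuer and that $A_{e}(0)$ lies in a disk of diameter $2r_{e}$ about ${\bf z}_{e}(0)$. Admissibility of the circular evader policy is immediate since $v_{e_{m}}\in V_{e}$ and $w_{e_{m}}\in W_{e}$, so I do not anticipate any serious obstacle.
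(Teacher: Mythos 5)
Your proposal is correct and follows essentially the same route as the paper: both arguments fix a simple admissible evader policy confined to the closed disk of radius $2r_{e}$ about ${\bf z_{e}}(0)$ (the paper has the evader stand still at $E$, you have it traverse $A_{e}(0)$), bound the pursuer's displacement by $v_{p_{m}}t$, and conclude via the triangle inequality that any capture time satisfies $t\geq(d_{pe}^{0}-2r_{e})/v_{p_{m}}\geq2\pi r_{e}/v_{e_{m}}$. The difference in the choice of evader policy is immaterial to the argument.
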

\begin{IEEEproof}
Let the evader be located at $E$, at a distance $d_{pe}^{0}$ away
from the pursuer located at $P$, as shown in Figure \ref{fig:evaders_circles_forbidden}.
Let $PA$ be the line from pursuer's position to the center of anti-clockwise
evader circle denoted by point $A$. Also, let $PA$ intersect the
anticlockwise circle at point $O$. The least time for the pursuer
to reach point $O$ is 
\begin{equation}
t_{m}:=\text{len}(PO)/v_{p_{m}}=(\text{len}(PA)-r_{e})/v_{p_{m}}\label{eq:interm-1}
\end{equation}
Also from triangle inequality we have $\text{len}(PA)\geq\text{len}(PE)-r_{e}$.
Thus (\ref{eq:interm-1}) becomes
\[
t_{m}\geq(\text{len}(PE)-r_{e}-r_{e})/v_{p_{m}}=[d_{pe}^{0}-2r_{e}]/v_{p_{m}}
\]
Let, $t_{p}:=(d_{pe}^{0}-2r_{e})/v_{p_{m}}$ and $t_{e}:=2\pi r_{e}/v_{e_{m}}$.
Thus if $t_{p}\geq t_{e}$ the pursuer requires at least $t_{e}$
time to reach any point on evader circles. Thus the evader, by using
$v_{e}(t)=0$ and $u_{e}(t)=0$ $\forall\ t$ can avoid capture up
to time $t_{e}$ (since it remains at point $E$ $\forall\ t$ and
pursuer takes at least time $t_{e}$ to reach it). Further, if it
uses time optimal evasion strategy then the time to capture will certainly
be greater than $t_{e}$. Thus, if $d_{pe}^{0}\geq2r_{e}+2\pi r_{e}(v_{p_{m}}/v_{e_{m}})$
we have $t_{p}=(d_{pe}^{0}-2r_{e})/v_{p_{m}}\geq2\pi r_{e}/v_{e_{m}}$.
This implies $t_{p}\geq t_{e}$ and the claim follows.
\end{IEEEproof}
\begin{figure}

\begin{centering}
\includegraphics[scale=0.25]{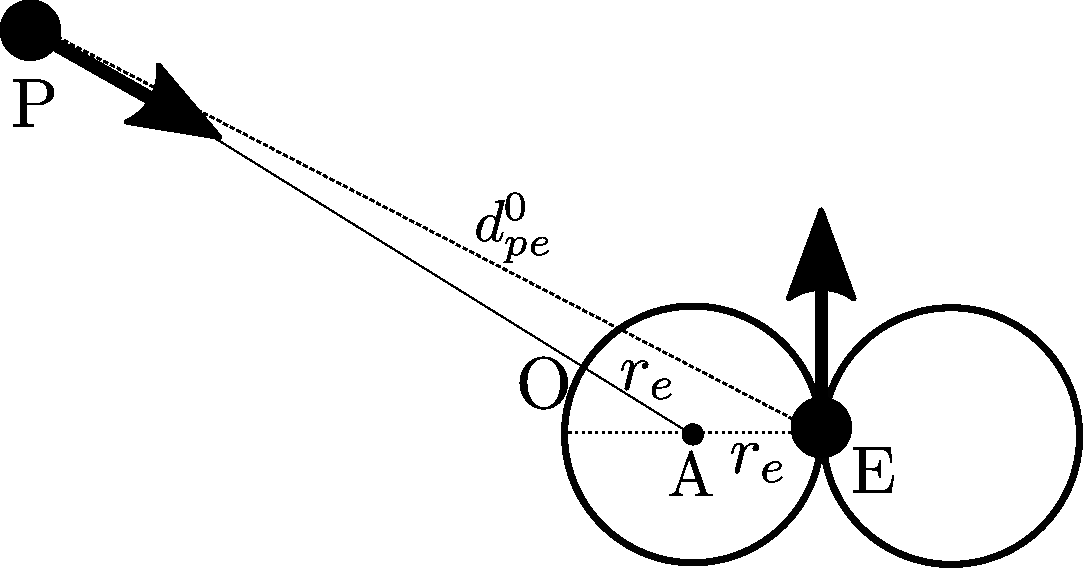}
\par\end{centering}
\caption{\label{fig:evaders_circles_forbidden}Evader capture after $t\protect\geq2\pi r_{e}/v_{e_{m}}$}

\end{figure}
Recall that, by Lemma \ref{lem:CRS_continuous}, the central reachable
set $R_{e}^{CR}({\bf e_{0}},T^{*})$ is a $CSE$. In next theorem
we prove that the $LCSE$ is in fact $R_{e}^{CR}({\bf e_{0}},T^{*})$
if $d_{pe}^{0}\geq2r_{e}+2\pi r_{e}(v_{p_{m}}/v_{e_{m}})$. Further,
since the boundary of $R_{e}^{CR}({\bf e_{0}},T^{*})$ and the external
boundary of $R_{e}({\bf e_{0}},T^{*})$ are the same, it follows that
the whole reachable set of the evader must be contained in some continuous
subset of the evader's reachable set. In the next theorem we specialize
the Theorems \ref{thm:capture_sufficient_condition} and \ref{thm:capture_necessary_condition}
for the assumption that $d_{pe}^{0}\geq2r_{e}+2\pi r_{e}(v_{p_{m}}/v_{e_{m}})$,
and give a necessary and sufficient condition for capture under feedback
trajectories.
\begin{thm}
\label{thm:complete_containment_continuous_necessary_sufficient}Let
$d_{pe}^{0}\geq2r_{e}+2\pi r_{e}(v_{p_{m}}/v_{e_{m}})$. For every
evader trajectory ${\bf e}_{|\mathbb{R}^{2}}$ generated by feedback
policy ${\bf u_{e}}\in\mathcal{U}$ there exists a pursuer trajectory
${\bf p}_{|\mathbb{R}^{2}}$ corresponding to feedback strategy ${\bf u_{p}}\in\mathcal{U}_{p}$
s.t. ${\bf e}_{|\mathbb{R}^{2}}(t)={\bf p}_{|\mathbb{R}^{2}}(t)$
for some $t\leq T^{*}$ if and only if $R_{e}^{-}({\bf e_{0}},\mathcal{C}_{p},T^{*})=\emptyset$
for a $CCSP$ $\mathcal{C}_{p}({\bf p_{0}},T^{*})$.
\end{thm}
\begin{IEEEproof}
First we show the $only\ if$ part. As seen in Lemma \ref{lem:evader_circles_forbidden},
if $d_{pe}^{0}\geq2r_{e}+2\pi r_{e}(v_{p_{m}}/v_{e_{m}})$ the pursuer
can capture the evader only after time $t_{e}=2\pi r_{e}/v_{e_{m}}$.
The evader's central reachable set after time $t\geq t_{e}=2\pi r_{e}/v_{e_{m}}$
is shown in Figure \ref{fig:Central-Reachable-Region}. It can be
easily seen that the external boundary of the evader's reachable set
and that of the central reachable set are the same i.e. $\partial R_{e}({\bf e_{0}},T^{*})=\partial R_{e}^{CR}({\bf e_{0}},T^{*})$.
Now, $R_{e}^{CR}({\bf e_{0}},T^{*})$ is a $CSE$. By Theorems \ref{thm:capture_necessary_condition}
and \ref{thm:capture_sufficient_condition} $R_{e}^{CR}({\bf e_{0}},T^{*})$
must be contained inside some $CCSP$ $\mathcal{C}_{p}({\bf p_{0}},T^{*})$
for capture to occur at time $t\leq T^{*}$. Since, $\partial R_{e}({\bf e_{0}},T^{*})=\partial R_{e}^{CR}({\bf e_{0}},T^{*})$,
it is necessary that $R_{e}({\bf e_{0}},T^{*})$ must be contained
inside some $\mathcal{C}_{p}({\bf p_{0}},T^{*})$ for the capture
to occur at $t\leq T^{*}$.

Now we prove the $if$ part. Since $R_{e}^{c}({\bf e_{0}},T^{*})\subseteq R_{e}({\bf e_{0}},T^{*})$
for all $R_{e}^{c}({\bf e_{0}},T^{*})\in\mathfrak{R}({\bf e_{0}},T^{*})$,
the condition $R_{e}^{-}({\bf e_{0}},\mathcal{C}_{p},T^{*})=\emptyset$
implies that every $R_{e}^{c}({\bf e_{0}},T^{*})$$\in$$\mathfrak{R}_{e}({\bf e_{0}},T^{*})$
is such that $R_{e}^{c-}({\bf e_{0}},\mathcal{C}_{p},T^{*})=\emptyset$
for some corresponding $CCSP$ $\mathcal{C}_{p}({\bf p_{0}},T^{*})$.
Thus by Theorems \ref{thm:capture_sufficient_condition} and \ref{thm:capture_necessary_condition}
we can conclude that capture will occur at $t\leq T^{*}$.
\end{IEEEproof}

\subsection{\label{subsec:ccsp}Characterization of $CCSP$}

Next, we will characterize $CCSP$ $\mathcal{C}_{p}({\bf p_{0}},T^{*})$
in order to determine the nature of feedback strategies. We will show
that the $CCSP$ is either the blocking set $B_{p}({\bf p_{0}},{\bf e_{0}},T^{*})$
or the truncated left reachable set $R_{p}^{l_{t}}({\bf p_{0}},T^{*})$
or the truncated right reachable set $R_{p}^{r_{t}}({\bf p_{0}},T^{*})$.
First we define the concept of blocking $LS$ and $RS$ trajectories
of the pursuer.
\begin{defn}
\textbf{\textit{Blocking trajectory of the pursuer }}: If the initial
position of the evader (${\bf e_{0}}$) and pursuer (${\bf p_{0}}$)
is such that all the points on a pursuer trajectory are reached by
the pursuer earlier than the evader then we say that such a trajectory
is a blocking trajectory. Further, if the blocking trajectory is of
the type $LS\ (RS)$ then we call it $LS\ (RS)$ blocking trajectory.
\end{defn}
This concept of blocking trajectory will be used to analyze the containment
of evader's reachable set by $B_{p}({\bf p_{0}},{\bf e_{0}},T^{*})$,
$R_{p}^{l_{t}}({\bf p_{0}},T^{*})$, and $R_{p}^{r_{t}}({\bf p_{0}},T^{*})$.

\subsection*{Continuous capture by blocking set}

We show that the blocking set (defined in Definition \ref{def:blocking_set}
and shown to be a continuous subset in Lemma \ref{lem:blocking_continuous})
will contain the evader's reachable set for all initial conditions
of the pursuer and the evader such that $d_{pe}^{0}\geq2r_{e}+2\pi r_{e}(v_{p_{m}}/v_{e_{m}})$.
Recall that $\tilde{T}_{l}^{v}$ and $\tilde{T}_{r}^{v}$ are the
trajectories of the type $LS$ and $RS$ as defined in Definition
\ref{def:blocking_set}. The next lemma follows easily from the construction
of $B_{p}({\bf p_{0}},{\bf e_{0}},\bar{t})$ for some time $\bar{t}\geq2\pi r_{p}/v_{p_{m}}$. 
\begin{lem}
\label{lem:blocking_curves_blocking_set}$\tilde{T}_{l}^{v}$ is a
blocking $LS$ trajectory while $\tilde{T}_{r}^{v}$ is a blocking
$RS$ trajectory if $d_{pe}^{0}\geq2r_{e}+2\pi r_{e}(v_{p_{m}}/v_{e_{m}})$.
\end{lem}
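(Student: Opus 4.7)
The plan is to prove that for every point $Q$ on $\tilde{T}_{l}^{v}$, the pursuer's travel time from $P$ to $Q$ along $\tilde{T}_{l}^{v}$ at maximum speed $v_{p_{m}}$ is strictly less than $|EQ|/v_{e_{m}}$, which is a universal lower bound on the evader's travel time (since the evader's speed is capped at $v_{e_{m}}$, regardless of its turning constraints). This implies $\tilde{T}_{l}^{v}$ is a blocking $LS$ trajectory; the claim for $\tilde{T}_{r}^{v}$ then follows by a mirror-image argument with respect to the line $EP$.

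Place coordinates with $P=(0,0)$ and $E=(-d,0)$ where $d:=d_{pe}^{0}$, so that the line $EP$ is the $x$-axis. By construction $T_{l}^{v}$ is horizontal at some height $a_{y}$, with tangent point $A=(a_{x},a_{y})$ on the left pursuer circle through $P$; hence $|a_{x}|,|a_{y}|\leq 2r_{p}$ and the arc length $s_{PA}\leq 2\pi r_{p}$, and past $A$ the ray $T_{l}^{v}$ points in the $+x$ direction (the $E$-to-$P$ direction). Writing $k:=v_{e_{m}}/v_{p_{m}}<1$ and parametrising the straight portion by $\ell\geq 0$ so that $Q=(a_{x}+\ell,a_{y})$, the required strict inequality, after squaring, becomes
\[
h(\ell):=(d+a_{x}+\ell)^{2}+a_{y}^{2}-k^{2}(s_{PA}+\ell)^{2}>0,\qquad \ell\geq 0;
\]
points on the arc are handled as the special case $\ell=0$ with the actual travelled arc length in place of $s_{PA}$, which only makes the inequality easier.

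Next, I would note that $h$ is a quadratic in $\ell$ with leading coefficient $1-k^{2}>0$ and vertex at $\ell^{\ast}=-[(d+a_{x})-k^{2}s_{PA}]/(1-k^{2})$, and reduce the proof to verifying (i) $\ell^{\ast}<0$ and (ii) $h(0)>0$. For (i), using $r_{p}\leq r_{e}$ (which follows from $w_{p_{m}}>w_{e_{m}}$), $|a_{x}|\leq 2r_{p}$ and $s_{PA}\leq 2\pi r_{p}$, the hypothesis $d\geq 2r_{e}+2\pi r_{e}(v_{p_{m}}/v_{e_{m}})$ yields
\[
d+a_{x}\geq d-2r_{p}\geq 2\pi r_{p}\,\tfrac{v_{p_{m}}}{v_{e_{m}}}>k^{2}s_{PA},
\]
so $\ell^{\ast}<0$ and $h$ is increasing on $[0,\infty)$. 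For (ii), dropping $a_{y}^{2}\geq 0$ and using the same lower bound for $d+a_{x}$ together with $s_{PA}\leq 2\pi r_{p}$ and $v_{p_{m}}/v_{e_{m}}>v_{e_{m}}/v_{p_{m}}$, one gets $(d+a_{x})^{2}\geq [2\pi r_{p}(v_{p_{m}}/v_{e_{m}})]^{2}>[2\pi r_{p}(v_{e_{m}}/v_{p_{m}})]^{2}\geq k^{2}s_{PA}^{2}$, establishing $h(0)>0$.

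The main obstacle is propagating the distance hypothesis cleanly through the two algebraic bounds at once; the essential geometric input that makes the inequality go through is that $T_{l}^{v}$ is parallel to $\vec{EP}$, which forces the evader's straight-line distance to any point on the tangent to carry an unavoidable perpendicular offset $|a_{y}|$, while the strict speed gap $v_{p_{m}}>v_{e_{m}}$ and the generous $2\pi r_{e}(v_{p_{m}}/v_{e_{m}})$ term in the hypothesis supply enough margin to simultaneously absorb the arc-length penalty $s_{PA}$ and the angular penalty built into the Apollonius-style quadratic.
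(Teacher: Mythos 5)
Your proof is correct, and it actually does more work than the paper, which offers no argument for this lemma beyond the remark that it ``follows easily from the construction'' of $B_{p}({\bf p_{0}},{\bf e_{0}},\bar{t})$. Your quantitative route is sound: bounding the evader's time to any point $Q$ from below by $|EQ|/v_{e_{m}}$ is legitimate (speed cap, independent of the evader's turning constraint), the geometric facts you invoke all hold ($|a_{x}|,|a_{y}|\leq 2r_{p}$ because the tangent point lies on a circle of radius $r_{p}$ whose centre is at distance $r_{p}$ from $P$; $s_{PA}\leq 2\pi r_{p}$ because the $LS$ switching time is at most $2\pi r_{p}/v_{p_{m}}$; the tangent is traversed in the $E$-to-$P$ direction by the definition of $T_{l}^{v}$), and the two reductions $(d+a_{x})\geq d-2r_{p}\geq 2\pi r_{p}(v_{p_{m}}/v_{e_{m}})$ (using $r_{p}<r_{e}$ from $w_{p_{m}}>w_{e_{m}}$) and $k^{2}s_{PA}\leq 2\pi r_{p}(v_{e_{m}}/v_{p_{m}})^{2}$ do give $\ell^{\ast}<0$ and $h(0)>0$, hence strict blocking for all $\ell\geq 0$. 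Two small points of hygiene: for points on the arc you should replace $(a_{x},a_{y})$ by the coordinates of the arc point as well as $s_{PA}$ by the travelled arc length (the bounds $|a_{x}|,|a_{y}|\leq 2r_{p}$ and $s\leq 2\pi r_{p}$ are unchanged, so the estimate goes through verbatim); and the $\tilde{T}_{r}^{v}$ case needs no mirror argument at all, since your algebra never used which circle the arc lies on. What your approach buys over the paper's bare assertion is an explicit margin: the inequality $h>0$ makes precise how the $2\pi r_{e}(v_{p_{m}}/v_{e_{m}})$ term in the distance hypothesis absorbs both the arc-length penalty and the lateral offset, which is exactly the content the paper leaves implicit.
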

\begin{rem}
By Lemma \ref{lem:blocking_curves_blocking_set}, $\tilde{T}_{l}^{v}$
and $\tilde{T}_{r}^{v}$ block the evader from entering the region
shown by crisscross shading in Figure \ref{fig:construction_blocking_2}.
Some examples of blocking sets have been shown in Figures \ref{fig:blocking_evader_behind},
\ref{fig:blocking_evader_front}, and \ref{fig:blocking_front_extreme}.
As can be seen from these examples, at most one curve out of $\tilde{T}_{l}^{v}$
and $\tilde{T}_{r}^{v}$ can end on the left internal boundary or
the right internal boundary for any position of the pursuer and evader
after time $\bar{t}\geq2\pi r_{p}/v_{p_{m}}$. 
\end{rem}
Next we show that for all possible initial positions of the evader,
the blocking set will contain the evader's reachable set at some time
$T_{b}<\infty$.
\begin{lem}
\label{lem:blocking_set_finite_time}For every evader and pursuer
initial positions such that $d_{pe}^{0}\geq2r_{e}+2\pi r_{e}(v_{p_{m}}/v_{e_{m}})$,
the set $R_{e}^{-}({\bf e_{0}},B_{p},T_{b})=\emptyset$ for some time
$T_{b}<\infty$.
\end{lem}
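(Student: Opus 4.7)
My plan is to show, for $\bar{t}$ sufficiently large, that any evader trajectory consistent with membership in $R_e^-(\mathbf{e_0}, B_p, \bar{t})$ is forced into an empty region, by combining the blocking property of $\tilde{T}_l^v$ and $\tilde{T}_r^v$ on the sides with a direct speed comparison at the front of $B_p$. I would proceed in three steps.

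First, I would fix the geometric picture. For $\bar{t} \geq 2\pi r_p/v_{p_m}$ the boundary curves $\tilde{T}_l^v$ and $\tilde{T}_r^v$ have completed their pursuer-circle arcs, and their straight portions, which are parallel to the segment $EP$, extend at rate $v_{p_m}$ in the direction from $P$ past $\mathbf{e_0}$. Together with the arc of $\partial R_p(\mathbf{p_0}, \bar{t})$ joining their tips, these curves bound $B_p(\mathbf{p_0}, \mathbf{e_0}, \bar{t})$ on all sides, with $\mathbf{e_0}$ in the interior once the straight portions reach past $E$.

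Second, I would show that no evader path witnessing a point of $R_e^-(\mathbf{e_0}, B_p, \bar{t})$ can cross $\tilde{T}_l^v$ or $\tilde{T}_r^v$ at any intermediate time. If such a crossing occurred at a point $\mathbf{z}$ at time $t \leq \bar{t}$, Lemma~\ref{lem:blocking_curves_blocking_set} provides a pursuer trajectory along $\tilde{T}_l^v$ (or $\tilde{T}_r^v$) that has already reached $\mathbf{z}$ strictly before time $t$, so $\mathbf{z} \in B_p(\mathbf{p_0}, \mathbf{e_0}, t)$, violating the admissibility condition in the definition of $R_e^-$. Thus any candidate evader position must lie in the channel cut out by the two blocking curves. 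Third, I would compare speeds at the front: within the channel, the evader's position at time $t \leq \bar{t}$ lies within distance $v_{e_m} t$ of $\mathbf{e_0}$, so it is at most at distance $d_{pe}^0 + v_{e_m}\bar{t}$ from $P$ measured along the initial line from $P$ to $\mathbf{e_0}$. The tips of $\tilde{T}_l^v$ and $\tilde{T}_r^v$, meanwhile, sit at distance at least $v_{p_m}\bar{t} - C_0$ from $P$ in that same direction, where $C_0$ depends only on $r_p$ and the initial orientation of the pursuer (the geometric correction due to the circular arc). Since $v_{p_m} > v_{e_m}$ by Assumption~\ref{ass:sup_purs}, choosing
\[
T_b \;=\; \max\!\left(\frac{2\pi r_p}{v_{p_m}},\ \frac{d_{pe}^0 + C_0}{v_{p_m} - v_{e_m}}\right)
\]
forces the blocking front to lie beyond the furthest attainable evader point, so every point the evader can reach inside the channel at time $T_b$ is already in $B_p(\mathbf{p_0}, \mathbf{e_0}, T_b)$. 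Combined with Step~2 this gives $R_e^-(\mathbf{e_0}, B_p, T_b) = \emptyset$.

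The main obstacle I anticipate is the transverse and backward bookkeeping: one has to verify that, within the channel, neither a lateral excursion toward $\tilde{T}_l^v$ or $\tilde{T}_r^v$ nor a retreat toward $P$ exits $B_p$ through its back, i.e.\ that $B_p$ genuinely fills the channel from the circle arcs all the way to the front arc for every $\bar{t} \geq T_b$. The hypothesis $d_{pe}^0 \geq 2r_e + 2\pi r_e(v_{p_m}/v_{e_m})$ is precisely what keeps the evader away from the pursuer's circles during the construction window (via the reasoning of Lemma~\ref{lem:evader_circles_forbidden}), so the blocking property of Lemma~\ref{lem:blocking_curves_blocking_set} is valid on the entire straight portions of $\tilde{T}_l^v$ and $\tilde{T}_r^v$, and the speed comparison in Step~3 is not undermined by a degenerate geometry near $P$.
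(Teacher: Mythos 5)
Your proposal is correct, and its load-bearing step is the same as the paper's: Lemma \ref{lem:blocking_curves_blocking_set} confines every admissible evader trajectory to the channel between $\tilde{T}_{l}^{v}$ and $\tilde{T}_{r}^{v}$, after which one only needs the pursuer's set to overtake the evader's inside that channel. Where you differ is in how that second half is discharged. The paper observes that the portion of the left (or right) reachable set missing from $B_{p}({\bf p_{0}},{\bf e_{0}},\bar{t})$ lies beyond the blocking curves and is therefore unreachable by the evader, and then invokes Lemmas \ref{lem:left_containment} and \ref{lem:right_containment} to conclude $R_{e}^{-}({\bf e_{0}},B_{p},t_{m})=\emptyset$ at $t_{m}=\max(T_{l},T_{r})$, giving $T_{b}\leq\max(T_{l},T_{r})<\infty$ with no new estimates. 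You instead re-derive the overtaking by a direct radius comparison, which yields an explicit bound $T_{b}\leq\max\left(2\pi r_{p}/v_{p_{m}},\,(d_{pe}^{0}+C_{0})/(v_{p_{m}}-v_{e_{m}})\right)$. The ``backward and transverse bookkeeping'' you flag --- that $B_{p}$ genuinely fills the channel out to distance $v_{p_{m}}\bar{t}-C_{0}$ from $P$, including behind the pursuer --- is exactly what the appendix Lemmas \ref{lem:kinematic_kinematic} and \ref{lem:kinematic_pursuer_contained} supply (the reachable disk of a kinematic point released from ${\bf p_{0}}$ after delay $(2\pi+2)r_{p}/v_{p_{m}}$, minus the pursuer circle, sits inside $R_{p}^{l}$), so you can close that gap by citation with $C_{0}=(2\pi+2)r_{p}$ rather than redoing the estimate; the excluded pursuer circles are harmless since the distance hypothesis keeps the evader out of them (Lemma \ref{lem:pursuer_circles_forbidden}). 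In short, your route is more self-contained and produces a concrete $T_{b}$, while the paper's is shorter because it reuses the containment times already proved for $R_{p}^{l}$ and $R_{p}^{r}$.
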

\begin{IEEEproof}
Recall that, if $d_{pe}^{0}\geq2r_{e}+2\pi r_{e}(v_{p_{m}}/v_{e_{m}})$
then each of the left reachable set and the right reachable set contain
the evader's reachable set at times $T_{l}$ and $T_{r}$ respectively
(by Lemma \ref{lem:left_containment} and Lemma \ref{lem:right_containment}).
Now, since either the blocking curve $\tilde{T}_{l}^{v}$ or the blocking
curve $\tilde{T}_{r}^{v}$ may end on the internal boundary, without
loss of generality, let $\tilde{T}_{l}^{v}$ end on the left internal
boundary (see Definition \ref{def:internal_boundary}) as shown in
Figure \ref{fig:construction_blocking_1}. Now consider the left reachable
set of the pursuer. Some portion of the left reachable set is not
a part of the blocking set. Thus, at time $t_{m}=\max(T_{l},T_{r})$,
$B_{p}({\bf p_{0}},{\bf e_{0}},t_{m})$ may not contain some part
of $R_{e}({\bf e_{0}},t_{m})$ even if the left reachable set contains
the reachable set completely. However, since $d_{pe}^{0}\geq2r_{e}+2\pi r_{e}(v_{p_{m}}/v_{e_{m}})$
the evader cannot enter this part as $\tilde{T}_{l}^{v}$ and $\tilde{T}_{r}^{v}$
are blocking curves (by Lemma \ref{lem:blocking_curves_blocking_set}).
Apart from this portion the left reachable set is a part of $B_{p}({\bf p_{0}},{\bf e_{0}},t_{m})$.
Thus, the remaining part of $R_{e}({\bf e_{0}},t_{m})$ is contained
in the blocking set and we have $R_{e}^{-}({\bf e_{0}},B_{p},t_{m})=\emptyset$.
Hence, for some time $T_{b}\leq t_{m}$ we will have $R_{e}^{-}({\bf e_{0}},B_{p},T_{b})=\emptyset$
. 
\end{IEEEproof}
\begin{figure*}
\begin{minipage}[t]{0.3\textwidth}%
\begin{center}
\includegraphics[scale=0.18]{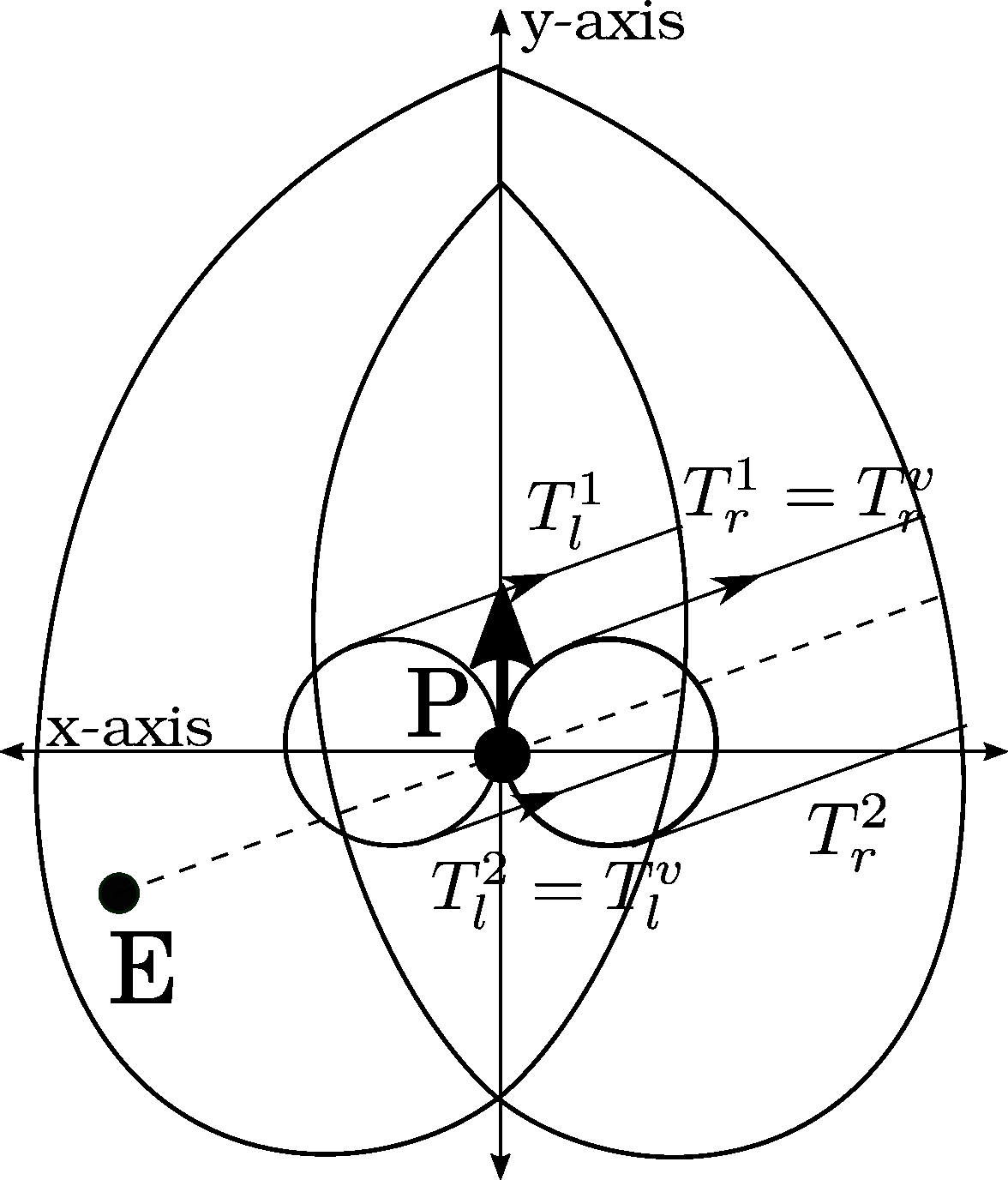}
\par\end{center}
\caption{\label{fig:construction_blocking_1}Construction of $B_{p}({\bf p_{0}},{\bf e_{0}},\bar{t})$}
\end{minipage}\hfill{}%
\begin{minipage}[t]{0.3\textwidth}%
\begin{center}
\includegraphics[scale=0.18]{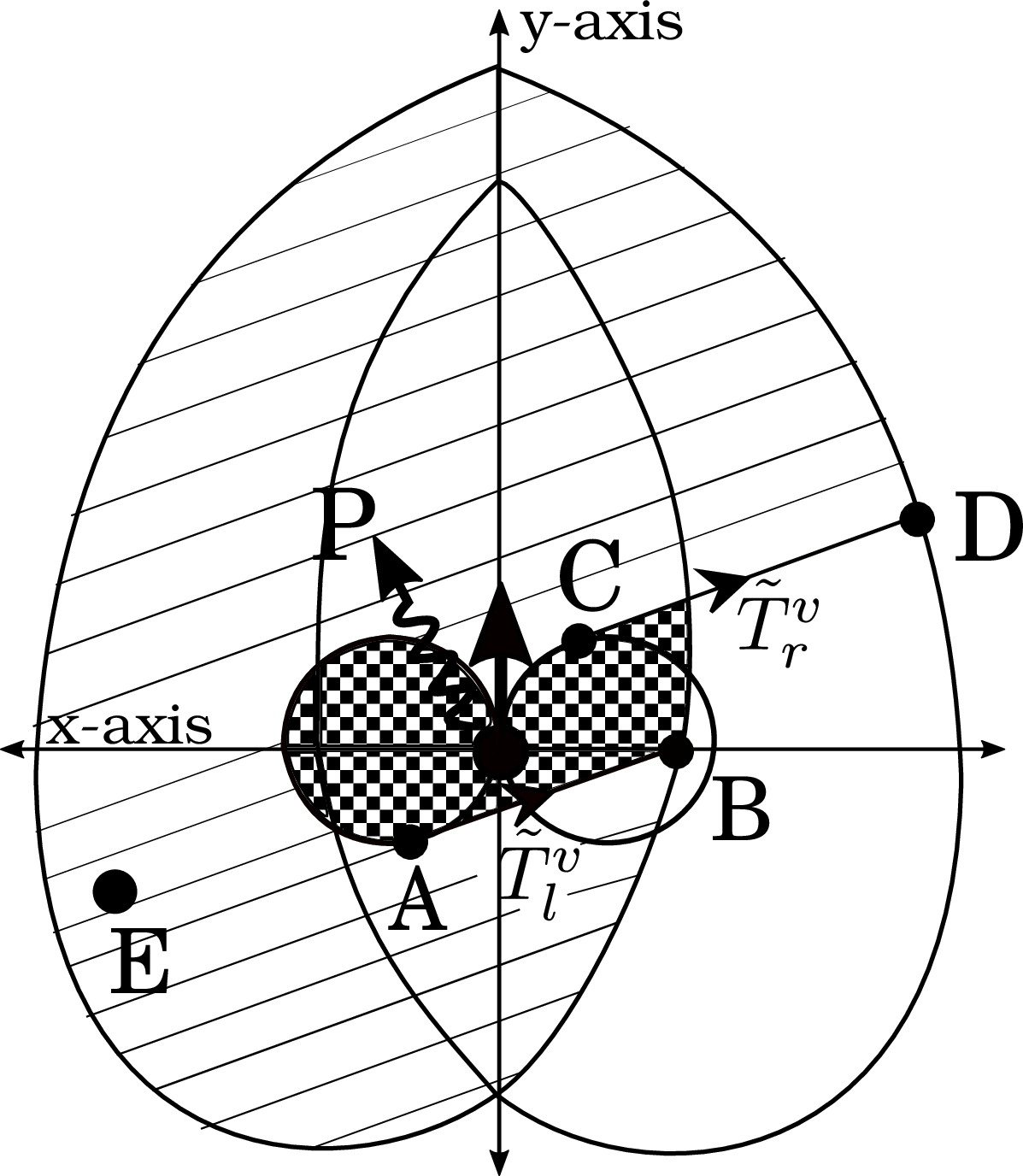}
\par\end{center}
\caption{\label{fig:construction_blocking_2}$B_{p}({\bf p_{0}},{\bf e_{0}},\bar{t})$:
Evader behind on left side (shaded region)}
\end{minipage}\hfill{}%
\begin{minipage}[t]{0.3\textwidth}%
\begin{center}
\includegraphics[scale=0.1]{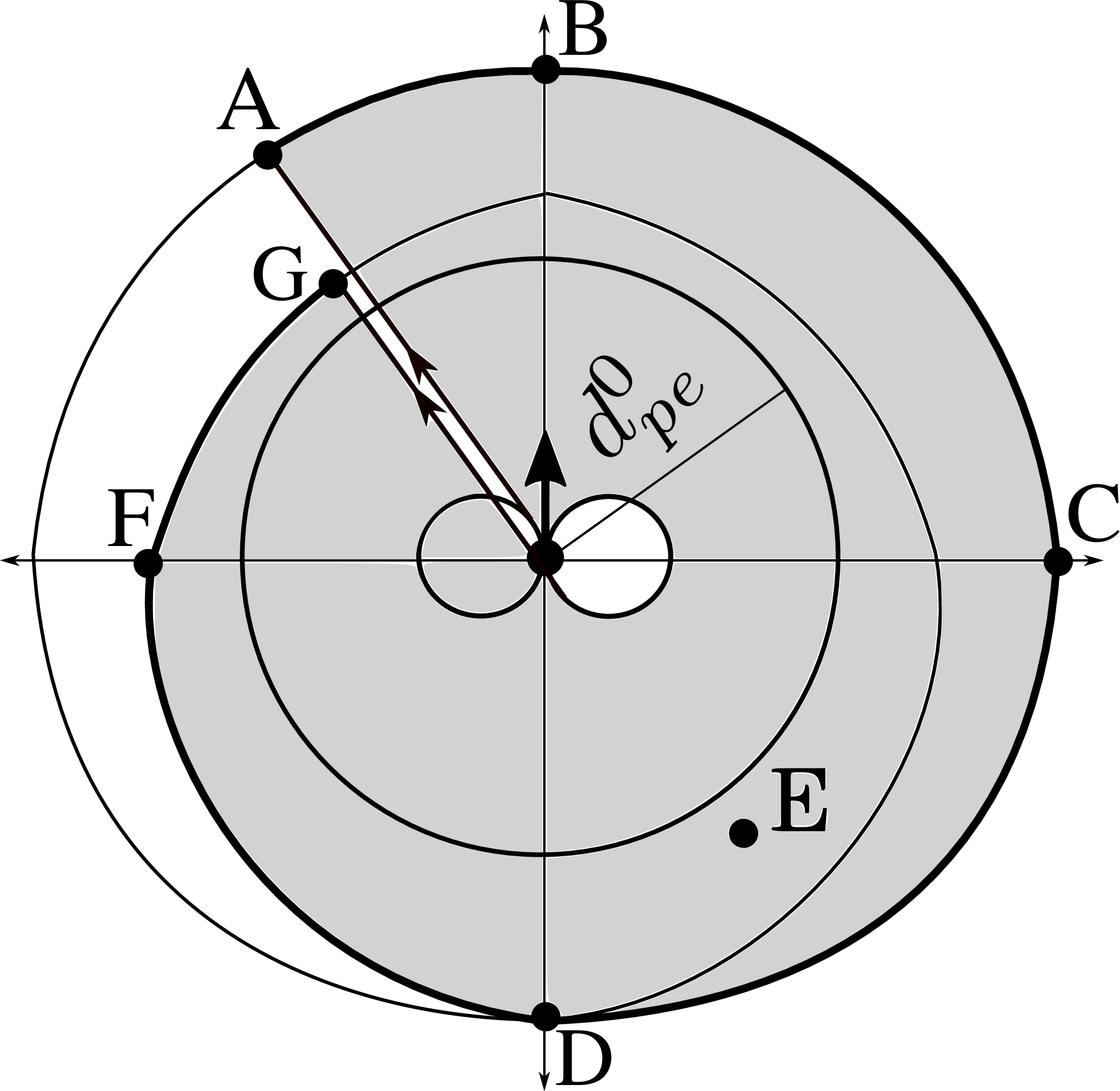}
\par\end{center}
\caption{\label{fig:blocking_evader_behind}$B_{p}({\bf p_{0}},{\bf e_{0}},\bar{t})$:
Evader behind on right side (shaded region)}
\end{minipage}
\end{figure*}

\subsection*{Continuous containment by truncated left reachable set}

In this section we analyze the conditions under which $R_{p}^{l_{t}}({\bf p_{0}},\bar{t})$
and $R_{p}^{r_{t}}({\bf p_{0}},\bar{t})$ can contain the evader's
reachable set at some time $\bar{t}$. First we define two $LS$ ($RS$)
blocking trajectories for the truncated left (right) reachable set.

Let $PA$ be the curve defined by (\ref{eq:left_reachable_set}) with
$t_{1}=0$ and is shown in Figure \ref{fig:Left-Reachable-Set}. We
call this curve $BL^{1}$. Similarly, let $PCB$ be the curve defined
by (\ref{eq:left_reachable_set}) with $t_{1}=2\pi r_{p}/v_{p_{m}}$
and is shown in Figure \ref{fig:Left-Reachable-Set}. We call this
curve $BL^{2}$. Analogously, $BR^{1}$ and $BR^{2}$ are defined
for the truncated right reachable set.
\begin{rem}
Now, the left reachable set $R_{p}^{l}({\bf p_{0}},T_{l})$ contains
the evader's reachable set at time $T_{l}$ (by Lemma \ref{lem:left_containment}).
Since the truncated reachable set does not contain $PA$ and the anti-clockwise
circle $A_{p}(0)$, hence if evader's reachable set intersects the
line $PA$ or the anti-clockwise circle, continuous containment by
$R_{p}^{l}({\bf p_{0}},T_{l})$ will not happen. However, such a situation
is avoided since $BL^{1}$ and $BL^{2}$ act as blocking curves for
some initial conditions of the pursuer and the evader, and block the
evader from entering the points in $PA$ and $A_{p}(0)$ analogous
to what $\tilde{T}_{l}^{v}$ and $\tilde{T}_{r}^{v}$ achieve for
the blocking set. This is shown in Proposition \ref{prop:left_right_trunc_continuous_behind}. 
\end{rem}
First we define some terminology. Consider Figure \ref{fig:BL2_blocking}.
Let $OPQ$ be the line passing through the pursuer position $P$ and
perpendicular to the orientation of the pursuer. If the evader is
in the closed half plane in the direction opposite to the orientation
of the pursuer then we say that the evader is behind the pursuer.
If the evader is located in the open half plane in the direction of
the orientation of the pursuer then we say that the evader is located
in the front of the pursuer. For $R_{p}^{l_{t}}({\bf p_{0}},T_{l})$,
let $PF$ be the line as shown in Figure \ref{fig:BL2_blocking} at
an angle $\theta$ with respect to $PS$. The shaded region between
$PO$ and $PF$, for $\theta=\arccos\left(2\pi r_{p}v_{e_{m}}/(v_{p_{m}}d_{pe}^{0})\right)$,
is denoted by $S^{l}(T_{l})$. Similarly, for $R_{p}^{r_{t}}({\bf p_{0}},T_{r})$
let $PF$ be the line as shown in Figure \ref{fig:BR2_blocking} at
an angle $\theta$ with respect to $PS$. The shaded region between
$PO$ and $PF$ for, $\theta=\arccos\left(2\pi r_{p}v_{e_{m}}/(v_{p_{m}}d_{pe}^{0})\right)$,
is denoted by $S^{r}(T_{r})$.

\noindent 
\begin{figure*}
\begin{minipage}[t]{0.3\textwidth}%
\begin{center}
\includegraphics[scale=0.4]{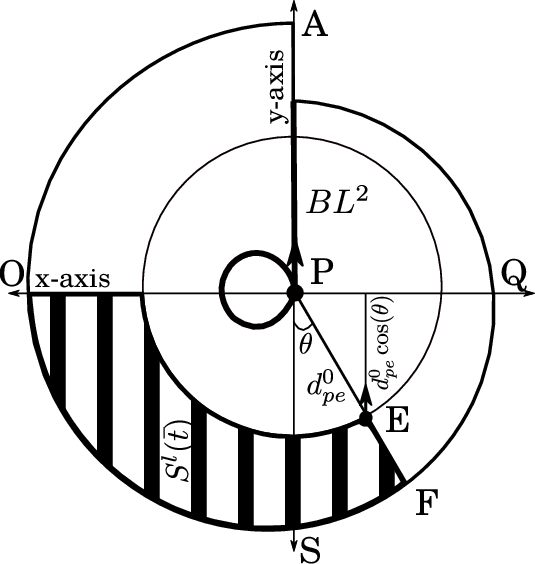}
\par\end{center}
\caption{\label{fig:BL2_blocking}$BL^{2}$: Blocking $LS$ trajectory}
\end{minipage}\hfill{}%
\begin{minipage}[t]{0.3\textwidth}%
\begin{center}
\includegraphics[scale=0.4]{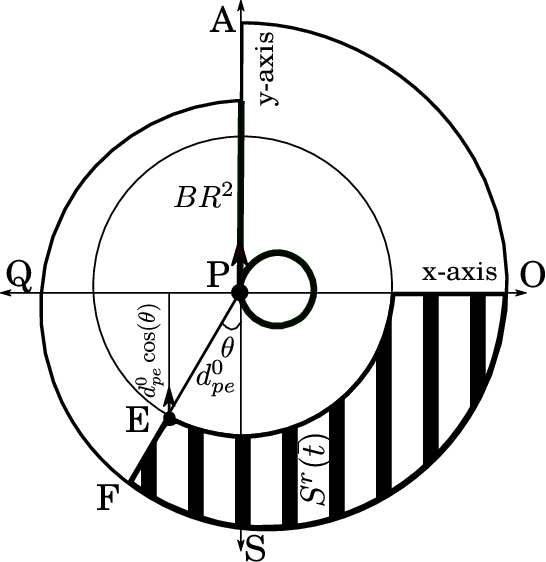}
\par\end{center}
\caption{\label{fig:BR2_blocking}$BR^{2}$: Blocking $RS$ trajectory}
\end{minipage}\hfill{}%
\begin{minipage}[t]{0.3\textwidth}%
\begin{center}
\includegraphics[scale=0.09]{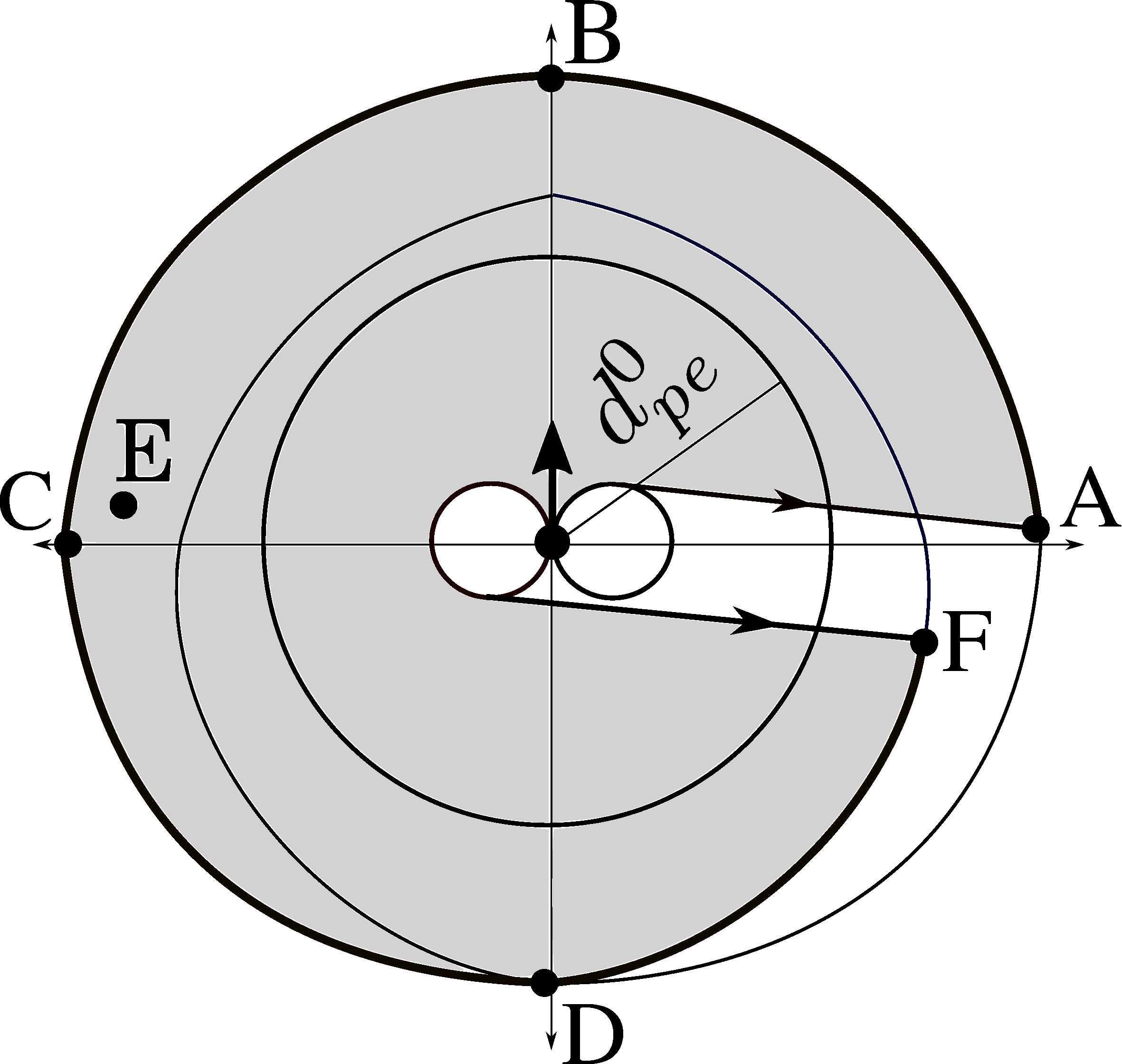}
\par\end{center}
\caption{\label{fig:blocking_front_extreme}$B_{p}({\bf p_{0}},{\bf e_{0}},T_{b})$
as active set}
\end{minipage}
\end{figure*}

\begin{lem}
\label{lem:blocking_curves_left_truncated_set}Let $d_{pe}^{0}\geq2r_{e}+2\pi r_{e}(v_{p_{m}}/v_{e_{m}})$
and let the evader's initial position be in $S^{l}(T_{l})$. Then,
$R_{p}^{l_{t}}({\bf p_{0}},T_{l})$ will contain the evader's reachable
set continuously.
\end{lem}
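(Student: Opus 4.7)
The plan is to reduce to the already-proven Lemma \ref{lem:left_containment}, which supplies $R_e({\bf e_0}, T_l) \subseteq R_p^l({\bf p_0}, T_l)$, and then rule out via a blocking-curve argument that the evader can ever reach the only points of the left reachable set that are removed by truncation, namely the segment $PA$ (and the interior of $A_p(0)$ to which $PA$ provides access). Once that exclusion is in place, the truncated set $R_p^{l_t}({\bf p_0}, T_l)$ (a $CSP$ by Lemma \ref{lem:truncated_left_continuous}) will continuously contain every $CSE$ sitting inside $R_e({\bf e_0}, T_l)$, yielding the desired continuous containment.

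First I would verify that, under the hypothesis ${\bf e_0} \in S^l(T_l)$, both $BL^1$ and $BL^2$ are blocking $LS$ trajectories in the sense defined just above the lemma. For $BL^1 = PA$, the pursuer reaches a point at arclength $s$ from $P$ in time $s/v_{p_m}$. The evader, lying in the wedge between $PO$ and $PF$ at angle at least $\theta = \arccos\left(2\pi r_p v_{e_m}/(v_{p_m} d_{pe}^0)\right)$ from $PS$, must travel at speed $v_{e_m} < v_{p_m}$ a distance that, once projected onto $BL^1$, dominates $s$; a triangle-inequality computation using the initial-distance bound $d_{pe}^0 \geq 2r_e + 2\pi r_e v_{p_m}/v_{e_m}$ then gives the blocking inequality. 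For $BL^2 = PCB$, the pursuer incurs the additional delay $2\pi r_p / v_{p_m}$ of a full loop around the anti-clockwise circle before rolling onto its tangent; the definition $\cos\theta = 2\pi r_p v_{e_m}/(v_{p_m} d_{pe}^0)$ is calibrated precisely so that, for any evader initial position in $S^l(T_l)$, the evader cannot cover the loop-time worth of extra distance and still beat the pursuer to any point on the outgoing tangent, so $BL^2$ is blocking as well.

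Once both $BL^1$ and $BL^2$ are blocking, no admissible evader trajectory can cross either curve at any time $t \le T_l$. This excludes the evader's reachable set from all of $PA$ and from $\text{int}(A_p(0))$, giving
\[
R_e({\bf e_0}, T_l) \subseteq R_p^l({\bf p_0}, T_l)\setminus PA = R_p^{l_t}({\bf p_0}, T_l).
\]
Since $R_p^{l_t}({\bf p_0}, T_l)$ is a $CSP$ by Lemma \ref{lem:truncated_left_continuous}, every $CSE$ $R_e^c({\bf e_0}, T_l) \subseteq R_e({\bf e_0}, T_l)$ automatically satisfies $R_e^{c-}({\bf e_0}, R_p^{l_t}, T_l) = \emptyset$, which is the definition of continuous containment.

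The main obstacle will be the quantitative blocking inequality for $BL^2$. The $BL^1$ calculation reduces to a one-line projection at $t = 0$, but $BL^2$ requires a time-parameterized comparison of the pursuer's arrival time along the loop-plus-tangent curve against the evader's worst-case straight-line arrival time at each point of the outgoing tangent, and the angle $\theta$ is exactly the value that makes this inequality tight. Careful identification of the extremal evader initial position in $S^l(T_l)$ together with the extremal point of $BL^2$, and the joint role of the hypothesis $d_{pe}^0 \geq 2r_e + 2\pi r_e v_{p_m}/v_{e_m}$ in that extremal computation, will be the delicate part of the argument.
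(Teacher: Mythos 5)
Your proposal follows essentially the same route as the paper's proof: invoke Lemma \ref{lem:left_containment} for gross containment of $R_{e}({\bf e_{0}},T_{l})$ in $R_{p}^{l}({\bf p_{0}},T_{l})$, then show that $BL^{1}$ and $BL^{2}$ are blocking $LS$ trajectories (with the angle $\theta=\arccos\left(2\pi r_{p}v_{e_{m}}/(v_{p_{m}}d_{pe}^{0})\right)$ defining $S^{l}(T_{l})$ calibrated exactly so the evader cannot beat the pursuer's full-loop delay to the line through $P$), so the evader never reaches $PA$ or $A_{p}(0)$ and the continuous safe region with respect to the truncated set is empty. The only nitpick is that the blocking argument does not literally give the set inclusion $R_{e}({\bf e_{0}},T_{l})\subseteq R_{p}^{l_{t}}({\bf p_{0}},T_{l})$ (the evader's reachable set, defined without reference to the pursuer, may still intersect $PA$ and $\text{int}(A_{p}(0))$); it gives emptiness of the safe region, which is what continuous containment actually requires and which your final paragraph correctly states.
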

\begin{IEEEproof}
\textbf{\textit{$BL^{1}$ as blocking trajectory}} : Since evader
is in $S^{l}(T_{l})$, it is behind the pursuer. All the points on
the trajectory $BL^{1}$ can be reached by the pursuer before the
evader can reach them. Hence, it becomes the blocking $LS$ trajectory.
Thus $BL^{1}$ will prevent the evader from reaching any point on
$PA$ from the left side of the pursuer.

\textbf{\textit{$BL^{2}$ as blocking trajectory}} : If $d_{pe}^{0}\geq2r_{e}+2\pi r_{e}(v_{p_{m}}/v_{e_{m}})$
then the pursuer can reach all the points on the anti-clockwise circle
of the pursuer before the evader (This follows from analysis similar
to Lemma \ref{lem:evader_circles_forbidden}). Since the boundary
of $A_{p}(0)$ forms a part of $BL^{2}$ we can conclude that $BL^{2}$
prevents the evader from entering $A_{p}(0)$. 

Now, if the evader tries to reach any point on line $PA$ from the
right, while the pursuer is initially traveling on $A_{p}(0)$, we
will show that $BL^{2}$ will intercept the evader. If the pursuer
follows $BL^{2}$, it will reach point $P$ again after encircling
the anticlockwise circle once, for which it will require time $t_{p}:=2\pi r_{p}/v_{p_{m}}$.
Thus if the evader requires time $t_{e}\geq t_{p}$ to reach the line
$OPQ$ then $BL^{2}$ will be a blocking $LS$ trajectory. Let the
evader be located at an angle $\theta$ with respect to line $PS$
as shown in Figure \ref{fig:BL2_blocking}. At this position the evader
requires minimum time to reach a point on line $OPQ$ if it is pointing
upwards. This minimum time is $d_{pe}^{0}\cos(\theta)/v_{e_{m}}$.
Thus $t_{e}\geq d_{pe}^{0}\cos(\theta)/v_{e_{m}}$. Hence $BL^{2}$
is the blocking trajectory if 
\[
d_{pe}^{0}\cos(\theta)/v_{e_{m}}\geq2\pi r_{p}/v_{p_{m}};\quad\cos(\theta)\geq2\pi r_{p}v_{e_{m}}/(v_{p_{m}}d_{pe}^{0})
\]
This implies that for $\theta\leq\arccos\left(2\pi r_{p}v_{e_{m}}/(v_{p_{m}}d_{pe}^{0})\right)$,
$BL^{2}$ will act as blocking trajectory. 

\noindent The left reachable set contains the evader's reachable set
at time $T_{l}$ (by Lemma \ref{lem:left_containment}). Further,
$BL^{1}$ and $BL^{2}$ act as blocking curves and block the evader
from entering the points in $PA$ and $A_{p}(0)$ and the claim follows.
\end{IEEEproof}
\begin{defn}
Let $T_{l}^{c}$ be the minimum time such that $R_{e}({\bf e_{0}},T_{l}^{c})\subseteq R_{p}^{l_{t}}({\bf p_{0}},T_{l}^{c})$
continuously i.e. $R_{e}^{-}({\bf e_{0}},R_{p}^{l_{t}},T_{l}^{c})=\emptyset$.
If for some initial positions of the pursuer and the evader, $R_{p}^{l_{t}}({\bf p_{0}},T_{l}^{c})$
does not contain $R_{e}({\bf e_{0}},T_{l}^{c})$ continuously then
we say $R_{e}^{-}({\bf e_{0}},R_{p}^{l_{t}},T_{l}^{c})=\emptyset$
at $T_{l}^{c}=\infty$.
\end{defn}
\begin{rem}
From the above analysis it follows that if $T_{l}^{c}<\infty$ then
$T_{l}^{c}=T_{l}$. However, if $T_{l}^{c}=\infty$ this implies that
the truncated left reachable set cannot contain evader's reachable
set continuously.
\end{rem}
\begin{lem}
\label{lem:blocking_curves_right_truncated_set}Let $d_{pe}^{0}\geq2r_{e}+2\pi r_{e}(v_{p_{m}}/v_{e_{m}})$
and let the evader's initial position be located in $S^{r}(T_{r})$.
Then, $R_{p}^{r_{t}}({\bf p_{0}},T_{r})$ will contain the evader's
reachable set continuously at some finite time $\bar{t}\geq2\pi r_{p}/v_{p_{m}}$.
\end{lem}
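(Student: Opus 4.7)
The plan is to mirror the argument of Lemma \ref{lem:blocking_curves_left_truncated_set}, interchanging the roles of the clockwise and anti-clockwise pursuer circles. The ingredients are already in place: Lemma \ref{lem:right_containment} yields a finite time $T_r$ at which the full right reachable set $R_p^{r}(\mathbf{p}_0,T_r)$ contains the evader's reachable set; the truncated version $R_p^{r_t}(\mathbf{p}_0,T_r)$ differs from this only by the deleted segment $PA$ and the interior of the clockwise circle $C_p(0)$. Hence continuous containment will follow once I exhibit pursuer trajectories that prevent the evader from ever reaching the deleted portion before time $T_r$.

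First I would invoke $BR^{1}$ (the curve defined by (\ref{eq:left_reachable_set}) specialized to the clockwise case with switching time $t_1=0$) as a blocking $RS$ trajectory. Since the hypothesis places the evader inside $S^{r}(T_r)$, the evader lies in the closed half-plane behind the pursuer, and so every point along $BR^{1}$ is visited by the pursuer strictly before the evader can reach it. This blocks the evader from approaching $PA$ from the side opposite to $C_p(0)$.

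Next I would use $BR^{2}$ (the curve with switching time $t_1 = 2\pi r_p/v_{p_m}$, which first encircles $C_p(0)$ completely and then runs along a tangent) as a blocking $RS$ trajectory for the remaining directions. By the analogue of Lemma \ref{lem:pursuer_circles_forbidden} for the clockwise circle, the bound $d_{pe}^{0}\geq 2r_e+2\pi r_e(v_{p_m}/v_{e_m})$ implies the pursuer reaches every point of $C_p(0)$ earlier than the evader, so no evader trajectory can penetrate $\mathrm{int}(C_p(0))$. To block evader approaches to $PA$ from the $C_p(0)$-side while the pursuer is traversing the circle, I would replay the angular computation in the proof of Lemma \ref{lem:blocking_curves_left_truncated_set}: the time for the pursuer to complete the circle and return to $P$ is $2\pi r_p/v_{p_m}$, while an evader at angle $\theta$ from $PS$ needs at least $d_{pe}^{0}\cos\theta/v_{e_m}$ to reach the line $OPQ$; the required inequality $\cos\theta\geq 2\pi r_p v_{e_m}/(v_{p_m}d_{pe}^{0})$ is exactly the defining condition of $S^{r}(T_r)$.

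Combining the three observations — Lemma \ref{lem:right_containment} for containment by the untruncated right reachable set, and $BR^{1}$, $BR^{2}$ for blocking entry into the deleted set $PA\cup\mathrm{int}(C_p(0))$ — gives $R_e^{-}(\mathbf{e}_0,R_p^{r_t},T_r)=\emptyset$ at some $\bar t\leq T_r$, with $\bar t\geq 2\pi r_p/v_{p_m}$ since the construction of the truncated reachable set assumes this lower bound. The only non-routine step is verifying that $S^{r}(T_r)$ is defined so as to be precisely the mirror image of $S^{l}(T_l)$ under the clockwise/anti-clockwise symmetry; this is immediate from the definitions of $S^{l}(T_l)$ and $S^{r}(T_r)$ given just before Lemma \ref{lem:blocking_curves_left_truncated_set}, so no genuine obstacle arises beyond bookkeeping.
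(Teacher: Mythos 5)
Your proposal is correct and follows exactly the route the paper intends: the paper omits a proof of this lemma precisely because it is the clockwise/anti-clockwise mirror of Lemma \ref{lem:blocking_curves_left_truncated_set}, and your argument — containment via Lemma \ref{lem:right_containment}, then $BR^{1}$ and $BR^{2}$ as blocking $RS$ trajectories (using the analogue of Lemma \ref{lem:pursuer_circles_forbidden} and the angular bound defining $S^{r}(T_{r})$) to keep the evader out of the deleted set — is that mirror argument spelled out. No gaps.
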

\begin{rem}
The minimum time for continuous containment by the truncated right
reachable set is denoted by $T_{r}^{c}$. If $T_{r}^{c}<\infty$ then
$T_{r}^{c}=T_{r}$. However, if $T_{r}^{c}=\infty$ this implies that
the truncated right reachable set cannot contain evader's reachable
set continuously.
\end{rem}
\begin{prop}
\label{prop:left_right_trunc_continuous_behind}Let $d_{pe}^{0}\geq2r_{e}+2\pi r_{e}(v_{p_{m}}/v_{e_{m}})$.
If the evader is behind the pursuer then the evader's reachable set
is contained continuously in the left reachable set or the right reachable
set or both at time $t_{lr}=\min(T_{l}^{c},T_{l}^{r})$.
\end{prop}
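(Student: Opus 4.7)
The plan is to prove the proposition by case analysis on which side of the pursuer the evader sits, using the two blocking-trajectory lemmas (Lemma~\ref{lem:blocking_curves_left_truncated_set} and Lemma~\ref{lem:blocking_curves_right_truncated_set}) to conclude finiteness of $T_l^c$ or $T_r^c$ (hence of $t_{lr}$).

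First I would set up coordinates so that the pursuer is at the origin with its orientation along $+y$; the line $OPQ$ is then the $x$-axis and $PS$ is the $-y$ ray. Saying the evader is ``behind'' means its initial position ${\bf e_0}$ lies in $\{y \le 0\}$, and I parametrize it by an angle $\theta\in[0,\pi/2]$ measured from $PS$, together with a left/right label. The first step is to show that under the standing distance hypothesis $d_{pe}^{0}\ge 2r_e + 2\pi r_e(v_{p_m}/v_{e_m})$ the threshold angle $\theta^{\ast}=\arccos\!\bigl(2\pi r_p v_{e_m}/(v_{p_m}d_{pe}^{0})\bigr)$ is ``large enough.'' Using $r_p<r_e$ (since $w_{p_m}>w_{e_m}$) and $v_{p_m}>v_{e_m}$, the hypothesis gives $\cos\theta^{\ast}\le(r_p/r_e)(v_{e_m}/v_{p_m})^2$, a quantity strictly less than $1$, so $\theta^{\ast}>0$ and in fact sufficiently close to $\pi/2$ that the left wedge $S^l(T_l)$ covers the entire left half of the ``behind'' region and $S^r(T_r)$ covers the entire right half.

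Next, the case analysis: (i) if ${\bf e_0}$ is behind and to the left (or on $PS$), show ${\bf e_0}\in S^l(T_l)$, so Lemma~\ref{lem:blocking_curves_left_truncated_set} applies and $T_l^{c}=T_l<\infty$; (ii) if ${\bf e_0}$ is behind and to the right (or on $PS$), argue symmetrically via $S^r(T_r)$ and Lemma~\ref{lem:blocking_curves_right_truncated_set} to get $T_r^{c}=T_r<\infty$. In either subcase, one of $T_l^c,T_r^c$ is finite, so $t_{lr}=\min(T_l^c,T_r^c)<\infty$, and by construction the corresponding truncated reachable set contains $R_e({\bf e_0},t_{lr})$ continuously at that time. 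Concatenating the two cases (with overlap exactly on $PS$, where both hold) yields the proposition.

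The main obstacle is the boundary/near-perpendicular sub-case $\theta\in(\theta^{\ast},\pi/2]$, where $BL^2$ (or $BR^2$) is no longer obviously a blocking trajectory because the perpendicular distance $d_{pe}^{0}\cos\theta$ to $OPQ$ shrinks. To close this gap I would argue geometrically that when $\theta$ is close to $\pi/2$ the evader is so far to the left (or right) of the pursuer that the evader's reachable set at time $T_l$ (resp.\ $T_r$) never intersects the excised segment $PA$ nor the interior of the anti-clockwise (resp.\ clockwise) pursuer-circle; the blocking role of $BL^2$ (resp.\ $BR^2$) is then vacuous, and continuous containment by the truncated set follows from the non-truncated containment in Lemma~\ref{lem:left_containment} (resp.\ Lemma~\ref{lem:right_containment}) together with Lemma~\ref{lem:truncated_left_continuous} (resp.\ Lemma~\ref{lem:truncated_right_continuous}). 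Executing this bookkeeping cleanly, without invoking the figures, is the delicate part; everything else is a direct appeal to the previously established blocking-trajectory lemmas.
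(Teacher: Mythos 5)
Your skeleton is the same as the paper's: the paper's entire proof is the one\textendash line observation that an evader behind the pursuer lies in $S^{l}$ or $S^{r}$ or both, followed by an appeal to Lemma~\ref{lem:blocking_curves_left_truncated_set} and Lemma~\ref{lem:blocking_curves_right_truncated_set}. You reproduce that, and you are right to worry about the angles $\theta\in(\theta^{*},\pi/2]$ with $\theta^{*}=\arccos\bigl(2\pi r_{p}v_{e_m}/(v_{p_m}d_{pe}^{0})\bigr)$: your bound $\cos\theta^{*}\leq(r_{p}/r_{e})(v_{e_m}/v_{p_m})^{2}$ is correct but only gives $\theta^{*}>0$, not $\theta^{*}$ close to $\pi/2$, so your initial claim that each wedge ``covers the entire left (right) half of the behind region'' is not established and is in tension with your own later paragraph.

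The genuine problem is the patch you propose for that uncovered range. You want to argue that for $\theta$ near $\pi/2$ the set $R_{e}({\bf e_{0}},T_{l})$ never meets the excised ray $PA$ nor $\text{int}(A_{p}(0))$, so that the blocking role of $BL^{2}$ is vacuous and continuous containment follows from Lemma~\ref{lem:left_containment} alone. This step fails in general. The containment time $T_{l}$ must satisfy (up to the bounded overhead of the initial arc) $v_{p_m}T_{l}\gtrsim d_{pe}^{0}+v_{e_m}T_{l}$, i.e.\ $T_{l}\gtrsim d_{pe}^{0}/(v_{p_m}-v_{e_m})$, while the evader's reachable set reaches the pursuer's initial position $P$ (and hence touches both $PA$ and the pursuer circle, whose nearest points to ${\bf e_{0}}$ are at distance about $d_{pe}^{0}$) already by time $d_{pe}^{0}/v_{e_m}$. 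Whenever $v_{e_m}<v_{p_m}<2v_{e_m}$ we have $d_{pe}^{0}/(v_{p_m}-v_{e_m})>d_{pe}^{0}/v_{e_m}$, so $R_{e}({\bf e_{0}},T_{l})$ has already engulfed the excised set by the containment time, and this happens for \emph{every} admissible $d_{pe}^{0}$, not just small ones. So the intersection is nonempty and the argument cannot be made vacuous; what is actually needed (and what the paper's Lemma~\ref{lem:blocking_curves_left_truncated_set} supplies in its range of validity) is that the evader cannot reach those points \emph{without first entering} $R_{p}^{l_{t}}({\bf p_{0}},t)$, which is a statement about blocking trajectories, not about emptiness of $R_{e}\cap(PA\cup\text{int}(A_{p}(0)))$. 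To close the case $\theta\in(\theta^{*},\pi/2]$ you would instead have to show that for such positions the \emph{other} truncated set (or the pair $BL^{1}$, $BR^{1}$ together with Lemma~\ref{lem:pursuer_circles_forbidden}) does the blocking \textendash{} which is essentially the geometric content the paper packs, without detailed justification, into the assertion that the behind half\textendash plane equals $S^{l}\cup S^{r}$.
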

\begin{IEEEproof}
If the evader is behind the pursuer at a distance greater than $d_{pe}^{0}\geq2r_{e}+2\pi r_{e}(v_{p_{m}}/v_{e_{m}})$
it is either located in $S^{l}(T_{l}^{c})$ or $S^{r}(T_{r}^{c})$
or both and the claim follows from Lemma \ref{lem:blocking_curves_left_truncated_set}
and Lemma \ref{lem:blocking_curves_right_truncated_set}.
\end{IEEEproof}

\subsection*{Representative set and frontier boundaries}

Next, we define the representative set at time $t\geq2\pi r_{p}/v_{p_{m}}$
which comprises of the blocking set, truncated left reachable set,
and the truncated right reachable set. 
\begin{defn}
\textbf{\textit{Representative set }}at time $t\geq2\pi r_{p}/v_{p_{m}}$
is 
\begin{eqnarray*}
\mathfrak{R}({\bf p_{0}},{\bf e_{0}},t): & = & \{B_{p}({\bf p_{0}},{\bf e_{0}},t),R_{p}^{l_{t}}({\bf p_{0}},t),R_{p}^{r_{t}}({\bf p_{0}},t)\}
\end{eqnarray*}

Let $T_{a}$ be the time such that for some $\tilde{R}_{p}^{c}\in\mathfrak{R}({\bf p_{0}},{\bf e_{0}},T_{a})$
we have $R_{e}^{-}({\bf e_{0}},\tilde{R}_{p}^{c},T_{a})=\emptyset$
and for all $t<T_{a}$, $R_{e}^{-}({\bf e_{0}},R_{p}^{c},t)\neq\emptyset$
for all $R_{p}^{c}\in\mathfrak{R}({\bf p_{0}},{\bf e_{0}},t)$ i.e.
$T_{a}=\min\{T_{b},T_{l}^{c},T_{r}^{c}\}$. We denote this set $\tilde{R}_{p}^{c}\in\mathfrak{R}({\bf p_{0}},{\bf e_{0}},T_{a})$
by $\mathcal{A}_{p}(T_{a})$ and call it the \textbf{\textit{active
pursuer set}}.
\end{defn}
Note that, since $R_{e}^{-}({\bf e_{0}},B_{p},T_{b})=\emptyset$ for
some time $T_{b}<\infty$ (Lemma \ref{lem:blocking_set_finite_time}),
we have $T_{a}<\infty$. 
\begin{rem}
The capture criterion, which could not be explained using only containment
of reachable sets can now be explained using the active set and $LCSE$.
Consider the position of the pursuer and evader as shown in the Figure
\ref{fig:invalid_containment}. In this case the active set is the
 left reachable set as shown in Figure \ref{fig:left_capture_behind}
and capture occurs at time $T_{l}^{c}=T_{l}$. Further, at time $t<T_{l}$,
the safe region of evader is non-empty (shown in Figure \ref{fig:left_safe_behind}).
Next consider the situation when the evader is located ahead of the
pursuer as shown in Figure \ref{fig:valid_front}. In this case the
blocking set is the active set. The capture is shown in Figure \ref{fig:blocking_evaders_ahead_valid}
and occurs at time $T_{b}$. For any time $t<T_{b}$ the safe region
of the evader is non-empty (as shown in Figure \ref{fig:blocking_evader_front_safe_region})
and the evader can escape capture using feedback strategies. In the
above examples the active set explains capture under feedback strategies.
In fact, we will show that if the evader is in front of the pursuer
then it suffices to consider blocking set as $CCSP$ and if the evader
is behind the pursuer then it suffices to consider either the left
reachable set or the right reachable set as $CCSP$.
\end{rem}
Next we define frontier boundary for all the sets in the representative
set. 
\begin{defn}
\textbf{\textit{Frontier boundary}}\textit{:} 
\begin{enumerate}
\item The \textbf{\textit{frontier boundary of $R_{p}^{l_{t}}({\bf p_{0}},t)$}}
is the portion of the external boundary denoted by $ADEFB$ (start
point $A$ to end point $B$) as shown in Figure \ref{fig:Left-Reachable-Set}
by dashed curve i.e. we exclude the portion $BA$ from $\partial R_{p}^{l}({\bf p_{0}},t)$.
\item The \textbf{\textit{frontier boundary of $R_{p}^{r_{t}}({\bf p_{0}},t)$}}
is defined analogously. 
\item The \textbf{\textit{frontier boundary of the blocking set}} is the
portion of external boundary of the blocking set excluding the blocking
curves $\tilde{T}_{l}^{v}$ and $\tilde{T}_{r}^{v}$. For example,
the frontier boundary of blocking set in Figure \ref{fig:blocking_evader_behind}
is denoted by $ABCDFG$. 
\end{enumerate}
\end{defn}
\begin{rem}
The frontier boundaries of all the sets expand outwards radially with
time. Also the external boundary of the evader reachable set expands
out radially. As a result, for the active set, the capture occurs
on the frontier boundary.
\end{rem}
\begin{figure*}
\begin{minipage}[t]{0.3\textwidth}%
\begin{center}
\includegraphics[scale=0.097]{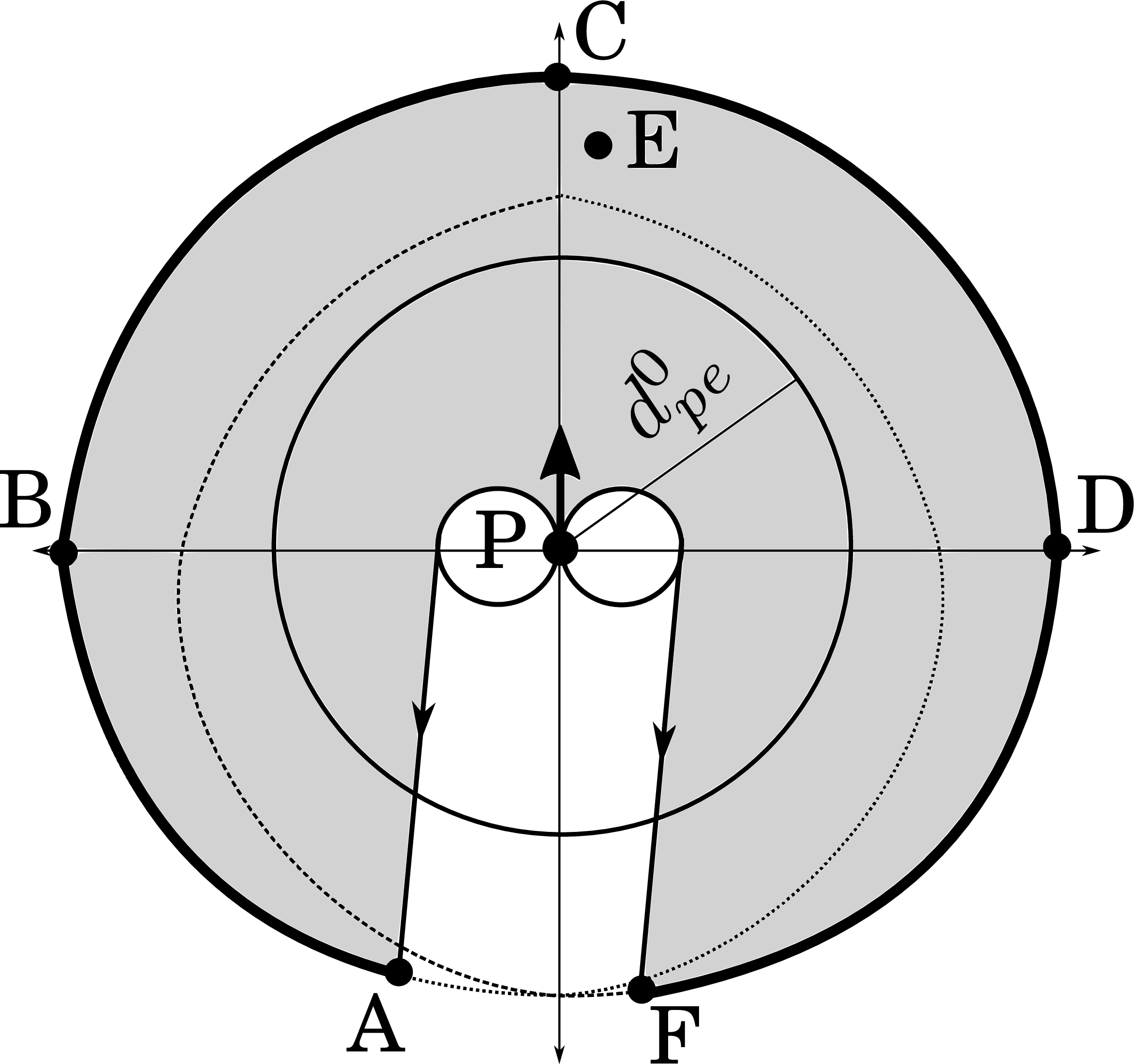}
\par\end{center}
\caption{\label{fig:blocking_evader_front}$B_{p}({\bf p_{0}},{\bf e_{0}},t_{m})$:
Evader in front (shaded region)}
\end{minipage}\hfill{}%
\begin{minipage}[t]{0.3\textwidth}%
\begin{center}
\includegraphics[scale=0.4]{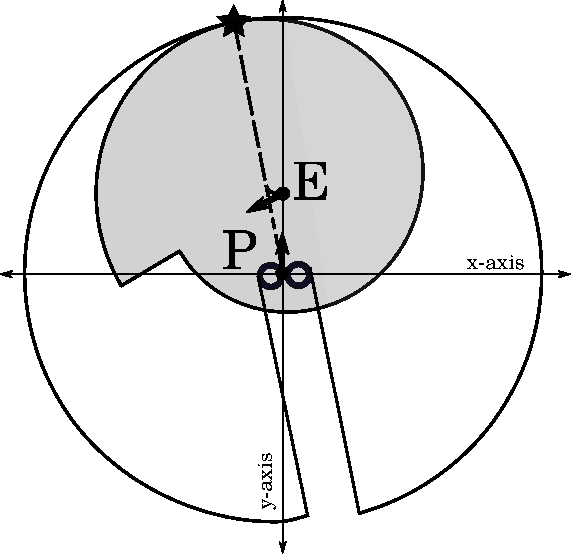}
\par\end{center}
\caption{\label{fig:blocking_evaders_ahead_valid}Active Set: Blocking set}
\end{minipage}\hfill{}%
\begin{minipage}[t]{0.3\textwidth}%
\begin{center}
\includegraphics[scale=0.4]{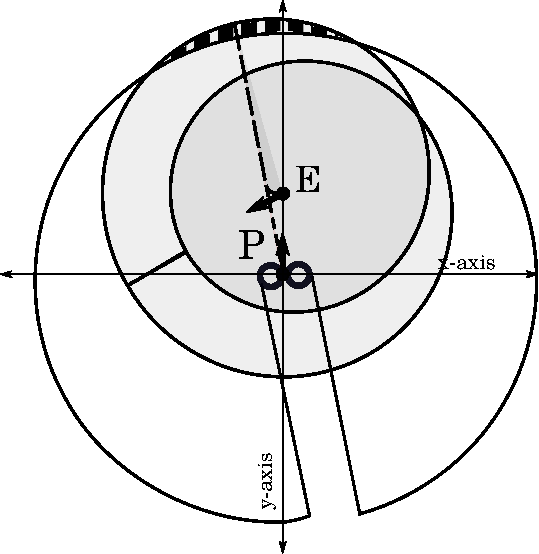}
\par\end{center}
\caption{\label{fig:blocking_evader_front_safe_region}Evader safe region}
\end{minipage}
\end{figure*}

Next we demonstrate that if $d_{pe}^{0}\geq2r_{e}+2\pi r_{e}(v_{p_{m}}/v_{e_{m}})$
the evader can always escape capture for all time $t<T_{a}$. We do
this in by considering a particular set in representative set to be
a active set in the Propositions \ref{prop:left_reachable_continuous_containment},
\ref{prop:right_reachable_continuous_containment}, and \ref{prop:blocking_continuous_containment}.
\begin{prop}
\label{prop:left_reachable_continuous_containment}Let $d_{pe}^{0}\geq2r_{e}+2\pi r_{e}(v_{p_{m}}/v_{e_{m}})$.
If $\mathcal{A}_{p}(T_{a})=R_{p}^{l_{t}}({\bf p_{0}},T_{a})$ and
$T_{a}=T_{l}^{c}<\infty$, then the evader can always escape capture
for $t<T_{a}$ by using feedback strategy. 
\end{prop}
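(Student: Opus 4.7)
The plan is to apply Theorem~\ref{thm:capture_necessary_condition} by producing, at every $\bar t<T_a$, a $CSE$ whose continuous safe region is non-empty against every $CSP$ at time $\bar t$; by~(\ref{eq:feedback_time_to_capture}) this gives $\bar t<T=T^{*}$ (Theorem~\ref{thm:capture_strategy_pursuer_evader}), so the evader's $\max-\min$ strategy avoids capture on $[0,\bar t]$. The natural choice of $CSE$ is the central reachable set $R_e^{CR}({\bf e_{0}},\bar t)$: since $d_{pe}^{0}\geq 2r_e+2\pi r_e(v_{p_m}/v_{e_m})$, Lemma~\ref{lem:evader_circles_forbidden} forces every $\bar t$ at which capture is even conceivable to satisfy $\bar t\geq 2\pi r_e/v_{e_m}$, so by Lemma~\ref{lem:CRS_continuous} $R_e^{CR}({\bf e_{0}},\bar t)$ is a $CSE$ and, as in Theorem~\ref{lem:complete_containment_continuous_necessary_sufficient}, $\partial R_e^{CR}({\bf e_{0}},\bar t)=\partial R_e({\bf e_{0}},\bar t)$.

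I would next unpack $T_a=T_l^c=\min\{T_b,T_l^c,T_r^c\}$ into the three inequalities $\bar t<T_b$, $\bar t<T_l^c$ and $\bar t<T_r^c$. Each of these states that the corresponding member of the representative set $\{B_p({\bf p_{0}},{\bf e_{0}},\bar t),R_p^{l_{t}}({\bf p_{0}},\bar t),R_p^{r_{t}}({\bf p_{0}},\bar t)\}$ does not yet continuously contain $R_e({\bf e_{0}},\bar t)$, so a portion of $\partial R_e$---hence of $\partial R_e^{CR}$, by the boundary identification above---protrudes outside each representative set. The Case~2 argument inside the proof of Theorem~\ref{thm:capture_necessary_condition} then supplies the feedback escape: any evader trajectory landing on one of these protruding frontier pieces sits on a curve that is a continuum set for the evader but, by Lemmas~\ref{lem:curve_LS_not_continuous} and~\ref{lem:union_not_continuous}, is not a continuum set for the pursuer, so admissible variations of the evader's input move the terminal point along the frontier while the pursuer, constrained to continuous families of $CS$ trajectories, cannot track those variations.

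The step I expect to be hardest is the one tacitly required by~(\ref{eq:feedback_time_to_capture}): promoting ``no representative set continuously contains $R_e^{CR}$'' to ``no $CSP$ whatsoever continuously contains $R_e^{CR}$.'' My plan is a geometric domination argument. Lemma~\ref{lem:union_not_continuous} already forbids any $CSP$ from straddling the segment $PA$ except through the central-reachable-set construction, and Lemmas~\ref{lem:blocking_curves_left_truncated_set} and~\ref{lem:blocking_curves_right_truncated_set} constrain the evader-facing side of a $CSP$ whenever the evader sits in $S^l$ or $S^r$. Putting these together, I would classify every $CSP$ as dominated by one of $B_p$, $R_p^{l_{t}}$, $R_p^{r_{t}}$ on the exposed portion of $\partial R_e^{CR}$: if any $CSP$ continuously contained $R_e^{CR}({\bf e_{0}},\bar t)$, one of the three representative sets would too, contradicting the inequalities above. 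With this domination in place, $R_e^{CR-}({\bf e_{0}},R_p^c,\bar t)\neq\emptyset$ for every $CSP$ $R_p^c$, hence $\bar t<T=T^{*}$, and the evader's feedback policy avoids capture up to $T_a$, as claimed.
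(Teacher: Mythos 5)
Your overall strategy is sound---reduce the claim to showing that at each $\bar t<T_a$ no $CSP$ continuously contains the evader's reachable set, then invoke Theorem \ref{thm:capture_necessary_condition}---and you correctly flag the hard step: passing from ``none of the three representative sets achieves continuous containment'' to ``no $CSP$ whatsoever does.'' But that step is exactly where your argument stops being a proof. The ``geometric domination'' claim, that any $CSP$ continuously containing $R_e^{CR}({\bf e_0},\bar t)$ would force one of $B_p$, $R_p^{l_t}$, $R_p^{r_t}$ to do so as well, is asserted rather than derived. Lemmas \ref{lem:blocking_curves_left_truncated_set} and \ref{lem:blocking_curves_right_truncated_set} cannot supply it: they give sufficient conditions under which the truncated sets \emph{do} contain the evader's reachable set, and say nothing about an arbitrary $CSP$. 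As written, the central implication of the proposition is assumed.

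The paper closes this gap with a concrete construction you do not have. It takes the last point ${\bf z}$ of $R_e({\bf e_0},T_a)$ covered by $R_p^{l_t}({\bf p_0},T_a)$, notes that ${\bf z}$ must lie on the frontier boundary, and splits into cases by its location. If ${\bf z}\in ADE$, then at $T_a'<T_a$ even the full reachable set fails to cover the evader's reachable set and escape is immediate. If ${\bf z}\in FHB$, geometry forces the right reachable set to have been active earlier, contradicting the hypothesis that $R_p^{l_t}$ is the active set. In the remaining case ${\bf z}\in EGF$, the paper exhibits, at a time $T_a'<T_a$ with $R_e^{-}({\bf e_0},R_p,T_a')=\emptyset$, two points ${\bf x}\in\left[R_p^{l_t}({\bf p_0},T_a')/R_p^{r_t}({\bf p_0},T_a')\right]\cap R_e({\bf e_0},T_a')$ and ${\bf y}\in\left[R_p^{r_t}({\bf p_0},T_a')/R_p^{l_t}({\bf p_0},T_a')\right]\cap R_e({\bf e_0},T_a')$, whose existence is forced by the minimality of $T_a=T_l^c$ and by the activeness assumption. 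The curve $\overline{{\bf xy}}$, lying behind the pursuer, is not a continuum set for the pursuer by Lemma \ref{lem:union_not_continuous}; since any $CSP$ containing both ${\bf x}$ and ${\bf y}$ would have to make every such curve a continuum set, no $CSP$ can continuously contain the evader's reachable set at $T_a'$. This explicit pair of points is the content your domination step needs; without it, the proposal does not establish the proposition.
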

\begin{IEEEproof}
First we divide the frontier boundary of $R_{p}^{l_{t}}({\bf p_{0}},T_{a})$
into two parts $ADE$ and $EGFHB$ as shown in Figure \ref{fig:left_reachable_active_set}.
Let ${\bf z}\in\mathbb{R}^{2}$ be a last point of $R_{e}({\bf e_{0}},T_{a})$
covered by $R_{p}^{l_{t}}({\bf p_{0}},T_{a})$ at time $T_{a}$. Since
the frontier boundary of $R_{p}^{l_{t}}({\bf p_{0}},t)$ and the external
boundary $R_{e}({\bf e_{0}},t)$ both grow out radially with time
$t$, ${\bf z}$ will lie on the external boundary of $R_{e}({\bf e_{0}},T_{a})$.
Also, the point in $R_{p}^{l_{t}}({\bf p_{0}},T_{a})$ which covers
${\bf z}$ will lie on the frontier boundary of $R_{p}^{l_{t}}({\bf p_{0}},T_{a})$.
Now the situation can be divided into two cases: 

Case 1: Let ${\bf z}\in ADE$. Then at time $T_{a}'<T_{a}$ we have
$R_{e}^{-}({\bf e_{0}},R_{p}^{l},T_{a}^{'})\neq\emptyset$ i.e. the
pursuer's reachable set does not cover the evader's reachable set
and the evader can always escape capture. 

Case 2: Let ${\bf z}\in EGFHB$. We further divide the portion $EGFHB$
into two parts namely $EGF$ and $FHB$ as shown in Figure \ref{fig:left_reachable_active_set}.
Now let ${\bf z}\in FHB$. In Figure \ref{fig:left_reachable_active_set}
the right reachable set is shown by dotted curve. Since $d_{pe}^{0}\geq2r_{e}+2\pi r_{e}(v_{p_{m}}/v_{e_{m}})$,
the evader is located outside the circle of radius $d_{pe}^{0}$.
It is clear from geometry that if ${\bf z}\in FHB$ then the right
external boundary would have covered the evader's reachable set earlier.
Thus the right reachable set would have been the active set and this
contradicts the assumption that $R_{p}^{l_{t}}({\bf p_{0}},T_{a})$
is the active set. 

Next, let ${\bf z}\in EGF$. Let $T_{a}^{'}<T_{a}$ s.t. $R_{e}^{-}({\bf e_{0}},R_{p},T_{a}^{'})=\emptyset$
( If $R_{e}^{-}({\bf e_{0}},R_{p},T_{a}^{'})$$\neq\emptyset$ then
the pursuer's reachable set does not cover the evader's reachable
set and the evader can always escape capture). At time $T_{a}^{'}$,
$R_{p}^{l_{t}}({\bf p_{0}},T_{a}^{'})$ will not cover evader's reachable
set completely. Since $R_{e}^{-}({\bf e_{0}},R_{p},T_{a}^{'})=\emptyset$,
this uncovered region must form a part of the right reachable set
of the pursuer. Further, there must exist a point ${\bf x}$ in the
evader's reachable set such that ${\bf x}\in\left[R_{p}^{l_{t}}({\bf p_{0}},T_{a}^{'})\backslash R_{p}^{r_{t}}({\bf p_{0}},T_{a}^{'})\right]\cap R_{e}({\bf e_{0}},T_{a}^{'})$.
(For if such a point does not exist then the evader's reachable set
at $T_{a}^{'}$ would be entirely contained by the the right reachable
set. Since, $R_{p}^{r_{t}}({\bf p_{0}},T_{a}^{'})$ is in the representative
set this would contradict the assumption that $R_{p}^{l_{t}}({\bf p_{0}},T_{a})$
is the active set). Also, there exists a point ${\bf y}\in\left[R_{p}^{r_{t}}({\bf p_{0}},T_{a}^{'})\backslash R_{p}^{l_{t}}({\bf p_{0}},T_{a}^{'})\right]\cap R_{e}({\bf e_{0}},T_{a}^{'})$
(Otherwise we would have $R_{e}^{-}({\bf e_{0}},R_{p}^{l},T_{a}^{'})=\emptyset$
and this will contradict the assumption that the minimum time at which
$R_{e}^{-}({\bf e_{0}},R_{p}^{l_{t}},t)=\emptyset$ is $T_{a}$ since
$T_{a}^{'}<T_{a}$). Now consider a curve ${\bf \overline{xy}}$ in
the evader's reachable set as shown in Figure \ref{fig:unoin_not_continuous}.
Since ${\bf z}\in EGF$, such a curve is behind the pursuer and as
shown in the proof of Lemma \ref{lem:union_not_continuous} it is
not a continuum curve for the pursuer. Hence, at time $T_{a}^{'}<T_{a}$,
$R_{e}^{-}({\bf e_{0}},R_{p}^{c},T_{a}^{'})\neq\emptyset$ for all
$CSP$ $R_{p}^{c}({\bf p_{0}},T_{a}^{'})$ and the evader can escape
capture. 
\end{IEEEproof}
\begin{figure*}
\begin{minipage}[t]{0.3\textwidth}%
\begin{center}
\includegraphics[scale=0.135]{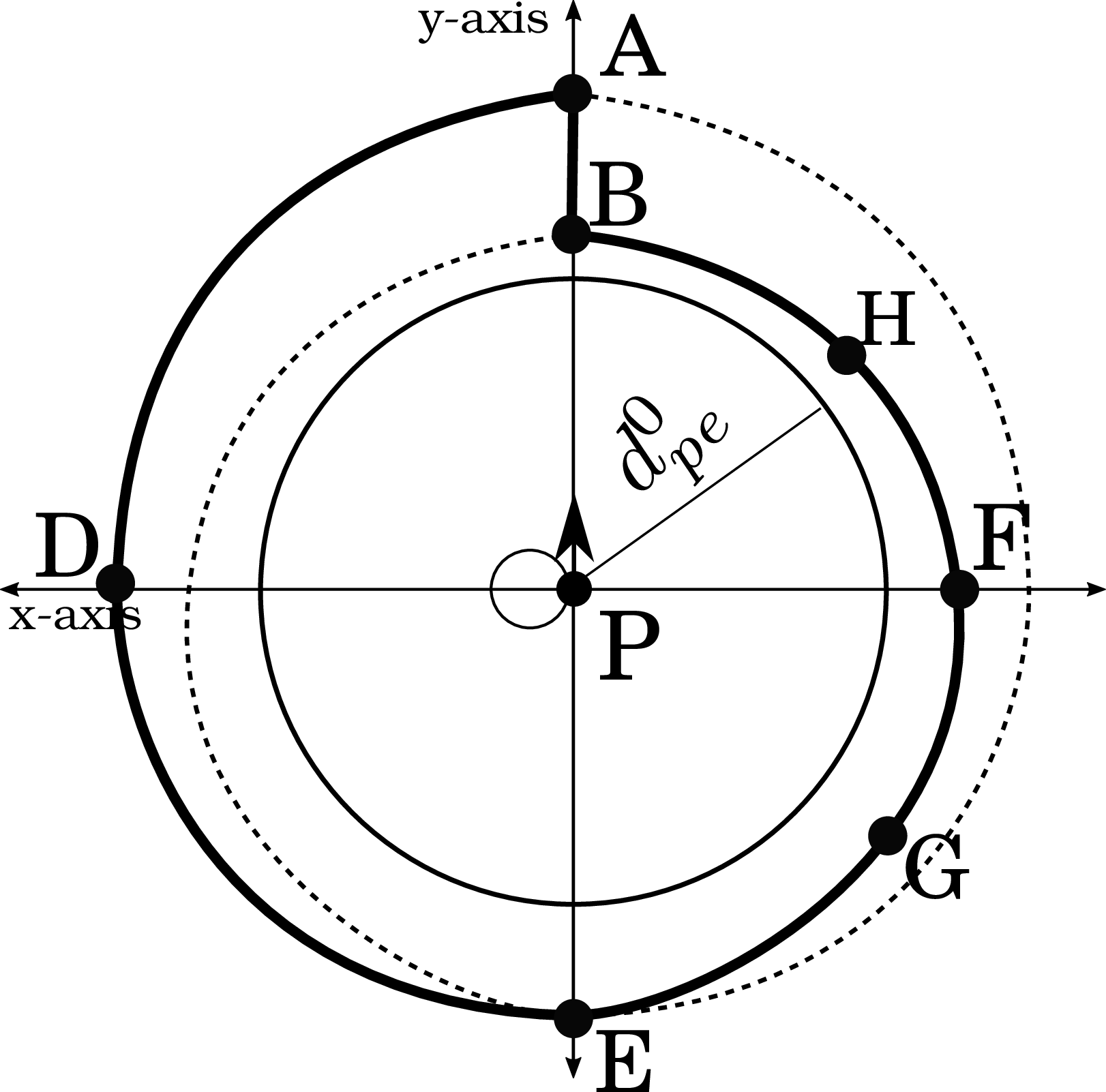}
\par\end{center}
\caption{\label{fig:left_reachable_active_set}Left reachable set as active
set}
\end{minipage}\hfill{}%
\begin{minipage}[t]{0.3\textwidth}%
\begin{center}
\includegraphics[scale=0.38]{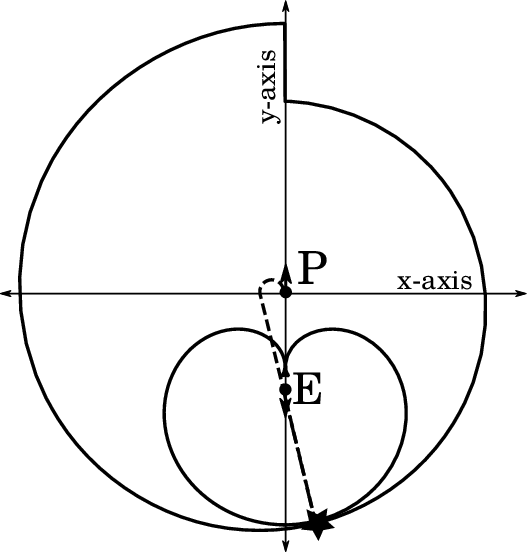}
\par\end{center}
\caption{\label{fig:left_capture_behind}Continuous containment by left reachable
set}
\end{minipage}\hfill{}%
\begin{minipage}[t]{0.35\textwidth}%
\begin{center}
\includegraphics[scale=0.35]{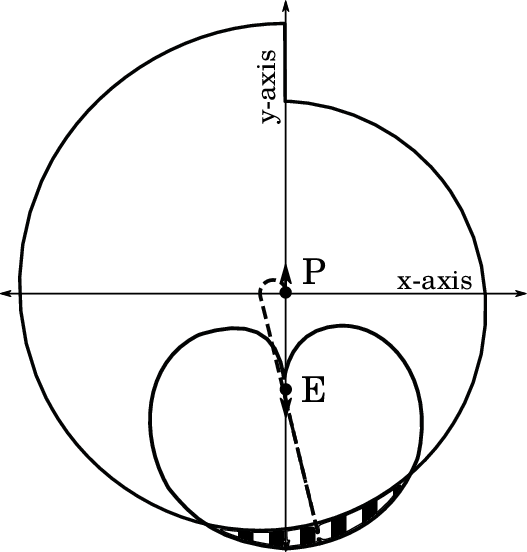}
\par\end{center}
\caption{\label{fig:left_safe_behind}Safe region: Continuous containment by
left reachable set}
\end{minipage}
\end{figure*}

The proof of Proposition \ref{prop:right_reachable_continuous_containment}
is similar to that of Proposition \ref{prop:left_reachable_continuous_containment}.
\begin{prop}
\label{prop:right_reachable_continuous_containment}Let $d_{pe}^{0}\geq2r_{e}+2\pi r_{e}(v_{p_{m}}/v_{e_{m}})$.
If $\mathcal{A}_{p}(T_{a})=R_{p}^{r_{t}}({\bf p_{0}},T_{a})$ and
$T_{a}=T_{r}^{c}<\infty$, then evader can always escape capture for
$t<T_{a}$ by using feedback strategy. 
\end{prop}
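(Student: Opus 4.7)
The plan is to mirror the proof of Proposition \ref{prop:left_reachable_continuous_containment} by exploiting the left-right symmetry of the setup: everywhere the previous proof invoked the left reachable set or a ``left'' frontier arc, one substitutes the right reachable set and the corresponding right-side arc. Concretely, I would first partition the frontier boundary of $R_{p}^{r_{t}}({\bf p_{0}},T_{a})$ into two arcs that play the roles of $ADE$ and $EGFHB$ in Figure \ref{fig:left_reachable_active_set}, with the orientation reflected. Pick any point ${\bf z}\in R_{e}({\bf e_{0}},T_{a})$ that is ``last covered'' by $R_{p}^{r_{t}}({\bf p_{0}},T_{a})$; since both the frontier boundary of $R_{p}^{r_{t}}({\bf p_{0}},t)$ and the external boundary of $R_{e}({\bf e_{0}},t)$ expand radially in $t$, the point ${\bf z}$ must lie on $\partial R_{e}({\bf e_{0}},T_{a})$ and its covering point must lie on the frontier of $R_{p}^{r_{t}}({\bf p_{0}},T_{a})$.

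Next I would split into the two symmetric cases. In Case~1, ${\bf z}$ lies on the right external boundary arc (analogue of $ADE$): for any $T_{a}'<T_{a}$ the full reachable set $R_{p}^{r}({\bf p_{0}},T_{a}')$ still fails to cover $R_{e}({\bf e_{0}},T_{a}')$, so $R_{e}^{-}({\bf e_{0}},R_{p}^{r},T_{a}')\neq\emptyset$ and the evader escapes by a free-flight open-loop policy. In Case~2, ${\bf z}$ lies on the remaining arc; subdivide it further into the analogues of $EGF$ and $FHB$. If ${\bf z}$ falls on the analogue of $FHB$, the assumption $d_{pe}^{0}\geq 2r_{e}+2\pi r_{e}(v_{p_{m}}/v_{e_{m}})$ places the evader outside a sufficiently large circle and a geometric comparison (identical to the one used in Case~2 of the previous proposition, with left and right swapped) shows that the left truncated reachable set would have continuously contained $R_{e}$ at an earlier time, contradicting $\mathcal{A}_{p}(T_{a})=R_{p}^{r_{t}}({\bf p_{0}},T_{a})$ and $T_{a}=T_{r}^{c}$.

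Finally, if ${\bf z}$ lies on the analogue of $EGF$, I would pick $T_{a}'<T_{a}$ with $R_{e}^{-}({\bf e_{0}},R_{p},T_{a}')=\emptyset$ (else the evader already escapes trivially). At that time I would produce a point ${\bf x}\in\bigl[R_{p}^{r_{t}}({\bf p_{0}},T_{a}')\setminus R_{p}^{l_{t}}({\bf p_{0}},T_{a}')\bigr]\cap R_{e}({\bf e_{0}},T_{a}')$ and a point ${\bf y}\in\bigl[R_{p}^{l_{t}}({\bf p_{0}},T_{a}')\setminus R_{p}^{r_{t}}({\bf p_{0}},T_{a}')\bigr]\cap R_{e}({\bf e_{0}},T_{a}')$; existence of both points follows by the same contradiction argument as before (otherwise either the blocking/left truncated set would already be active at $T_{a}'$, or $T_{a}'\geq T_{r}^{c}$). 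A curve $\overline{{\bf xy}}\subset R_{e}({\bf e_{0}},T_{a}')$ behind the pursuer then fails to be a continuum curve for the pursuer by Lemma \ref{lem:union_not_continuous}, so for every $CSP$ $R_{p}^{c}({\bf p_{0}},T_{a}')$ we have $R_{e}^{-}({\bf e_{0}},R_{p}^{c},T_{a}')\neq\emptyset$ and Theorem \ref{thm:capture_necessary_condition} furnishes the required feedback evasion policy.

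The main obstacle is purely geometric: one must verify that reflecting the decomposition $ADE\mid EGF\mid FHB$ across the pursuer's heading produces the correct subregions for the right truncated set, and that the ``$FHB$-analogue forces the left reachable set to have been active earlier'' argument still uses the hypothesis $d_{pe}^{0}\geq 2r_{e}+2\pi r_{e}(v_{p_{m}}/v_{e_{m}})$ in exactly the same way. Once that symmetry is set up correctly, every step is a direct relabelling of the proof of Proposition \ref{prop:left_reachable_continuous_containment}.
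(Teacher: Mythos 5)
Your proposal is correct and follows exactly the route the paper intends: the paper itself omits the proof and merely remarks that it is ``similar to that of Proposition \ref{prop:left_reachable_continuous_containment},'' and your left--right relabelling (swapping the roles of $R_{p}^{l_{t}}$ and $R_{p}^{r_{t}}$, reflecting the $ADE\mid EGF\mid FHB$ decomposition, and reusing Lemma \ref{lem:union_not_continuous} for the non-continuum curve $\overline{{\bf xy}}$) is precisely the intended argument. You have in fact spelled out the symmetry more explicitly than the paper does.
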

\begin{rem}
Consider a situation where the active set is the blocking set and
either the truncated left reachable set or the truncated right reachable
set contains the evader's reachable set continuously. In such a situation
either the truncated left reachable set or the truncated right reachable
set is also an active set along with the blocking set. This is established
in the next lemma.
\end{rem}
\begin{lem}
\label{lem:behind_time_same}Let $T_{lr}=\min(T_{l}^{c},T_{r}^{c})$
and $d_{pe}^{0}\geq2r_{e}+2\pi r_{e}(v_{p_{m}}/v_{e_{m}})$. If $T_{lr}<\infty$
and $B_{p}({\bf p_{0}},{\bf e_{0}},T_{b})$ is the active set such
that $R_{e}^{-}({\bf e_{0}},B_{p},T_{b})=\emptyset$ then either $R_{e}^{-}({\bf e_{0}},R_{p}^{l_{t}},T_{b})=\emptyset$
or $R_{e}^{-}({\bf p_{0}},R_{p}^{r_{t}},T_{b})=\emptyset$ and $T_{b}=T_{lr}$.
\end{lem}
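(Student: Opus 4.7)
The plan is to establish the two assertions in the lemma using a single set-theoretic inclusion together with the monotonicity of $R_e^{-}$ in its pursuer-set argument. One direction is essentially free: since the active set is $\mathcal{A}_p(T_a)=B_p$, by definition $T_a=T_b$ and $T_a=\min\{T_b,T_l^c,T_r^c\}$, so $T_b\leq T_l^c$ and $T_b\leq T_r^c$, yielding $T_b\leq T_{lr}$. The substantive work is to produce the reverse inequality simultaneously with one of the two required continuous containments.

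The key geometric claim I would prove is that $B_p(\mathbf{p_0},\mathbf{e_0},t)\subseteq R_p^{l_t}(\mathbf{p_0},t)$ (or the symmetric version with $R_p^{r_t}$) for every $t\leq T_b$. Since $T_{lr}<\infty$, Proposition \ref{prop:left_right_trunc_continuous_behind} places the evader behind the pursuer, so its initial position lies in $S^l(T_l^c)$ or $S^r(T_r^c)$. Supposing the evader is in $S^l(T_l^c)$ (the other case is symmetric), the construction of Definition \ref{def:blocking_set}, as illustrated by Figure \ref{fig:construction_blocking_2}, shows that $\tilde{T}_r^v$ terminates on the left internal boundary of $R_p$ while $\tilde{T}_l^v$ rides along the external boundary of the left reachable set. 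The blocking set is then trapped between $\tilde{T}_l^v$, which is an $LS$ trajectory, the blocking trajectory $\tilde{T}_r^v$ (Lemma \ref{lem:blocking_curves_blocking_set}), and the external boundary of $R_p$. Every such point is reachable by an $LS$ trajectory with switching time $t_1\leq \pi r_p/v_{p_m}$, and the excluded line segment $PA$ is prevented from belonging to $B_p$ precisely because $\tilde{T}_r^v$ blocks evader access to it; hence $B_p(t)\subseteq R_p^{l_t}(t)$ throughout $t\leq T_b$.

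With this inclusion in hand, monotonicity of $R_e^{-}$ in its pursuer-set argument (every evader trajectory avoiding the larger set $R_p^{l_t}(t)$ automatically avoids the smaller $B_p(t)$, and $R_e\setminus R_p^{l_t}\subseteq R_e\setminus B_p$) gives $R_e^{-}(\mathbf{e_0},R_p^{l_t},T_b)\subseteq R_e^{-}(\mathbf{e_0},B_p,T_b)=\emptyset$. This simultaneously establishes the desired continuous containment by $R_p^{l_t}$ and forces $T_l^c\leq T_b$, so $T_{lr}\leq T_b$ and combining with the earlier direction yields $T_b=T_{lr}$. The symmetric case (evader in $S^r(T_r^c)$) gives the $R_p^{r_t}$ alternative.

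The step I expect to be the main obstacle is the geometric verification that $B_p\subseteq R_p^{l_t}$ (resp.\ $R_p^{r_t}$) holds uniformly over \emph{all} admissible evader positions behind the pursuer, and in particular at intermediate angular positions where the termination points of the two blocking curves are not as cleanly placed as in Figure \ref{fig:construction_blocking_2}. The boundary case in which the evader lies essentially on the pursuer's axis of symmetry must be handled with particular care, since both blocking curves may simultaneously approach the internal boundary; here the large-distance assumption $d_{pe}^{0}\geq 2r_e+2\pi r_e(v_{p_m}/v_{e_m})$ must be invoked to exclude pathological overlaps that would place part of $B_p$ on the excluded line $PA$ or inside one of the pursuer's minimum-radius turning circles.
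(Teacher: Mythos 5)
Your overall scaffolding (the easy inequality $T_{b}\leq T_{lr}$ from the definition of the active set, plus monotonicity of the safe region in its pursuer-set argument) is sound, but the load-bearing step --- the inclusion $B_{p}({\bf p_{0}},{\bf e_{0}},t)\subseteq R_{p}^{l_{t}}({\bf p_{0}},t)$ (or its right-hand analogue) for all $t\leq T_{b}$ --- is false in general, and you have flagged it as the main obstacle without closing it. By construction the blocking set is glued from an $LS$ blocking curve and an $RS$ blocking curve meeting at $P$, so it straddles the ray $PA$ and contains, on one side, points whose only admissible $CS$ access is through the anticlockwise circle and, on the other side, points whose only access is through the clockwise circle (for instance the two thin ``crescent'' regions near the outer boundary of $R_{p}$ just to either side of the initial heading, which belong to $R_{p}^{l}\setminus R_{p}^{r}$ and $R_{p}^{r}\setminus R_{p}^{l}$ respectively and are not excised by the thin strip that the blocking curves cut out of $R_{p}$). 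Your claim that every point of $B_{p}$ is reached by an $LS$ trajectory with switching time $t_{1}\leq\pi r_{p}/v_{p_{m}}$ therefore cannot hold. A structural sanity check: if the inclusion held whenever $B_{p}$ is active, the blocking set would be redundant in the representative set, whereas the paper needs it precisely because in the evader-in-front configuration (Example \ref{exa:evader_front}, Proposition \ref{prop:blocking_continuous_containment}) neither one-sided set contains $R_{e}$ at the capture time.

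The paper closes the gap differently: when the evader is behind, the blocking curves $BR^{1},BR^{2}$ (resp.\ $BL^{1},BL^{2}$) are active, so the relevant truncated one-sided set also achieves continuous containment, and the portion of the frontier boundary of $B_{p}$ on which the last point of $R_{e}$ is covered coincides with the frontier boundary of that truncated set (the remaining piece, $AB$ in Figure \ref{fig:blocking_evader_behind}, being redundant as a capture locus). Equality of the two containment times is then read off from this coincidence of boundaries, not from a set inclusion. If you want to salvage the monotonicity route, the inclusion you actually need is the relative one, $B_{p}({\bf p_{0}},{\bf e_{0}},t)\cap R_{e}({\bf e_{0}},t)\subseteq R_{p}^{l_{t}}({\bf p_{0}},t)$ for $t\leq T_{b}$, which is plausible because $R_{e}$ does not reach the crescents before $T_{b}$ when the evader starts far behind; but establishing that is essentially the same frontier-boundary analysis the paper performs, so nothing is gained.
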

\begin{IEEEproof}
Either the internal boundary of left reachable set or the internal
boundary of the right reachable set is part of the frontier boundary
of the blocking set. Without loss of generality assume that the internal
boundary of the right reachable set forms a part of the frontier boundary
of the blocking set as shown in Figure \ref{fig:blocking_evader_behind}.
Since the evader is behind the pursuer and $d_{pe}^{0}\geq2r_{e}+2\pi r_{e}(v_{p_{m}}/v_{e_{m}})$,
from Proposition \ref{prop:left_right_trunc_continuous_behind}, $BR^{1}$
and $BR^{2}$ are blocking curves. Thus the right reachable set will
contain evader's reachable set continuously. Hence, for the blocking
set, the capture will occur on $BCDFG$ w.r.t Figure \ref{fig:blocking_evader_behind}.
The $AB$ part of the frontier boundary is redundant. Thus the capture
by right reachable set and the blocking set will be at the same point
of the frontier boundary and the claim follows.
\end{IEEEproof}
\begin{note}Proposition \ref{prop:left_right_trunc_continuous_behind}
and Lemma \ref{lem:behind_time_same} allow us to consider $R_{p}^{l_{t}}({\bf p_{0}},T_{l}^{c})$
or $R_{p}^{r_{t}}({\bf p_{0}},$ $T_{r}^{c})$ as the active set when
the evader is behind the pursuer. Also it suffices to consider only
blocking set as be the active set when the evader is in the front
of pursuer. \end{note}
\begin{prop}
\label{prop:blocking_continuous_containment}Let $d_{pe}^{0}\geq2r_{e}+2\pi r_{e}(v_{p_{m}}/v_{e_{m}})$
and let the evader be in the front of the pursuer. If $\mathcal{A}_{p}(T_{a})=B_{p}({\bf p_{0}},{\bf e_{0}},T_{a})$,
then the evader can always escape capture for $t<T_{a}$ by using
feedback strategy.. 
\end{prop}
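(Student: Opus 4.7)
The plan is to mirror the structure of the proof of Proposition \ref{prop:left_reachable_continuous_containment}, exploiting the already-established fact (Lemma \ref{lem:blocking_curves_blocking_set}) that $\tilde{T}_{l}^{v}$ and $\tilde{T}_{r}^{v}$ are genuine blocking trajectories and so the evader cannot enter $B_{p}({\bf p_{0}},{\bf e_{0}},t)$ through its back. I would start by letting ${\bf z}\in\mathbb{R}^{2}$ be a last point of $R_{e}({\bf e_{0}},T_{a})$ covered by $B_{p}({\bf p_{0}},{\bf e_{0}},T_{a})$, and invoking the radial expansion of both the frontier boundary of $B_{p}$ and the external boundary of $R_{e}$ to conclude that ${\bf z}$ lies on the frontier boundary of $B_{p}({\bf p_{0}},{\bf e_{0}},T_{a})$ and on $\partial R_{e}({\bf e_{0}},T_{a})$. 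Since the evader is in front and $\mathcal{A}_{p}(T_{a})=B_{p}$, we automatically know $T_{b}=T_{a}\leq T_{l}^{c}, T_{r}^{c}$.

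Next, I would partition the frontier boundary of $B_{p}$ into pieces according to which sub-type of $CS$ trajectory of the pursuer actually reaches them: a piece reached only by $LS$ trajectories (a sub-arc of the left external boundary of $R_{p}$), a piece reached only by $RS$ trajectories (a sub-arc of the right external boundary), and the transition arc inherited from the union boundary at the front. This yields two cases. \textbf{Case 1}: if ${\bf z}$ lies on a piece that is also on the external boundary of the full reachable set $R_{p}({\bf p_{0}},T_{a})$ and is attained only at time $T_{a}$, then for any $T_{a}'<T_{a}$ the safe region $R_{e}^{-}({\bf e_{0}},R_{p},T_{a}')$ is non-empty (frontier boundary and evader boundary are strictly nested at $T_{a}'$), so the evader escapes open-loop. \textbf{Case 2}: if ${\bf z}$ lies on the transition portion where the blocking set extends strictly beyond $R_{p}^{l_{t}}({\bf p_{0}},T_{a})\cup R_{p}^{r_{t}}({\bf p_{0}},T_{a})$, i.e.\ into the central "notch" region that neither truncated set contains, then at every $T_{a}'<T_{a}$ there must exist points ${\bf x}\in [R_{p}^{l_{t}}({\bf p_{0}},T_{a}')/R_{p}^{r_{t}}({\bf p_{0}},T_{a}')]\cap R_{e}({\bf e_{0}},T_{a}')$ and ${\bf y}\in [R_{p}^{r_{t}}({\bf p_{0}},T_{a}')/R_{p}^{l_{t}}({\bf p_{0}},T_{a}')]\cap R_{e}({\bf e_{0}},T_{a}')$, because otherwise one of the truncated sets would already contain $R_{e}({\bf e_{0}},T_{a}')$, contradicting $T_{l}^{c}, T_{r}^{c}\geq T_{a}$. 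A curve $\overline{{\bf xy}}\subset R_{e}({\bf e_{0}},T_{a}')$ joining them then plays the same role as in the proof of Proposition \ref{prop:left_reachable_continuous_containment}: by Lemma \ref{lem:union_not_continuous} it fails to be a continuum set for the pursuer, so $R_{e}^{c-}({\bf e_{0}},R_{p}^{c},T_{a}')\neq\emptyset$ for every $CSP$ in $\mathfrak{R}({\bf p_{0}},{\bf e_{0}},T_{a}')$, and the evader escapes via feedback.

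The main obstacle is the geometric bookkeeping in Case 2: one has to verify that the "extra" slice of $B_{p}$ lying outside $R_{p}^{l_{t}}\cup R_{p}^{r_{t}}$ is exactly the region where any evader curve touching ${\bf z}$ must straddle the $PA$-type separation between $LS$-reachable and $RS$-reachable pursuer trajectories. The radial-expansion argument for the frontier boundary, the observation that the blocking curves foreclose any approach by the evader from behind, and Lemma \ref{lem:union_not_continuous} together close the argument; but the precise identification of which portion of $\partial B_{p}$ corresponds to which pursuer trajectory type is the step most prone to requiring explicit geometric computation, in contrast to the comparatively clean partition $ADE\cup EGFHB$ used for $R_{p}^{l_{t}}$.
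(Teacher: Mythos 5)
Your proposal is correct in substance and its Case 1 is, in fact, the entirety of the paper's own proof: the paper lets ${\bf z}$ be the last point of $R_{e}({\bf e_{0}},T_{a})$ covered by $B_{p}({\bf p_{0}},{\bf e_{0}},T_{a})$, uses the radial growth of the frontier boundary of $B_{p}$ and of $\partial R_{e}$ to place ${\bf z}$ on both, and then uses the hypothesis that the evader is \emph{in front} to assert (by reference to the figures) that ${\bf z}$ must lie on the portion of the frontier boundary that coincides with the external boundary of the full reachable set, whence $R_{e}^{-}({\bf e_{0}},B_{p},T_{a}')\neq\emptyset$ for every $T_{a}'<T_{a}$ and the evader escapes. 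Where you differ is that you do not use the ``in front'' hypothesis to kill the second alternative; instead you transplant the Case-2 machinery of Proposition \ref{prop:left_reachable_continuous_containment} (existence of ${\bf x}$ and ${\bf y}$ in the two set differences, a connecting curve $\overline{{\bf xy}}$ that fails to be a continuum set by Lemma \ref{lem:union_not_continuous}). That buys robustness --- your argument would survive even if the geometric claim ``${\bf z}$ lies on the common external-boundary arc'' were harder to justify than the paper suggests --- at the cost of exactly the geometric bookkeeping you flag, which the paper sidesteps entirely. Two small cautions: the blocking set satisfies $B_{p}({\bf p_{0}},{\bf e_{0}},t)\subseteq R_{p}({\bf p_{0}},t)$, so it cannot ``extend strictly beyond'' $R_{p}^{l_{t}}\cup R_{p}^{r_{t}}$ except on the excised line $PA$; the relevant feature of your transition region is not excess area but that reaching it forces a discontinuous switch between $LS$ and $RS$ trajectories. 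And your contradiction ``otherwise one of the truncated sets would already contain $R_{e}$'' needs containment to imply \emph{continuous} containment to contradict $T_{l}^{c},T_{r}^{c}\geq T_{a}$; the paper's own Case 2 of Proposition \ref{prop:left_reachable_continuous_containment} has the same looseness, which is patched by the blocking curves of Lemmas \ref{lem:blocking_curves_left_truncated_set} and \ref{lem:blocking_curves_right_truncated_set}, so you are no worse off than the source.
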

\begin{IEEEproof}
Consider the situation shown in Figure \ref{fig:blocking_front_extreme}
where the evader (denoted by point $E$) is in front of the pursuer.
Let ${\bf z}$ be the last point of $R_{e}({\bf e_{0}},T_{a})$ covered
by $B_{p}({\bf p_{0}},{\bf e_{0}},T_{a})$ at time $T_{a}$. The frontier
boundary of $B_{p}({\bf p_{0}},{\bf e_{0}},t)$ and the external boundary
$R_{e}({\bf e_{0}},t)$ both grow out radially with time $t$. Hence,
the point ${\bf z}$ will lie on the external boundary of $R_{e}({\bf e_{0}},T_{a})$.
Also, the point covering ${\bf z}$ will lie on the frontier boundary
of $B_{p}({\bf p_{0}},{\bf e_{0}},T_{a})$. From Figure \ref{fig:blocking_evader_behind}
it is clear that ${\bf z}\in ABCD$. Then at time $T_{a}'<T_{a}$
we have $R_{e}^{-}({\bf e_{0}},B_{p},T_{a}')\neq\emptyset$ i.e. the
pursuer's reachable set does not cover the evader's reachable set.
This situation is shown in Figure \ref{fig:blocking_evader_front_safe_region}.
Thus the evader can always escape capture at time $t<T_{a}$. 
\end{IEEEproof}
The next theorem characterizes the $CCSP$ and using it we determine
the saddle point strategies, point of capture and time of capture
in Theorem \ref{thm:cs-type}. Recall that $T^{*}$ is the $\min-\max$
time to capture under saddle-point strategies.
\begin{prop}
\label{thm:ApCCSP}$\mathcal{A}_{p}(T_{a})$ is a $CCSP$ and $T_{a}=T^{*}$.
\end{prop}
\begin{IEEEproof}
Proposition \ref{prop:left_reachable_continuous_containment}, Proposition
\ref{prop:right_reachable_continuous_containment} and Proposition
\ref{prop:blocking_continuous_containment} show that the active set
$\mathcal{A}_{p}(T_{a})$ is a $CCSP$. Thus from Theorem \ref{thm:complete_containment_continuous_necessary_sufficient}
we have $T^{*}=T_{a}$.
\end{IEEEproof}
\noindent \textbf{Proof of Theorem \ref{thm:capture_strategy_pursuer_evader}:
}Let $T^{*}$ and ${\bf z}$ be the time of capture and point of capture
respectively if the pursuer and evader use saddle point strategies.
From previous analysis, since the frontier boundaries grow out radially,
the point ${\bf z}$ belongs to the frontier boundary of the active
set $\mathcal{A}_{p}({\bf p_{0}},T^{*})$ and the external boundary
of the evader's reachable set. The time optimal trajectory for the
pursuer, which lies in $\mathcal{A}_{p}({\bf p_{0}},T^{*})$, to reach
point ${\bf z}$ is of the type $CS$. Thus, by Theorem \ref{thm:capture_strategy_pursuer_evader},
the saddle-point trajectory of the pursuer is of the type $CS$. Point
${\bf z}$ also lies on the external boundary of the evader's reachable
set. The evader trajectory which reaches ${\bf z}$ in minimum time
is also of the type $CS$. Thus evader's saddle-point trajectory is
of the type $CS$ (by Theorem \ref{thm:capture_strategy_pursuer_evader}).

\section{\label{sec:impt_theorem_3}Proof of Theorem \ref{thm:one_valid_tangent}}

One of the methods of computing the feedback pair $[\gamma_{p}^{*}({\bf x}),\gamma_{e}^{*}({\bf x})]$
is to use the Hamiltonian and Euler-Lagrange equations. We review
it here briefly. Consider a system beginning at $t=0$ given by the
equation 
\[
\dot{{\bf x}}(t)=f({\bf x}(t),{\bf u_{p}}(t),{\bf u_{e}}(t),t),\quad{\bf x}(0)={\bf x_{0}}
\]
where ${\bf x}:[0,t_{f}]\rightarrow\mathbb{R}^{n}$ is the state of
the system and ${\bf u_{p}}(t):[0,t_{f}]\rightarrow\mathbb{R}^{m_{p}}$
and ${\bf u_{e}}(t):[0,t_{f}]\rightarrow\mathbb{R}^{m_{e}}$ are the
inputs of the pursuer and evader respectively. Let the state constraints
at final time $t_{f}$ be given by 
\[
\psi({\bf x}(t_{f}))=0\in\mathbb{R}^{p}
\]
Further, define a generic performance criterion as
\[
J({\bf \gamma_{p}},{\bf \gamma_{e}})=\stackrel[0]{t_{f}}{\int}L({\bf x}(t),{\bf u_{p}}(t),{\bf u_{e}}(t))dt
\]
where  $L({\bf x}(t),{\bf u_{p}}(t),{\bf u_{e}}(t))$ is the recurring
cost. The pursuer tries to minimize the performance index while the
evader tries to maximize it using feedback strategies ${\bf u_{p}}={\bf \gamma_{p}}({\bf x})$
and ${\bf u_{e}}={\bf \gamma_{e}}({\bf x})$. The aim is to find,
if such a pair exists, $[{\bf \gamma_{p}^{*}}({\bf x}),{\bf \gamma_{e}^{*}}({\bf x})]$
such that

\[
J({\bf \gamma_{p}^{*}},{\bf \gamma_{e}^{*}})=\underset{\gamma_{e}}{\text{\ensuremath{\max}}}\,\,\underset{\gamma_{p}}{\text{\ensuremath{\min}}\,\,}J({\bf \gamma_{p}},{\bf \gamma_{e}})=\underset{\gamma_{p}}{\min}\,\,\underset{\gamma_{e}}{\text{\ensuremath{\max}}}\,\,J({\bf \gamma_{p}},{\bf \gamma_{e}})
\]
For a two player differential game described by the above equations,
let $[\gamma_{p}^{*}({\bf x}),\gamma_{e}^{*}({\bf x})]$ be the feedback
saddle-point solutions. Further, let ${\bf x^{*}}(t)$ denote the
corresponding state trajectory of the game and ${\bf u_{p}}(t)={\bf \gamma_{p}}({\bf x}(t))$,
${\bf u_{e}}(t)={\bf \gamma_{e}}({\bf x}(t))$ denote the open--loop
representations of the feedback strategies.
\begin{thm}
\label{thm:hamil}Minimum Principle \cite{bacsar1998dynamic},\cite{rufus},\cite{bryson1975applied}:
There exists a non-zero adjoint vector ${\bf {\bf \lambda}}(t):[0,t_{f}]\rightarrow\mathbb{R}^{n}$
satisfying following properties
\begin{eqnarray*}
\dot{{\bf x}}^{*} & = & f({\bf x}^{*}(t),{\bf u_{p}^{*}}(t),{\bf u_{e}^{*}}(t)),\quad{\bf x}(0)={\bf x_{0}}\\
\dot{{\bf \lambda}} & = & -\frac{\partial H}{\partial{\bf x}},\quad\quad{\bf \lambda}(t_{f})=\frac{\partial\Phi}{{\bf \partial x}}\big|_{t_{f}}
\end{eqnarray*}
\begin{equation}
\frac{\partial H}{\partial{\bf u_{p}}}=0,\,\frac{\partial H}{\partial{\bf u_{e}}}=0\quad or\quad H_{0}=\underset{{\bf u_{e}}}{\max}\,\,\underset{{\bf u_{p}}}{\min}H=\underset{{\bf u_{p}}}{\min}\,\underset{{\bf u_{e}}}{\max}H\label{eq:ham_controls}
\end{equation}
where $H=L+{\bf \lambda}^{\top}f$ is the system Hamiltonian, $\Phi(t_{f},{\bf x}(t_{f}))={\bf \mu}^{\top}\psi({\bf x}(t_{f}))$
and $\mu\in\mathbb{R}^{p}$ are undetermined multipliers at final
time.
\end{thm}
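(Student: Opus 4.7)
The plan is to adapt the classical Pontryagin minimum principle to the two-player saddle-point setting by the standard Lagrange multiplier / variational argument. I would begin by forming the augmented cost functional
\[
\bar{J}(\mathbf{u_p},\mathbf{u_e},\lambda,\mu)
= \mu^{\top}\psi(\mathbf{x}(t_f))
+ \int_{0}^{t_f}\Bigl[L(\mathbf{x},\mathbf{u_p},\mathbf{u_e}) + \lambda(t)^{\top}\bigl(f(\mathbf{x},\mathbf{u_p},\mathbf{u_e})-\dot{\mathbf{x}}\bigr)\Bigr]dt,
\]
adjoining the dynamics via $\lambda(t)$ and the terminal constraint $\psi(\mathbf{x}(t_f))=0$ via $\mu$. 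After integrating the term $\lambda^{\top}\dot{\mathbf{x}}$ by parts, the integrand can be rewritten in terms of the Hamiltonian $H = L + \lambda^{\top}f$, producing both a running integral in $H$ and boundary terms at $t=0$ and $t=t_f$.

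Next, I would compute the first variation of $\bar{J}$ along the saddle-point trajectory, taking independent variations $\delta\mathbf{x},\delta\mathbf{u_p},\delta\mathbf{u_e},\delta\lambda$. The vanishing of the $\delta\lambda$ term reproduces the state equation; the vanishing of the $\delta\mathbf{x}$ term (after matching the boundary contributions against $\Phi(t_f,\mathbf{x}(t_f))=\mu^{\top}\psi(\mathbf{x}(t_f))$) yields the adjoint ODE $\dot{\lambda}=-\partial H/\partial\mathbf{x}$ with terminal condition $\lambda(t_f)=\partial\Phi/\partial\mathbf{x}\big|_{t_f}$. The initial boundary term vanishes because $\mathbf{x}(0)=\mathbf{x_0}$ is fixed, so there is no free variation in $\delta\mathbf{x}(0)$.

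For the control conditions in \eqref{eq:ham_controls} I would exploit the saddle-point inequality in Definition \ref{def:saddle_point} directly. Freezing $\mathbf{u_e}=\mathbf{u_e^*}$ and perturbing only $\mathbf{u_p}$ (by admissible needle variations, to cover controls on the boundary of $\mathbf{U_p}$), the pursuer-side inequality forces $H(\mathbf{x}^*,\mathbf{u_p^*},\mathbf{u_e^*},\lambda)\le H(\mathbf{x}^*,\mathbf{u_p},\mathbf{u_e^*},\lambda)$ for every admissible $\mathbf{u_p}$; symmetrically, freezing $\mathbf{u_p^*}$ and perturbing $\mathbf{u_e}$ gives the opposite inequality. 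Combined, these produce the pointwise saddle condition $H_0=\min_{\mathbf{u_p}}\max_{\mathbf{u_e}}H=\max_{\mathbf{u_e}}\min_{\mathbf{u_p}}H$, which in the interior reduces to the first-order stationarity $\partial H/\partial\mathbf{u_p}=\partial H/\partial\mathbf{u_e}=0$.

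The main obstacle in the plan is justifying the interchange $\min\max=\max\min$ for $H$. This is precisely Isaacs' condition, and it is not automatic; I would invoke separability of $f$ and $L$ in the two controls (as already remarked after Theorem \ref{thm:guaranteedcapture} of the paper, where the Hamiltonian is noted to be separable in $\mathbf{u_p}$ and $\mathbf{u_e}$). Under that separability, the two one-sided Hamiltonian inequalities derived from the saddle-point property decouple into independent minimization in $\mathbf{u_p}$ and maximization in $\mathbf{u_e}$, yielding the equality in \eqref{eq:ham_controls}. The remaining technical points (existence of the nonzero adjoint, admissibility of needle variations on control-constraint boundaries, regularity of $\psi$) are standard and follow the classical treatment in \cite{bryson1975applied,bacsar1998dynamic}.
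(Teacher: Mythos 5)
The paper does not actually prove Theorem \ref{thm:hamil}: it is quoted as a classical result from \cite{bryson1975applied}, \cite{bacsar1998dynamic}, \cite{rufus}, so there is no internal proof to compare against. Your variational skeleton --- adjoining the dynamics and the terminal constraint, integrating $\lambda^{\top}\dot{{\bf x}}$ by parts, and reading off the state equation, the adjoint equation, and the transversality condition $\lambda(t_{f})=\partial\Phi/\partial{\bf x}|_{t_{f}}$ from the independent vanishing of the $\delta\lambda$, $\delta{\bf x}$ and boundary terms --- is exactly the standard textbook derivation and is adequate for those parts of the statement.

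The gap is in the step that produces (\ref{eq:ham_controls}), and you have located the difficulty in the wrong place. Given separability of $H$ in ${\bf u_{p}}$ and ${\bf u_{e}}$ (which holds here, cf.\ (\ref{eq:hamiltonian})), the pointwise interchange $\min_{{\bf u_{p}}}\max_{{\bf u_{e}}}H=\max_{{\bf u_{e}}}\min_{{\bf u_{p}}}H$ is immediate: the Hamiltonian splits into a sum of a ${\bf u_{p}}$-term and a ${\bf u_{e}}$-term, so the two optimizations decouple and Isaacs' condition is not the obstacle. The real issue is your phrase ``freezing ${\bf u_{e}}={\bf u_{e}^{*}}$ and perturbing only ${\bf u_{p}}$''. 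The saddle point in Definition \ref{def:saddle_point} is over \emph{feedback} strategies. If you freeze the feedback law $\gamma_{e}^{*}$ and perturb ${\bf u_{p}}$, the perturbed state trajectory changes and the evader's realized control changes with it; the resulting one-sided optimal control problem has dynamics $f({\bf x},{\bf u_{p}},\gamma_{e}^{*}({\bf x}))$, and its adjoint equation acquires the extra term $-(\partial\gamma_{e}^{*}/\partial{\bf x})^{\top}(\partial H/\partial{\bf u_{e}})$, which does not vanish in this problem because the optimal $w_{e}$ is bang--bang (on the boundary of $W_{e}$), so $\partial H/\partial{\bf u_{e}}\neq0$ there. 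If instead you freeze the \emph{open-loop representation} ${\bf u_{e}^{*}}(t)$, then the feedback saddle-point inequality $T_{c}(\gamma_{p}^{*},\gamma_{e}^{*})\leq T_{c}(\gamma_{p},\gamma_{e}^{*})$ does not by itself imply that ${\bf u_{p}^{*}}$ is optimal against the frozen signal ${\bf u_{e}^{*}}(\cdot)$ --- playing against a fixed open-loop signal is a different (generally easier) problem than playing against the feedback law, so the one-sided Hamiltonian inequality does not follow directly. Closing this gap is precisely the content of the representation/informational-nonuniqueness results in \cite{bacsar1998dynamic}: under Isaacs' condition the open-loop and feedback values coincide, the open-loop representations form an open-loop saddle point, and the two one-sided Pontryagin principles (which then share a common adjoint because the adjoint ODE and terminal condition are identical) can be applied. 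As written, your argument silently assumes this equivalence.
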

For the system given by (\ref{eq:pursuer_evader_dyn}) the Hamiltonian
is 
\begin{eqnarray}
H & = & 1+\lambda_{p_{x}}v_{p}\cos\theta_{p}+\lambda_{p_{y}}v_{p}\sin\theta_{p}+\lambda_{p_{\theta}}v_{p}w_{p}\label{eq:hamiltonian}\\
 &  & +\lambda_{e_{x}}v_{e}\cos\theta_{e}+\lambda_{e_{y}}v_{e}\sin\theta_{e}+\lambda_{e_{\theta}}v_{e}w_{e}\nonumber 
\end{eqnarray}
where $[\lambda_{p_{x}}\ \lambda_{p_{y}}\ \lambda_{p_{\theta}}]^{\top}$
denotes the adjoint vector corresponding to the pursuer. Also, let
$[\lambda_{e_{x}}\ \lambda_{e_{y}}\ \lambda_{e_{\theta}}]^{\top}$
denote the adjoint vector corresponding to the evader. Define, $\lambda_{p}=\sqrt{\lambda_{p_{x}}^{2}+\lambda_{p_{y}}^{2}}$,
$\lambda_{e}=\sqrt{\lambda_{e_{x}}^{2}+\lambda_{e_{y}}^{2}}$, $\phi_{p}=\tan^{-1}(\lambda_{p_{y}}/\lambda_{p_{x}})$
and $\phi_{e}=\tan^{-1}(\lambda_{e_{y}}/\lambda_{e_{x}})$. Thus we
can write (\ref{eq:hamiltonian}) as (see \cite{bui1994shortest}):
\begin{eqnarray}
H & = & 1+v_{p}\lambda_{p}\cos(\theta_{p}-\phi_{p})+\lambda_{p_{\theta}}v_{p}w_{p}\label{eq:hamiltonianconcise}\\
 &  & +v_{e}\lambda_{e}\cos(\theta_{e}-\phi_{e})+\lambda_{e_{\theta}}v_{e}w_{e}\nonumber 
\end{eqnarray}
By Theorem \ref{thm:guaranteedcapture} the capture time $T_{c}({\bf \gamma_{p}},{\bf \gamma_{e}})=T_{a}\leq\infty$.
Let $\mu=[\mu_{x}\,\,\mu_{y}]^{\top}$ be the undetermined constants
and $\psi({\bf x}(T_{a}))$ is defined by (\ref{eq:end_point_constatint}).
Thus, the complete final time constraint is $\Phi({\bf x}(T_{a}))=\mu^{\top}\psi({\bf x}(T_{a}))$
(see \cite{bryson1975applied}). The adjoint system for pursuer (evader)
is given as 
\begin{eqnarray}
\dot{\lambda}_{i_{x}}=0 &  & \dot{\lambda}_{i_{y}}=0\nonumber \\
\dot{\lambda}_{i_{\theta}} & = & -v_{i}\left[-\lambda_{i_{x}}\sin\theta_{i}+\lambda_{i_{y}}\cos\theta_{i}\right]\label{eq:adjoint_pursuer_evader}\\
 & = & v_{i}\lambda_{i}\sin(\theta_{i}-\phi_{i})\nonumber 
\end{eqnarray}
for $i\in\{p,e\}$.
\begin{lem}
Both the pursuer and evader use input policy ${\bf u}_{i}\in\mathcal{U}_{i}$
such that $v_{i}(t)=v_{i_{m}}\ \forall\ t\in[0,T_{a}]$.
\end{lem}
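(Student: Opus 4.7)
The plan is to apply Pontryagin's minimum principle (Theorem \ref{thm:hamil}) to the separable Hamiltonian (\ref{eq:hamiltonianconcise}), exploiting two facts: since the system is autonomous and the terminal time $T_{a}=T^{*}$ is free, $H\equiv 0$ along the optimal trajectory; and $H$ is affine in $v_{p}$ and $v_{e}$ separately, so the minimax over velocities is bang--bang on $[0,v_{i_{m}}]$.

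First I would group $H = 1 + v_{p}A_{p} + v_{e}A_{e}$, where $A_{p}:=\lambda_{p}\cos(\theta_{p}-\phi_{p})+\lambda_{p_{\theta}}w_{p}$ and $A_{e}:=\lambda_{e}\cos(\theta_{e}-\phi_{e})+\lambda_{e_{\theta}}w_{e}$ absorb the angular-rate choices. The pointwise minimizer of $v_{p}A_{p}$ over $v_{p}\in[0,v_{p_{m}}]$ is $v_{p}^{*}=v_{p_{m}}$ when $A_{p}^{*}<0$ and $v_{p}^{*}=0$ when $A_{p}^{*}>0$; symmetrically, the evader's maximizer over $v_{e}\in[0,v_{e_{m}}]$ always yields $v_{e}^{*}A_{e}^{*}\geq 0$. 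Imposing $H=0$ then gives $v_{p}^{*}A_{p}^{*}=-1-v_{e}^{*}A_{e}^{*}\leq -1<0$, which forces $A_{p}^{*}<0$ and hence $v_{p}^{*}=v_{p_{m}}$ for every $t\in[0,T_{a}]$; moreover $A_{p}^{*}\leq -1/v_{p_{m}}$.

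Substituting this back into $H=0$ yields $v_{e}^{*}A_{e}^{*}=-1-v_{p_{m}}A_{p}^{*}\geq 0$, and the task reduces to excluding the equality case. This is the main obstacle, since equality leaves $v_{e}^{*}$ indeterminate (a singular arc in the evader's velocity). I would resolve it by invoking Theorem \ref{thm:cs-type}: the saddle-point evader trajectory is of type $CS$ and its endpoint ${\bf z}$ is the capture point on the frontier boundary $\partial R_{e}({\bf e_{0}},T_{a})$. The parametrization in (\ref{eq:left_reachable_set}) and its clockwise analogue show that the frontier boundary at time $T_{a}$ is swept out exactly by $CS$-trajectories using $v_{e}(\tau)=v_{e_{m}}$ on $[0,T_{a}]$; any reduction of $v_{e}$ on a set of positive measure would land the endpoint strictly inside $R_{e}({\bf e_{0}},T_{a})$ rather than on $\partial R_{e}({\bf e_{0}},T_{a})$. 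This geometric observation forces $v_{e}^{*}=v_{e_{m}}$ on $[0,T_{a}]$ and completes the proof; an identical frontier-boundary reasoning on the pursuer's active set $\mathcal{A}_{p}({\bf p_{0}},T_{a})$ reconfirms the conclusion for $v_{p}^{*}$.
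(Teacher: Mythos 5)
Your proof is correct, and it takes a partly different route from the paper's. The paper's own argument is purely geometric and very short: by Theorem \ref{thm:cs-type} capture occurs at a point on the (frontier/external) boundary of the relevant reachable sets, and by the parametrization (\ref{eq:left_reachable_set}) such boundary points at time $T_{a}$ are attained only under $v_{i}(t)=v_{i_{m}}$ throughout --- exactly the observation you use to close the evader's singular case. What you add is a Pontryagin layer: writing $H=1+v_{p}A_{p}+v_{e}A_{e}$, invoking $H\equiv 0$ (free terminal time, autonomous dynamics --- note this transversality fact is standard but is not actually listed in the paper's Theorem \ref{thm:hamil}, so you are importing it), and observing that the evader's maximization forces $v_{e}^{*}A_{e}^{*}\geq 0$, which pins $v_{p}^{*}A_{p}^{*}\leq -1<0$ and hence $v_{p}^{*}=v_{p_{m}}$ with no geometry at all. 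This buys a self-contained, inequality-driven derivation for the pursuer and makes visible \emph{why} the evader's speed is the delicate case (a potential singular arc when $v_{e}^{*}A_{e}^{*}=0$), whereas the paper's geometric argument is shorter, treats both players uniformly, and --- as you yourself concede in your last sentence --- already suffices on its own. Your resolution of the evader's singular arc via the radially expanding boundary of $R_{e}({\bf e_{0}},t)$ is sound: a speed deficit on a set of positive measure places the endpoint in $R_{e}({\bf e_{0}},T_{a}-\epsilon)$, strictly interior to $R_{e}({\bf e_{0}},T_{a})$, contradicting capture on $\partial R_{e}({\bf e_{0}},T_{a})$. No gap; just be aware that the $H\equiv 0$ step leans on a condition the paper never formally states.
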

\begin{IEEEproof}
From Theorem \ref{thm:cs-type}, the capture occurs at the boundary
of the left reachable set or the right reachable set. From (\ref{eq:left_reachable_set})
it is seen that the boundary of the left and right reachable set is
characterized by the input policy ${\bf u}_{i}\in\mathcal{U}_{i}$
such that $v_{i}(t)=v_{i_{m}}\ \forall\ t$. 
\end{IEEEproof}
\begin{lem}
\label{lem:arcs_lines_purs}Any optimal path corresponding to the
saddle point strategy for the pursuer (evader) is the concatenation
of arcs of circles of radius $r_{p}$ ($r_{e}$) and line segments,
all parallel to some fixed direction $\phi_{p}$ ($\phi_{e}$). 
\end{lem}
\begin{IEEEproof}
The statement is derived using (\ref{eq:ham_controls}). The Hamiltonian
is affine in $w_{p}(t)$ and $w_{e}(t)$. If $\lambda_{p_{\theta}}(t)=0$
for all $t\in[t_{1},t_{2}]\subseteq[0,T_{a}]$ then from (\ref{eq:adjoint_pursuer_evader})
we must have $\dot{\lambda}_{p_{\theta}}(t)=0=v_{p}(t)\lambda_{p}(t)\sin(\theta_{p}(t)-\phi_{p})$.
Thus $\theta_{p}(t)=\phi_{p}$ or $\theta_{p}(t)=\phi_{p}+\pi$ for
all $t\in[t_{1},t_{2}]\subseteq[0,T_{a}]$ and the path is a line
segment with direction $\phi_{p}$. Thus $w_{p}(t)=0$ for all $t\in[t_{1},t_{2}]\subseteq[0,T_{a}]$.
If $|\lambda_{p_{\theta}}|>0$, this would imply that $w_{p}(t)=\pm w_{p_{m}}$
and the path would be an arc of circle $A_{p}(t)$ or $C_{p}(t)$.
Thus $H$ will be minimized with respect to $w_{p}(t)$ only if $w_{p}(t)=0$
or $w_{p}(t)=\pm w_{p_{m}}$. Similar arguments can be used to prove
the claim for the evader, where instead of minimizing $H$ we need
to maximize $H$ with respect to $w_{e}(t)$.
\end{IEEEproof}
\begin{lem}
\label{lem:same_line}The straight line paths that are followed by
both pursuer and evader are parallel to each other i.e. $\phi_{p}=\phi_{e}$.
\end{lem}
\begin{IEEEproof}
By Theorem \ref{thm:hamil}, $\lambda_{p_{x}}(T_{a})=\frac{\partial\Phi}{\partial x_{p}}=\mu_{x},\,\lambda_{p_{y}}(T_{a})=\frac{\partial\Phi}{\partial y_{p}}=\mu_{y},\,$
and $\phi_{p}(T_{a})=\tan^{-1}(\mu_{y}/\mu_{x})$. Similarly we can
show, $\lambda_{e_{x}}(T_{a})=\frac{\partial\Phi}{\partial x_{e}}=-\mu_{x},\,\lambda_{e_{y}}(T_{a})=\frac{\partial\Phi}{\partial y_{e}}=-\mu_{y},\,\phi_{e}(T_{a})=\tan^{-1}(\mu_{y}/\mu_{x})$.
Further, from (\ref{eq:adjoint_pursuer_evader}), $\lambda_{p_{x}},\lambda_{p_{y}},\lambda_{e_{x}},\lambda_{e_{y}}$
are constants. Thus we can conclude that $\phi_{p}(t)=\phi_{e}(t)=\tan^{-1}(\mu_{y}/\mu_{x})$
for all time $t$.
\end{IEEEproof}
Thus from the Hamiltonian formalism we have concluded that the trajectories
are either minimum turning radius circles or straight lines parallel
to some fixed line.

\noindent \textbf{Proof of Theorem \ref{thm:one_valid_tangent}:}
By Theorem \ref{thm:cs-type}, the strategies of both the pursuer
and the evader are of the type $CS$. Thus the capture takes place
along the straight line path. However, from Proposition \ref{lem:same_line}
it was shown that the straight lines are parallel. Thus if the capture
must occur along the straight lines, the lines must be coincident.
Since, both the pursuer and the evader travel along the circle and
straight line the paths must be the common tangents to circles in
one of the $PE$-pair.

\section{Conclusion}

In this work we have derived a characterization of time optimal pursuit
evasion game between two Dubins vehicles using continuous subsets
of reachable sets. Using this sets we show that the time optimal saddle
point strategies of both the pursuer and evader are circle and a straight
line. Using this characterization, we have derived feedback saddle
point strategies for the game of two cars. The solutions are computed
geometrically with insignificant computational effort. Available (computationally
intensive) algorithms for computing such feedback strategies cannot
be easily implemented in real time. On the other hand our proposed
algorithm can be run on onboard guidance computers typically available
on applications requiring time-optimal pursuit strategies, such as
missiles, fixed-wing aircraft, differential drive robots etc. Though
we have presented $\text{F}({\bf x}(t))$ as a continuous time state
feedback, any implementation of our algorithm will involve discretization
of time and evaluation of $\text{F}({\bf x}(t))$ at discrete intervals.
Due to the simplicity of the computation, the evaluation of $\text{F}({\bf x}(t))$
can be completed for small discretization intervals. A quantitative
study on the effect of discretization on the path optimality, however,
remains a topic of further research. Future work includes deriving
feedback laws for the case when the evader is very close to the pursuer
and for other configurations where capture is possible.

\section{Appendix}

In this section we give proof of Lemma \ref{lem:left_containment}.
In order to do this we define a kinematic point. A kinematic point
can turn instantaneously and move with velocity $v_{m}$. The equations
governing the motion of the kinematic point are:
\begin{eqnarray*}
\dot{x}(t) & = & v_{m}\cos(\theta(t))\\
\dot{y}(t) & = & v_{m}\sin(\theta(t))
\end{eqnarray*}
The movement of the kinematic point is also referred to as simple
motion in literature (see \cite{rufus} for details). The reachable
set of such a kinematic point, denoted by $R^{k}({\bf p_{0}},\bar{t})$
at time $\bar{t}$, is a circle of radius $v_{m}\bar{t}$ located
at ${\bf p_{0}}$. We consider two kinematic points: 
\begin{enumerate}
\item A kinematic point located at the same position as the pursuer (Dubins
vehicle) and having maximum velocity $v_{p_{m}}$. We refer to this
kinematic point as the kinematic pursuer. This kinematic pursuer stays
at point ${\bf p_{0}}$ for $t\in[0,\tilde{t}]$. It starts moving
at $t=\tilde{t}$. Thus at time $t>\tilde{t}$ its reachable set $R_{p}^{k}({\bf p_{0}},t)$
will be a circle of radius $v_{p_{m}}(t-\tilde{t})$ with center at
${\bf p_{0}}$. 
\item A second kinematic point located at the same position as the evader
(Dubins vehicle) and having maximum velocity $v_{e_{m}}$. We refer
to this kinematic point as the kinematic evader. The kinematic evader
starts moving at $t=0$. Thus its reachable set at time $t$, $R_{e}^{k}({\bf e_{0}},t)$,
will be a circle of radius $v_{e_{m}}t$.
\end{enumerate}
The following result is known \cite{cockayne1975plane}:
\begin{lem}
\label{lem:kinematic_evader_contains}The reachable set of any Dubins
vehicle moving with maximum velocity $v_{_{m}}$ is contained inside
the reachable set of kinematic point moving with velocity $v_{_{m}}$.
\end{lem}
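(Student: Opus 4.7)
The plan is to use the elementary fact that the Euclidean distance between two points is bounded above by the arc length of any curve joining them. Take an arbitrary point $\mathbf{z} \in R(\mathbf{p_0}, \bar{t})$ in the reachable set of the Dubins vehicle at time $\bar{t}$, where $R(\mathbf{p_0}, \bar{t})$ denotes the Dubins reachable set starting at $\mathbf{p_0}$ with maximum forward speed $v_m$. By the definition of the reachable set, there exists an admissible input $\mathbf{u} \in \mathcal{U}$ and a corresponding trajectory $\mathbf{z}(t)$ with $\mathbf{z}(0) = \mathbf{p_0}_{|\mathbb{R}^2}$ and $\mathbf{z}(t^\star) = \mathbf{z}$ for some $t^\star \leq \bar{t}$.

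Next I would bound the arc length of this trajectory. Since the Dubins dynamics (\ref{eq:pursuer_evader_dyn}) give $\|\dot{\mathbf{z}}(t)\|_2 = v(t) \leq v_m$ for all $t$, the arc length of the trajectory from time $0$ to $t^\star$ is at most
\[
\int_0^{t^\star} \|\dot{\mathbf{z}}(t)\|_2 \, dt \leq v_m \, t^\star \leq v_m \, \bar{t}.
\]
Applying the standard inequality that the straight-line distance is bounded by arc length, we obtain $\|\mathbf{z} - \mathbf{p_0}_{|\mathbb{R}^2}\|_2 \leq v_m \bar{t}$.

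Finally I would identify this inequality with membership in the kinematic point's reachable set. The reachable set $R^k(\mathbf{p_0}, \bar{t})$ of a kinematic point (with simple motion and maximum speed $v_m$) starting at $\mathbf{p_0}_{|\mathbb{R}^2}$ is precisely the closed disk of radius $v_m \bar{t}$ centered at $\mathbf{p_0}_{|\mathbb{R}^2}$, since the kinematic point can head directly to any point within that disk at full speed. Hence $\mathbf{z} \in R^k(\mathbf{p_0}, \bar{t})$, and since $\mathbf{z}$ was arbitrary, $R(\mathbf{p_0}, \bar{t}) \subseteq R^k(\mathbf{p_0}, \bar{t})$.

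There is essentially no main obstacle here; the statement is a direct consequence of the forward-speed bound combined with the arc-length inequality, and no use of the Dubins turning constraint is required (the inclusion only becomes strict because of it). The only care needed is to keep the bookkeeping between the projected planar trajectory and the full state $\mathbf{p}(t) \in \mathbb{R}^3$ transparent, which is handled by the projection notation $(\cdot)_{|\mathbb{R}^2}$ already in the paper.
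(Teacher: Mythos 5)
Your argument is correct. Note, however, that the paper does not prove this lemma at all: it is stated as a known result with a citation to the reference on plane pursuit, so there is no in-paper proof to compare against. Your proposal supplies a clean, self-contained justification: the forward-speed bound $\|\dot{\mathbf{z}}(t)\|_{2}=v(t)\leq v_{m}$ gives an arc-length bound $v_{m}\bar{t}$, the chord-versus-arc inequality converts this into $\|\mathbf{z}-\mathbf{p_0}_{|\mathbb{R}^{2}}\|_{2}\leq v_{m}\bar{t}$, and the kinematic point's reachable set is exactly that disk. Your closing remark is also the right observation: the turning constraint plays no role in the inclusion itself, only in making it strict, so the same argument shows that the reachable set of \emph{any} vehicle with speed bounded by $v_{m}$ is contained in the kinematic disk. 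The only point worth being slightly careful about is that the paper's definition of the reachable set is the union over all $t\leq\bar{t}$, which your bookkeeping with $t^{\star}\leq\bar{t}$ already respects.
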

\begin{lem}
\label{lem:kinematic_kinematic} For any initial pursuer position
${\bf p_{0}}\in\mathbb{R}^{3}$ and any initial evader position ${\bf e_{0}}\in\mathbb{R}^{3}$
and for all $\tilde{t}<\infty$ there exists $\bar{t}<\infty$ such
that the reachable set of the kinematic evader $R_{e}^{k}({\bf e_{0}},\bar{t})$
is contained inside the reachable set of the kinematic pursuer $R_{p}^{k}({\bf p_{0}},\bar{t})$
starting with a delay of $\tilde{t}$.
\end{lem}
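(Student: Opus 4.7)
The plan is to exploit the fact that the reachable sets of both kinematic points are simply Euclidean disks, and that by Assumption \ref{ass:sup_purs} the pursuer's disk grows at a strictly faster rate $v_{p_{m}} > v_{e_{m}}$. So even though the pursuer starts moving later (at time $\tilde{t}$), eventually its disk will catch up with and engulf the evader's disk regardless of the initial separation $d_{pe}^{0} = \|{\bf z}_{p_{0}} - {\bf z}_{e_{0}}\|_{2}$.

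Concretely, I would proceed as follows. First, note that for $t \geq \tilde{t}$ the reachable set $R_{p}^{k}({\bf p_{0}},t)$ is the closed disk centered at ${\bf z}_{p_{0}}$ of radius $v_{p_{m}}(t-\tilde{t})$, while $R_{e}^{k}({\bf e_{0}},t)$ is the closed disk centered at ${\bf z}_{e_{0}}$ of radius $v_{e_{m}} t$. The standard criterion for one closed disk to contain another is that the distance between the centers plus the radius of the inner disk be at most the radius of the outer disk. Applied here, the containment $R_{e}^{k}({\bf e_{0}},\bar{t}) \subseteq R_{p}^{k}({\bf p_{0}},\bar{t})$ is equivalent to
\begin{equation*}
d_{pe}^{0} + v_{e_{m}} \bar{t} \;\leq\; v_{p_{m}}(\bar{t} - \tilde{t}).
\end{equation*}

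Rearranging using $v_{p_{m}} - v_{e_{m}} > 0$, this condition holds whenever
\begin{equation*}
\bar{t} \;\geq\; \frac{d_{pe}^{0} + v_{p_{m}}\tilde{t}}{v_{p_{m}} - v_{e_{m}}}.
\end{equation*}
Since the right-hand side is finite (as $d_{pe}^{0}, \tilde{t}, v_{p_{m}}, v_{e_{m}}$ are all finite), choosing any $\bar{t}$ satisfying this inequality yields the desired containment and completes the proof.

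There is essentially no obstacle: the entire argument reduces to the elementary disk-in-disk containment criterion combined with the strict speed superiority $v_{p_{m}} > v_{e_{m}}$ from Assumption \ref{ass:sup_purs}. The only subtlety worth highlighting in the writeup is that the delay $\tilde{t}$ merely shifts the effective start time of the pursuer and does not alter the asymptotic growth rate of its disk, so the linear gap $(v_{p_{m}} - v_{e_{m}})\bar{t}$ in radii eventually overwhelms both $d_{pe}^{0}$ and the constant handicap $v_{p_{m}}\tilde{t}$.
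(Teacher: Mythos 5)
Your proof is correct and follows essentially the same route as the paper: both reduce the containment to the condition $d_{pe}^{0}+v_{e_{m}}\bar{t}\leq v_{p_{m}}(\bar{t}-\tilde{t})$ (the paper phrases the disk-in-disk criterion via the farthest point of the evader's disk from the pursuer's center) and then solve for $\bar{t}$ using $v_{p_{m}}>v_{e_{m}}$. No issues.
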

\begin{IEEEproof}
Let the initial distance between the kinematic pursuer and the kinematic
evader be $d_{pe}^{0}$. At time $\tilde{t}$, point $A$ is the farthest
point from the pursuer in evader's reachable set and is located on
a straight line joining the evader and the pursuer as shown in Figure
\ref{fig:farthest_point}. Thus, for the lemma to hold there must
exist a time $t$ s.t. the reachable set of the pursuer has radius
greater than $d_{pe}^{0}+v_{e_{m}}t$ i.e. $v_{p_{m}}(t-\tilde{t})\geq d_{pe}^{0}+v_{e_{m}}t$.
This implies that $(v_{p_{m}}-v_{e_{m}})t\geq d_{pe}^{0}+v_{p_{m}}\tilde{t}$.
Now the right hand side of the equation is always a finite quantity
and since $v_{p_{m}}>v_{e_{m}}$ there exists a time $t=\bar{t}$
such that the inequality holds.
\end{IEEEproof}
\begin{figure*}
\centering{}%
\begin{minipage}[t]{0.45\textwidth}%
\begin{center}
\includegraphics[scale=0.2]{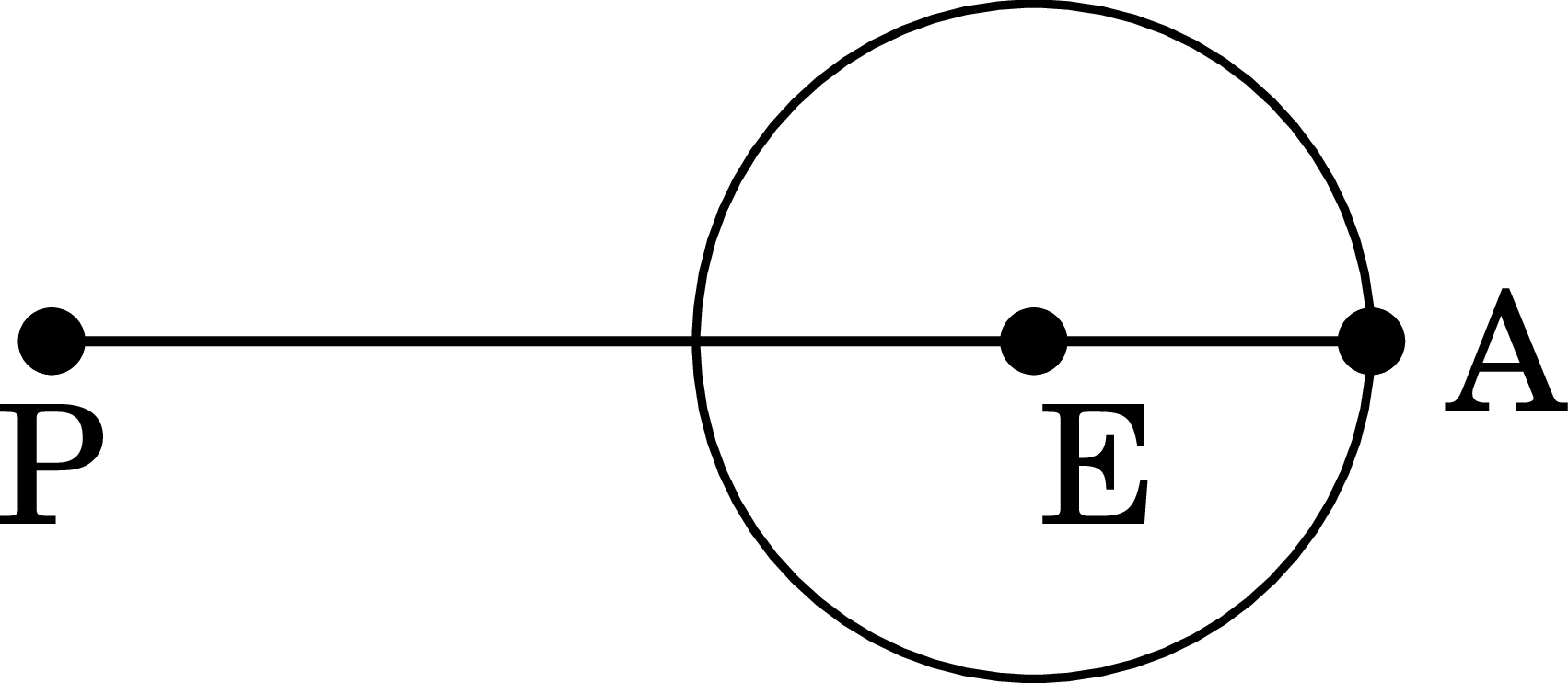}
\par\end{center}
\caption{\label{fig:farthest_point}Farthest point at time $\tilde{t}$}
\end{minipage}\hfill{}%
\begin{minipage}[t]{0.45\textwidth}%
\begin{center}
\includegraphics[scale=0.15]{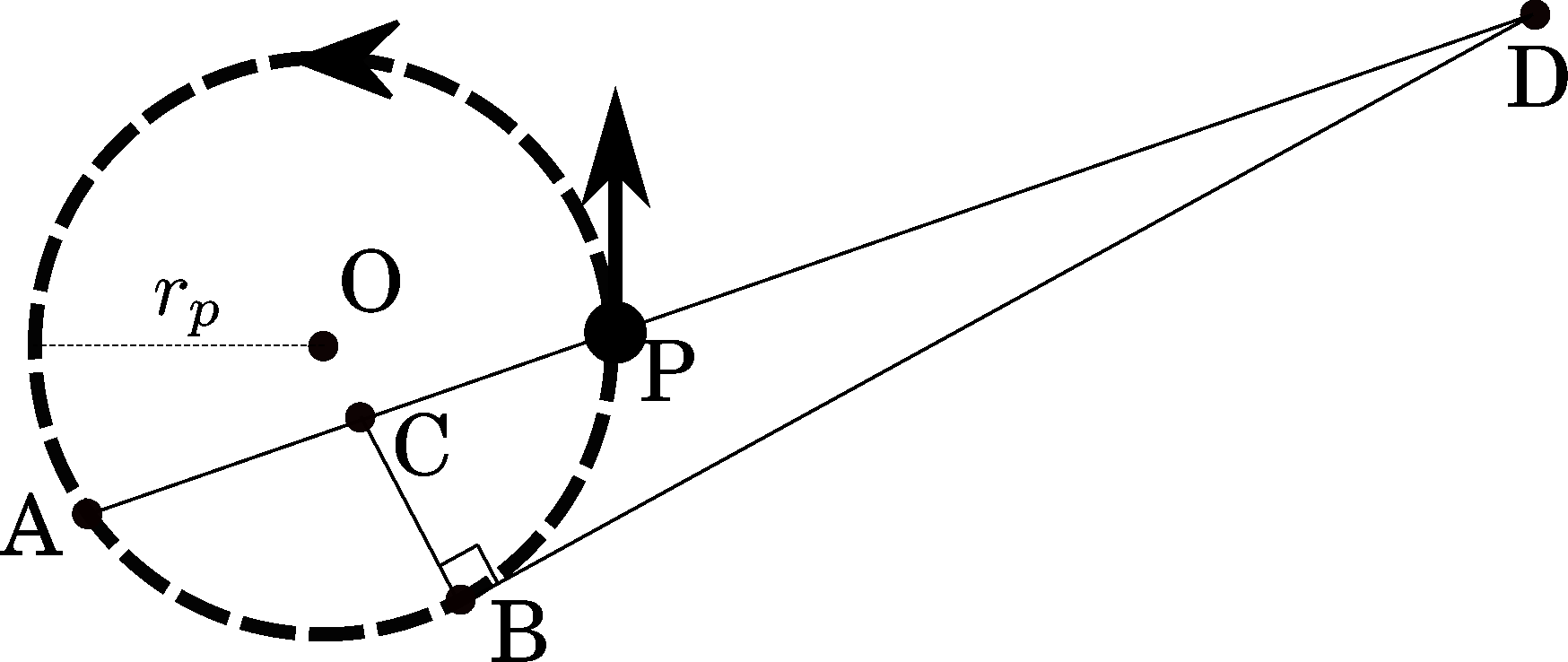}
\par\end{center}
\caption{\label{fig:left-coverage}Point reached by left reachable set after
some delay}
\end{minipage}
\end{figure*}

\begin{lem}
\label{lem:kinematic_pursuer_contained} Let $\bar{R}_{p}^{k}({\bf p_{0}},t):=R_{p}^{k}({\bf p_{0}},t)/A_{p}(0)$
i.e. the set of points contained in the reachable set of kinematic
pursuer except those which belong to the anti-clockwise pursuer circle.
For any initial pursuer position ${\bf p_{0}}\in\mathbb{R}^{3}$ and
any initial evader position ${\bf e_{0}}\in\mathbb{R}^{3}$the set
$\bar{R}_{p}^{k}({\bf p_{0}},t)$ of the kinematic pursuer, starting
to move at time $\tilde{t}=(2\pi+2)r_{p}/v_{p_{m}}$, is contained
inside the left reachable set of the pursuer having the kinematics
of a Dubins vehicle and starting at $t=0$ for all $t\geq\tilde{t}$.
\end{lem}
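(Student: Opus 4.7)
The plan is to show directly that every point of $\bar{R}_{p}^{k}({\bf p_{0}},t)$ can be reached by a Dubins $LS$-trajectory within time $t$, which by definition places it in $R_{p}^{l}({\bf p_{0}},t)$. Fix an arbitrary ${\bf z}\in\bar{R}_{p}^{k}({\bf p_{0}},t)$ and set $d_{0}:=\|{\bf z}-{\bf p_{0}}\|$ and $d_{c}:=\|{\bf z}-O_{A}\|$, where $O_{A}$ is the center of $A_{p}(0)$. Membership of ${\bf z}$ in the kinematic pursuer's reachable set at time $t$ (with start delay $\tilde{t}$) immediately yields the key inequality $d_{0}\le v_{p_{m}}(t-\tilde{t})$.

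Next I would construct the candidate Dubins trajectory. Because ${\bf z}$ is excluded from $A_{p}(0)$, one has $d_{c}>r_{p}$, so two genuine tangent lines from ${\bf z}$ to the circle exist; exactly one of the two contact points $T$ has the property that the counter-clockwise tangent direction at $T$ points from $T$ toward ${\bf z}$ (this is the analogue, for the left circle, of the orientation test in Definition \ref{def:Valid_tangent}). Take the trajectory that starts at ${\bf p_{0}}$, follows $A_{p}(0)$ counter-clockwise to $T$, and then travels straight to ${\bf z}$. By the parametrization (\ref{eq:left_reachable_set}) this is exactly an $LS$-trajectory whose end point lies in $R_{p}^{l}({\bf p_{0}},\cdot)$ at the corresponding time.

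I would then upper-bound the length of this trajectory. The arc length from ${\bf p_{0}}$ to $T$ is at most $2\pi r_{p}$ (one revolution), and the tangent segment has length $\sqrt{d_{c}^{2}-r_{p}^{2}}\le d_{c}$. Applying the triangle inequality in the triangle with vertices ${\bf z}$, ${\bf p_{0}}$, $O_{A}$ gives $d_{c}\le d_{0}+r_{p}$, so the total path length is at most $2\pi r_{p}+d_{0}+r_{p}$, and the Dubins travel time is at most $[(2\pi+1)r_{p}+d_{0}]/v_{p_{m}}$.

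Finally, substituting $d_{0}\le v_{p_{m}}(t-\tilde{t})$ and $\tilde{t}=(2\pi+2)r_{p}/v_{p_{m}}$ yields
\[
\frac{(2\pi+1)r_{p}+d_{0}}{v_{p_{m}}}\le\frac{(2\pi+1)r_{p}}{v_{p_{m}}}+(t-\tilde{t})=t-\frac{r_{p}}{v_{p_{m}}}<t,
\]
so the constructed $LS$-trajectory reaches ${\bf z}$ by time $t$, establishing ${\bf z}\in R_{p}^{l}({\bf p_{0}},t)$. I do not anticipate a serious obstacle in this argument. The only subtleties are (i) arguing carefully that the orientation-matched tangent exists and is unique whenever $d_{c}>r_{p}$, and (ii) a mild slack in the constant: the estimate above actually shows that $\tilde{t}=(2\pi+1)r_{p}/v_{p_{m}}$ would already suffice, so the author's choice $(2\pi+2)r_{p}/v_{p_{m}}$ leaves a comfortable $r_{p}/v_{p_{m}}$ buffer that absorbs the strict inequality $t_{1}<2\pi r_{p}/v_{p_{m}}$ in the definition of $R_{p}^{l}$.
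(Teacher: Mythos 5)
Your proposal is correct and follows essentially the same route as the paper: bound the $LS$ path to an arbitrary point by (arc of at most one revolution) $+$ (tangent segment), control the tangent segment by the triangle inequality, and compare against the delayed kinematic pursuer's straight-line time. The only difference is quantitative — you bound the tangent length by $d_{c}\le d_{0}+r_{p}$ where the paper uses the looser $2r_{p}+\mathrm{len}(PD)$, which is why you correctly observe that the delay $(2\pi+1)r_{p}/v_{p_{m}}$ would already suffice while the paper's constant $(2\pi+2)r_{p}/v_{p_{m}}$ gives $t_{2}\le t_{1}$ with no slack.
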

\begin{IEEEproof}
Let the pursuer (Dubins vehicle) be located at point P with orientation
as shown in Figure \ref{fig:left-coverage}. The dotted circle shown
in Figure \ref{fig:left-coverage} is the left minimum turning radius
circle of the pursuer. Also, let $D$ be a point located outside the
left pursuer circle as shown in Figure \ref{fig:left-coverage}. The
kinematic point is also located at $P$ and stays at point $P$ up
to time $\tilde{t}=(2\pi+2)r_{p}/v_{p_{m}}$. Thus, the minimum time
required for this kinematic pursuer to reach point $D$, say $t_{1}$,
is time it stays at point $P$ ( $t\in[0,\tilde{t}]$) plus time required
to cover distance $PD$. Thus, $t_{1}=\tilde{t}+\hat{t}_{1}$, where
$\hat{t}_{1}=\text{len}(PD)/v_{p_{m}}$ (since it is constrained to
stay at $P$ for time $t\in[0,\tilde{t}]$ ). 

Now the minimum time required by $CS$ type of curve given by (\ref{eq:left_reachable_set})
to reach point $D$ is given by 
\begin{eqnarray*}
t_{2} & = & \{\text{len(Arc \ensuremath{PAB})}+\text{len}(BD)\}/v_{p_{m}}
\end{eqnarray*}
Since $\text{len(Arc \ensuremath{PAB})}\leq2\pi r_{p}$, $\text{len}(BD)\leq\text{len}(AD)$,
and $\text{len}(AD)\leq2r_{p}+\text{len}(PD)$,
\begin{eqnarray*}
t_{2} & \leq & (2\pi r_{p}+\text{len}(AD))/v_{p_{m}}\\
 & \leq & (2\pi r_{p}+2r_{p}+\text{len}(PD))/v_{p_{m}}\\
 & = & \hat{t}_{1}+(2\pi+2)r_{p}/v_{p_{m}}=t_{1}
\end{eqnarray*}
Thus $t_{2}\leq t_{1}$ and the claim follows. 
\end{IEEEproof}
\begin{lem}
\label{lem:pursuer_circles_forbidden}If $d_{pe}^{0}\geq2r_{p}+2\pi r_{p}(v_{e_{m}}/v_{p_{m}})$
the pursuer can intercept the evader before it enters anti-clockwise
pursuer circle by the trajectories of the type $LS$.
\end{lem}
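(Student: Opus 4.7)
The plan is to combine a lower bound on the time required for the evader to reach $\partial A_p(0)$ with an upper bound on the time required for the pursuer to reach any point of $A_p(0)$ via an $LS$ trajectory. The hypothesis $d_{pe}^{0} \geq 2r_p + 2\pi r_p(v_{e_m}/v_{p_m})$ is tailored so that these two bounds are compatible.

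First I will establish the evader's lower bound. Since ${\bf p_0} \in A_p(0)$ and $A_p(0)$ is a disk of radius $r_p$, the triangle inequality shows that every point of $\partial A_p(0)$ lies at distance at least $d_{pe}^{0} - 2r_p$ from ${\bf e_0}$. By Lemma \ref{lem:kinematic_evader_contains} the Dubins evader's reachable set is contained in that of a kinematic evader of speed $v_{e_m}$, whose reachable set at time $t$ is a disk of radius $v_{e_m} t$ about ${\bf e_0}$. Consequently no evader trajectory can touch $\partial A_p(0)$ before the time $t_e^{\min} := (d_{pe}^{0} - 2r_p)/v_{e_m}$, and the hypothesis rearranges into $t_e^{\min} \geq 2\pi r_p / v_{p_m}$. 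Next I will establish the pursuer's upper bound. Since ${\bf p_0} \in A_p(0)$, applying $w_p(t) = +w_{p_m}$ with $v_p(t) = v_{p_m}$ traces out the full circle $A_p(0)$ in exactly time $2\pi r_p / v_{p_m}$, which is a trajectory of type $LS$ with zero straight-line portion (switching time $t_1 \in [0, 2\pi r_p / v_{p_m}]$). Hence every $Q \in \partial A_p(0)$ is reachable by an $LS$ trajectory in time at most $2\pi r_p / v_{p_m} \leq t_e^{\min}$. Moreover, since $v_p(t) \in [0, v_{p_m}]$, the pursuer can modulate its speed along this arc to arrive at any chosen $Q$ at any prescribed time no smaller than its minimum arc-length travel time to $Q$.

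Combining these two facts, for any evader trajectory whose first crossing of $\partial A_p(0)$ occurs at some point $Q$ and time $t_e \geq t_e^{\min}$, the pursuer can select an $LS$ trajectory along $A_p(0)$ timed to arrive at $Q$ at exactly $t_e$, thereby intercepting the evader on $\partial A_p(0)$ just before it enters the interior. The main obstacle will be promoting this pointwise-geometric existence into a genuine feedback strategy that handles every evader trajectory uniformly; my plan is to let the pursuer traverse $A_p(0)$ with adaptive speed, continuously matching its arc-position to the evader's projected first entry point. The feasibility of this matching rests on the fact that, under Assumption \ref{ass:sup_purs} together with the distance hypothesis, the angular rate at which the evader can reposition its projected entry point on $\partial A_p(0)$ is strictly smaller than $w_{p_m}$, so the pursuer never falls behind and achieves interception on the boundary.
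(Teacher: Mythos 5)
Your proof is correct and follows essentially the same route as the paper's: bound the evader's time to reach $\partial A_{p}(0)$ below by $(d_{pe}^{0}-2r_{p})/v_{e_{m}}$ via the triangle inequality, bound the pursuer's time to reach any point of the circle above by $2\pi r_{p}/v_{p_{m}}$, and observe that the distance hypothesis is exactly the condition making the first at least the second. The closing discussion of how to realize the interception as a feedback strategy is elaboration the paper omits entirely, but it does not alter the core time-comparison argument.
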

\begin{IEEEproof}
Let the pursuer be located at a distance $d_{pe}^{0}$ away from the
evader. By the arguments similar to those used in Lemma \ref{lem:evader_circles_forbidden}
we can conclude that the evader requires at least time $t_{e}:=[d_{pe}^{0}-2r_{p}]/v_{e_{m}}$
to reach any point on the pursuer's circle.

Any point on the left pursuer circle can be reached by the pursuer
in time $t\leq t_{p}:=2\pi r_{p}/v_{p_{m}}$. Hence if $t_{p}$ is
less than the minimum time in which the evader can reach any point
on the pursuer circles then the pursuer can intercept the evader before
it can enter the anti-clockwise pursuer circle. Thus, if 
\begin{eqnarray*}
t_{e} & \geq & t_{p}\\{}
[d_{pe}^{0}-2r_{p}]/v_{e_{m}} & \geq & 2\pi r_{p}/v_{p_{m}}\\
d_{pe}^{0} & \geq & 2r_{p}+2\pi r_{p}(v_{e_{m}}/v_{p_{m}})
\end{eqnarray*}
 the claim follows.
\end{IEEEproof}

\subsubsection*{Proof of Lemma \ref{lem:left_containment}}

By Lemma \ref{lem:pursuer_circles_forbidden}, even if the evader's
reachable set can extend into the anti-clockwise pursuer circle this
part of evader's reachable set cannot form a part of $R_{e}^{-}({\bf e_{0}},R_{p}^{l},T_{l})$
if $d_{pe}^{0}\geq2r_{p}+2\pi r_{p}(v_{e_{m}}/v_{p_{m}})$. Similarly,
from Lemma \ref{lem:kinematic_pursuer_contained} we have that the
set $\bar{R}_{p}^{k}({\bf p_{0}},t)$ of kinematic pursuer is contained
contained inside the reachable set of pursuer. Further, at time $\bar{t}$
the kinematic pursuer's reachable set contains the reachable set of
the kinematic evader (by Lemma \ref{lem:kinematic_kinematic}) and
hence the reachable set of the evader (by Lemma \ref{lem:kinematic_evader_contains}).
This implies that that the reachable set of the evader is contained
inside the left reachable set of the pursuer.

\bibliographystyle{IEEEtran}
\bibliography{twocars}

\vspace*{-1cm}

\begin{IEEEbiography}[{\includegraphics[width=1in,height=1.25in,clip,keepaspectratio]{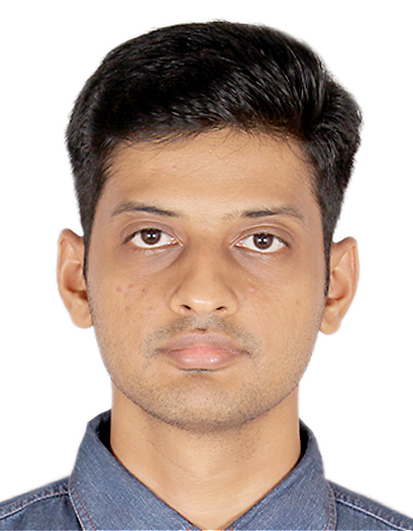}}]
{Aditya Chaudhrai} received the Bachelors degree in Electrical Engineering from University of Mumbai, India, in 2013 and M.Tech. degree from the Indian Institute of Technology Guwahati, India, in 2015. Currently, he is pursuing the Ph. D. degree in Control and Computing at the Indian Institute of Technology Bombay, Mumbai, India. His research interests include time optimal control, pursuit-evasion games, multi-agent systems and adaptive control. \end{IEEEbiography}


\begin{IEEEbiography}[{\includegraphics[width=1in,height=1.25in,clip,keepaspectratio]{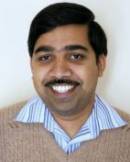}}]
{Debraj Chakraborty} received the B.E.E. degree from Jadavpur University, Kolkata, India, in 2001, the M.Tech. degree from the Indian Institute of Technology Kanpur, India, in 2003, and the Ph. D. degree from the University of Florida, Gainesville, in 2007. He joined the Indian Institute of Technology Bombay, Mumbai, India, in 2007, where he is currently an Associate Professor in the Department of Electrical Engineering. His research interests include optimal control, linear systems, and multi-agent systems. \end{IEEEbiography}
\end{document}